\newcolumntype{L}[1]{>{\raggedright\let\newline\\\arraybackslash\hspace{0pt}}m{#1}}
\newcolumntype{C}[1]{>{\centering\let\newline\\\arraybackslash\hspace{0pt}}m{#1}}
\newcolumntype{R}[1]{>{\raggedleft\let\newline\\\arraybackslash\hspace{0pt}}m{#1}}
\newtheorem{claim}{Claim}
\newtheorem{proposition}{Proposition}
\newtheorem{remark}{Remark}
\newtheorem{intuition}{Insight}
\DeclareMathOperator*{\argmax}{arg\,max}
\DeclareMathOperator*{\sgn}{sgn\,}
\newcommand\T{\rule{0pt}{2.7ex}}
\newcommand\B{\rule[-1.4ex]{5pt}{0pt}}
\renewcommand{\maketag@@@}[1]{\hbox{\m@th\normalsize\normalfont#1}}%
\begin{document}


\title{Optimal Investment Strategies for Competing Camps in a Social Network: A Broad Framework
}
\author{Swapnil Dhamal,
        Walid Ben-Ameur, Tijani Chahed, and Eitan Altman
\thanks{S. Dhamal is with 
INRIA Sophia
Antipolis, France, working at Laboratoire Informatique d'Avignon; he was with Samovar, T\'el\'ecom SudParis, CNRS, Universit\'e Paris-Saclay, when most of this work was done.}
\thanks{
 W. Ben-Ameur and T. Chahed are with 
 Samovar, 
 T\'el\'ecom SudParis, CNRS, Universit\'e Paris-Saclay.}
 \thanks{
E. Altman is with INRIA Sophia Antipolis and Laboratoire Informatique d'Avignon.}
}

\maketitle


\begin{abstract}
We study the problem of optimally investing in nodes of a social network in a competitive setting, wherein two camps aim to drive the average opinion of the population in their own favor. Using a well-established model of opinion dynamics, we formulate the problem as a zero-sum game with its players being the two camps. We derive optimal investment strategies for both camps, and show that a random investment strategy is optimal when the underlying network follows a popular class of weight distributions. We study a broad framework, where we consider various well-motivated settings of the problem, namely, when the influence of a camp on a node is a concave function of its investment on that node, when a camp aims at maximizing competitor's investment or deviation from its desired investment, and when one of the camps has uncertain information about the values of the model parameters. We also study a Stackelberg variant of this game under common coupled constraints on the combined investments by the camps and derive their equilibrium strategies, and hence quantify the first-mover advantage. For a quantitative and illustrative study, we conduct simulations on real-world datasets and provide results and insights.
%
\end{abstract}

\begin{IEEEkeywords}
\noindent
Social networks, 
opinion dynamics,
election,
zero-sum games,
common coupled constraints,
decision under uncertainty,
Stackelberg game.
\end{IEEEkeywords}


\vspace{-2mm}
\section{Introduction}
\label{sec:ODSN_intro}

Opinion dynamics is a natural phenomenon in a system of cognitive agents, and is a well-studied topic across several disciplines. 
It is highly relevant to applications such as elections, viral marketing, propagation of ideas and behaviors, etc. 
%
%
%
In this paper, we consider two competing camps who aim to maximize the adoption of their respective opinions in a social network. In particular, we consider a strict competition setting where the opinion value of one camp is denoted by $+1$ and that of the other camp by $-1$; we refer to these camps as good and bad camps respectively.
Opinion adoption by a population can be quantified in a variety of ways;
here we consider a well-accepted way, namely, 
the average or equivalently, the sum of  opinion values of the nodes in the network \cite{gionis2013opinion,grabisch2017strategic}.
Hence the good camp's objective would be to maximize this sum, while the bad camp would aim to minimize it.

The average or sum of opinion values of the nodes or individuals is of relevance in several applications.
In a fund collection scenario, for instance, the magnitude of the opinion value of an individual can be viewed as the amount of funds and its sign as the camp towards which he or she is willing to contribute. 
%
Another example is that of
 a group of sensors or reporting agents, who are assigned the job of reporting their individual measurements of a particular parameter or event; the resulting measurement would be obtained by averaging the individual values. In this case, two competitors may aim to manipulate the resulting average (one perhaps for a good cause of avoiding panic, and another for elevating it).

While the opinion values can be unbounded in the above examples,
there are scenarios which can be modeled aptly by bounded opinion values. In elections, for instance,  an individual can vote at most once.
Here one could view bounded opinion value of an individual as a proxy for the probability with which the individual would vote for a camp. 
For instance, an opinion value
of $v \in [-1,+1]$ could imply that the probability of voting for the good
camp is $(1+v)/2$
and that of voting for the bad camp is $(1-v)/2$.
Hence the good (or respectively bad) camp would want to maximize (or respectively minimize) the sum of opinion values, since this sum would indicate the expected number of votes in favor of the good camp.
%
%
Product adoption 
is another example where bounded opinion values are well justified; the opinion value of an individual would indicate its probability of purchasing the product from the company that corresponds to  good camp.

Social networks play a prime role in determining the opinions, preferences, behaviors, etc. of the constituent individuals \cite{easley2010networks}. There have been efforts to develop models which could determine how the individuals update their opinions based on the opinions of their connections, and hence study the dynamics of opinions in the network \cite{acemoglu2011opinion}.
With such an underlying model of opinion dynamics, a camp would aim to 
 maximize the adoption of its opinion in a social network, in presence of a competitor.
A camp could act on achieving this objective by strategically investing on selected individuals in a social network who could adopt its opinion; these individuals would in turn influence the opinions of their connections, who would then influence the opinions of their respective connections, and so on. Based on the underlying application, this investment could be in the form of money, free products or discounts, attention, convincing discussions, etc. 
Given that both camps have certain budget constraints, the strategy of the good camp hence comprises of how much to invest on each node in the network, so as to maximize the sum of opinion values of the nodes, while that of the bad camp comprises of how much to invest on each node, so as to minimize this sum.

This setup results in a game, and since we consider a strict competition setting with constraints such as budget (and other constraints as we shall encounter), the setup fits into the framework of constrained zero-sum games \cite{charnes1953constrained}.

\vspace{-2mm}
\subsection{Motivation}
\label{sec:motiv}

There have been studies   
 to identify influential nodes and the amounts to be invested on them,
 specific to analytically tractable models of opinion dynamics (such as DeGroot)
\cite{dubey2006competing,bimpikis2016competitive,grabisch2017strategic}.
Such studies are important to complement the empirical and experimental studies, since they provide more concrete results and rigorous reasonings behind them. However, most of the studies are based in a very preliminary setting and a limited framework. This paper aims to consider a broader framework by motivating and analyzing a variety of settings, which could open interesting future directions for a broader analytical study of opinion dynamics.

Throughout the paper,
we study settings wherein the investment per node by a camp could be unbounded or bounded. Bounded investments  could be viewed as discounts
which cannot exceed 100\%, attention capacity or time constraint
of a voter to receive convincing arguments, company
policy to limit the number of free samples that can be given
to a customer, government policy of limiting the monetary
investment by a camp on a voter, etc. 
As we will see, bounded investments in our model would result in bounded opinion values, which as explained earlier, could be transformed into probability of voting for a
party or adopting a product,
and hence the expected number of votes or sales in the favor of each camp.
We first study in \textbf{Section~\ref{sec:ODSN_probform}}, the cases of unbounded and bounded investment in a fundamental setting where a camp's influence on a node is a linear function of its  investment.




While the linear influence function is consistent with the well-established Friedkin-Johnsen model, the influence of a camp on a node might not increase linearly with the corresponding investment.
In fact, several social and economic settings follow law of {\em diminishing marginal returns}, which says that for higher investments, the marginal returns (influence in this context) are lower for a marginal increase in investment.
An example of this law is when we watch a particular product advertisement on television; as we watch the advertisement more number of times, its marginal influence on us tends to get lower.
A concave influence function naturally captures this law.
We study such an influence function in the settings of both unbounded and bounded investment per node, and relate it to the skewness of investment in optimal strategies as well as user perception of fairness.
We study this  in \textbf{Section \ref{sec:concave}}.









%
%
%

There  are scenarios where a camp may want to maximize the total investment of the competing camp, so as to upset the latter's broad budget allocation, which might lead to reduction in its available budget for future investments  or for other channels such as mass media advertisement. The latter may also be forced to implement unappealing actions such as increasing the product cost or seek further monetary sources in order to compensate for its investments.
Alternatively, the camps may have been instructed a desired investment strategy by a mediator such as government or a central authority, and deviating from this strategy would incur a penalty. 
For instance,
the mediator itself would have its own broader optimization problem (which could be for the benefit for it or the society), whose optimal solution would require the camps to devise their investment strategies in a particular desired way.
The mediator would then instruct the camps to follow the corresponding desired investment strategies, and
%
in case of violation, the mediator could impose a penalty so as to compensate for the suboptimal outcome of its own optimization problem.
For similar reasons as mentioned before, a camp may want to maximize the penalty incurred by the competing camp.
We study these settings which capture the adversarial behavior of a camp towards another camp, in \textbf{Section~\ref{sec:minmax_investment}}.

For all of the aforementioned settings, we show in this paper that
it does not matter whether the camps strategize simultaneously or sequentially. 
%
%
We use Nash equilibrium as
the equilibrium notion to analyze the game in these settings. 
However, there could be settings where a sequential play would be more natural than a simultaneous one, which would result in a Stackelberg game. 
The sequence may be determined by a mediator or central authority which, for example, may be responsible for giving permissions for campaigning or scheduling product advertisements to be presented to a node.
We use subgame perfect Nash equilibrium as
the equilibrium notion for the game in such sequential play settings. Moreover, since we are concerned with a zero-sum game, we express the equilibrium in terms of maxmin or minmax. 
Assuming the good camp plays first (without loss of analytical generality), the bad camp would choose a strategy that minimizes the sum of opinion values as a best response to the good camp's strategy. Knowing this, the good camp would want to maximize this minimum value. 
We motivate two such settings.

It would often be the case that the total attention capacity of a node or the time it could allot for receiving campaigning from both camps combined, is bounded. 
%
This leads us to study the game under common coupled constraints (CCC) that the sum of investments by the  camps on any node is bounded.
These are called common coupled constraints since the constraints of one camp are satisfied if and only if the constraints of the other camp are satisfied, for every strategy profile. 
We study this setting in \textbf{Section \ref{sec:CCC}}.

Another sequential setting is one that results in uncertainty of information, where
the good camp (which plays first) may not have exact information regarding the network parameters. However, the bad camp (which plays second) would have  perfect information regarding these parameters, which are either revealed over time or deduced based on the effect of the good camp's investment. Forecasting the optimal strategy of  bad camp, we derive a robust strategy for the good camp 
which would give it a good payoff even in the worst case.
We study this setting in \textbf{Section \ref{sec:robustness}}.

It can be noted that the common coupled constraints setting captures the first mover advantage, while the uncertainty setting captures the first mover disadvantage.




%
\vspace{-3mm}
\subsection{Related Work}


A principal part of opinion dynamics in a population is how nodes update their opinions over time. 
One of the most well-accepted and well-studied approaches of updating a node's opinion is based on imitation, where each node adopts the opinion of some of its neighbors with a certain probability. One such well-established variant is  DeGroot model \cite{degroot1974reaching} where each node updates its opinion using a weighted convex combination of its neighbors' opinions. 
The model developed by Friedkin and Johnsen \cite{friedkin1990social,friedkin1997social} considers that, in addition to its neighbors' opinions, a node also gives certain weightage to its initial biased opinion.
%
%
%

Acemoglu and Ozdaglar \cite{acemoglu2011opinion} review several other models of opinion dynamics.
Lorenz \cite{lorenz2007continuous} surveys  modeling frameworks concerning continuous opinion dynamics under bounded confidence, wherein nodes pay more attention to beliefs that do not differ too much from their own.
Xia, Wang, and Xuan \cite{xia2013opinion} give a multidisciplinary review of the field of opinion dynamics as a combination of the social processes which are conventionally studied in social sciences, and the analytical and computational tools developed in mathematics, physics and complex system studies.
Das, Gollapudi, and Munagala \cite{das2014modeling} show that the widely studied theoretical models of opinion dynamics
do not explain their experimental observations, and hence propose a new model
as a combination of the DeGroot model and the Voter model \cite{clifford1973model,holley1975ergodic}. 
Parsegov et al. \cite{parsegov2017novel} develop a multidimensional extension of  Friedkin-Johnsen model, describing the evolution of the nodes' opinions on several interdependent topics, and analyze its convergence.

Ghaderi and Srikant \cite{ghaderi2013opinion} consider a setting where a node iteratively updates its opinion as a myopic best response to the opinions of its own and its neighbors, and hence study how the equilibrium and convergence to it depend on the network structure, initial opinions of the nodes, the location of stubborn agents (forceful nodes with unchanging opinions) and the extent of their stubbornness. 
Ben-Ameur, Bianchi, and Jakubowicz
\cite{ben2016robust}
analyze the convergence of some widespread gossip algorithms in the presence of stubborn agents and show that the network is driven to a state which exclusively depends on the stubborn agents. 
%
Jia et al. \cite{jia2015opinion} propose an empirical model 
combining the DeGroot and Friedkin models, and
hence study the evolution of self-appraisal, social power, and interpersonal influences for a group of nodes who discuss and form opinions. 
%
%
Halu et al. \cite{halu2013connect} consider the case of two interacting social networks, and hence study the case of political elections using simulations.

%

%
%

Yildiz, Ozdaglar, and Acemoglu \cite{yildiz2013binary} study the problem of optimal placement of stubborn agents in the discrete binary opinions setting with the objective of maximizing influence, given the location of  competing stubborn agents.
%
Gionis, Terzi, and Tsaparas \cite{gionis2013opinion} study from an algorithmic and experimental perspective, the problem of identifying a set of target nodes whose positive opinions about an information item would maximize the overall positive opinion for the item in the  network. 
Ballester, Calv{\'o}-Armengol, and Zenou
\cite{ballester2006s}
study optimal targeting by analyzing a noncooperative network game with local payoff complementarities.
%
Sobehy et al. \cite{sobehy2016elections} propose strategies to win an election using a Mixed Integer Linear Programming approach.


%

The basic model we study is similar to that considered by Grabisch et al. \cite{grabisch2017strategic}, that is, a zero-sum game with two camps holding distinct binary opinion values, aiming to select a set of nodes to invest on, so as to influence the average opinion that eventually emerges in the network. Their study, however, considers non-negative matrices and focuses on the existence and the characterization of equilibria in a preliminary setting, where the influence and cost functions are linear, camps have network information with certainty, and there is no bound on combined investment by the camps per node.
Dubey, Garg, and De Meyer \cite{dubey2006competing} study existence and uniqueness of Nash equilibrium, while also considering convex cost functions. The study, however, does not consider the possibility of bounded investment on a node, and the implications on the extent of skewness of investment and user perception of fairness owing to the convexity of cost functions.
Bimpikis, Ozdaglar, and Yildiz \cite{bimpikis2016competitive} provide a sharp characterization of the optimal targeted advertizing strategies and highlight their dependence on the underlying social network structure, in a preliminary setting. Their study emphasizes the effect of absoption centrality, which is encountered in our study as well.

The problem of maximizing information diffusion in social networks under popular models such as Independent Cascade and Linear Threshold, has been extensively studied \cite{easley2010networks, kempe2003maximizing, guille2013information}.
%
%
The competitive setting has resulted in several game theoretic studies of this problem  \cite{bharathi2007competitive,goyal2014competitive,etesami2016complexity}.
%
There have been preliminary studies addressing interaction among different informations, where the spread of one information influences the spread of the others \cite{myers2012clash,su2016understanding}.

There have been studies on games with constraints.
%
%
A notable study by Rosen \cite{rosen1965existence} shows existence of equilibrium in a constrained game, and its uniqueness in a strictly concave game.
Altman and Solan \cite{altman2009constrained} study constrained games, where the strategy set available to a player depends on the choice of strategies made by other players.  The authors show that, in constrained zero-sum games, the value of the game need not exist (that is, maxmin and minmax values need not be the same) and contrary to general functions,  maxmin value could be larger than  minmax.


The topic of decision under uncertainty has been of interest to the game theory and optimization communities. An established way of analyzing decision under uncertainty is using robust optimization tools. 
Ben-Tal, El Ghaoui, and Nemirovski
\cite{ben2009robust} present a thorough review of such tools.

\vspace{-3mm}
\subsection{Contributions of the Paper}

A primary goal of this work is to provide a broad framework for optimal investment strategies for competing camps in a social network, and propose and explore several aspects of the problem. 
In particular,
we study several well-motivated variants of a constrained zero-sum game where two competing camps aim to maximize the adoption of their respective opinions, under the well-established Friedkin-Johnsen model of opinion dynamics.
%
Following are
our specific contributions:

\begin{itemize}


\item 
We show that a random investment strategy is optimal when the underlying network follows a particular popular class of weight distributions. (Section \ref{sec:WC})

\item
We investigate when a camp's influence  on a node is a concave function of its investment on that node, for the cases of unbounded and bounded investment per node. We hence  provide implications for the skewness of optimal investment strategies and user perception of fairness. (Section \ref{sec:concave})

\item
We look at the complementary problem where a camp acts as an adversary to the competing camp by aiming to maximize the latter's investment. 
We also look at the  problem where a camp aims to maximize the deviation from the desired investment of the competing camp. (Section \ref{sec:minmax_investment})


\item
We study the Stackelberg variant under common coupled constraints, that the combined investment by the good and bad camps on any given node cannot exceed  a certain limit.
We study the maxmin and minmax values and present some interesting implications. (Section \ref{sec:CCC})

\item 
We analyze a setting where one of the camps would need to make decision under uncertainty. (Section \ref{sec:robustness})


\item
Using simulations, we illustrate our analytically derived results on real-world social networks, and present further insights based on our observations.
(Section \ref{sec:ODSN_sim})

%


\end{itemize}

\section{Model}
\label{sec:ODSN_prob}

Consider a social network with $N$ as its set of nodes and $E$ as its set of weighted, directed edges. 
Two competing camps (good and bad) aim to maximize the adoption of their respective opinions in the social network. We consider a strict competition setting where the opinion value of the good camp is denoted by $+1$ and that of the bad camp by $-1$. 
%
In this section,
we  present the parameters of the considered model of opinion dynamics, and the update rule along with its convergence result.
We first provide an introduction to the well-established Friedkin-Johnsen model, followed by our proposed extension.

\vspace{-2mm}
\subsection{Friedkin-Johnsen Model}
\label{sec:ODSN_model}



As per Friedkin-Johnsen model \cite{friedkin1990social,friedkin1997social},
prior to the process of opinion dynamics, every node holds a bias in opinion which could have been formed owing to various factors such as 
the node's fundamental views, its experiences,
past information from news and other sources, opinion dynamics in the past, etc. 
We denote this opinion bias of a node $i$ by $v_i^0$ and the weightage that the node attributes to it by $w_{ii}^0$.

The network effect is captured by how much a node is influenced by each of its friends or connections, that is, how much weightage is attributed by a node to the opinion of each of its connections. Let $v_j$ be the opinion held by node $j$ and $w_{ij}$ be the weightage attributed by node $i$ to the opinion of node $j$. The influence on node $i$ owing to node $j$ is given by $w_{ij}v_j$, thus the net influence on $i$ owing to all of its connections is $\sum_{j \in N} w_{ij}v_j$ (where $w_{ij} \neq 0$ only if $j$ is a connection of $i$). 
%
It is to be noted that we do not make any assumptions regarding the sign of the edge weights, that is, they could be negative as well (as justified in \cite{altafini2013consensus,proskurnikov2016opinion}). A negative edge weight $w_{ij}$ can be interpreted as some form of distrust that node $i$ holds on node $j$, that is, $i$ would be driven towards adopting an opinion that is opposite to that held or suggested by $j$. 

Since in Friedkin-Johnsen model,
each node updates its opinion using a weighted convex combination of its neighbors' opinions,
 the update rule   is given by

\vspace{-3mm}
\begin{small}
\begin{align*}
\forall i \in N :\;
v_i
\leftarrow
w_{ii}^0 v_i^0
+ \sum_{j \in N} w_{ij}v_j
\end{align*}
\end{small}
\vspace{-3mm}

\noindent
where

\vspace{-3mm}
\begin{small}
\begin{align*}
\forall i \in N :\;
|w_{ii}^0|+\sum_{j \in N} |w_{ij}|\leq 1
\end{align*}
\end{small}
\vspace{-3mm}


\vspace{-2mm}
\subsection{Our Extended Model}
\label{sec:extended_model}

We extend Friedkin-Johnsen model to incorporate the camps' investments and the weightage attributed by nodes to the camps' opinions.
The good and bad camps attempt to directly influence the nodes so that their opinions are driven towards being positive and negative, respectively. This direct influence depends on the investment or effort made by the camps, and also on how much a node weighs the camps' opinions. A given amount of investment may have different influence on different nodes based on how much these nodes weigh the camps' recommendations. We denote the investment made by the good and bad camps on node $i$ by $x_i$ and $y_i$ respectively, and the weightage that node $i$ attributes to them by $w_{ig}$ and $w_{ib}$ respectively. Since the influence of good camp on node $i$ would be an increasing function of both $x_i$ and $w_{ig}$, we assume the influence to be $w_{ig} x_i$ so as to maintain the multilinearity of Friedkin-Johnsen model. Similarly, $w_{ib} y_i$ is the influence of bad camp on node $i$. Also note that since the good and bad camps hold the opinions $+1$ and $-1$ respectively, the net influence owing to the direct recommendations from these camps is $(w_{ig} x_i - w_{ib} y_i)$.

The camps have budget constraints stating that the good camp can invest a total amount of $k_g$ across all the nodes, while the bad camp can invest a total amount of $k_b$.

\begin{table}[t]
\caption{Notation table}
\label{tab:notation}
\vspace{-5mm}
\begin{center}
\begin{tabular}{|c|p{7.5cm}|}
\hline
\T \B
$v_i^0$ & the initial biased opinion of node $i$
\\ \hline
\T \B
$w_{ii}^0$ & weightage given to the initial opinion by node $i$
\\ \hline
\T \B
$w_{ig}$ & weightage given by node $i$ to the good camp's opinion
\\ \hline
\T \B
$w_{ib}$ & weightage given by node $i$ to the bad camp's opinion
\\ \hline
\T \B
$w_{ij}$ & weightage given by node $i$ to the opinion of node $j$
\\ \hline
\T \B
$x_i$ & investment made by  good camp to directly influence node $i$
\\ \hline
\T \B
$y_i$ & investment made by  bad camp to directly influence node $i$
\\ \hline 
\T \B
$k_g$ & budget of the good camp
\\ \hline 
\T \B
$k_b$ & budget of the bad camp
\\ \hline 
\T \B
$v_i$ & the resulting opinion of node $i$
\\ \hline
\end{tabular}
\end{center}
\vspace{-5mm}
\end{table}


Table \ref{tab:notation} presents the required notation.
Consistent with the standard opinion dynamics models, we have the condition on the influence weights on any node $i$ that they sum to at maximum 1 (since a node updates its opinions using a weighted `convex' combination of the influencing factors).

\vspace{-3mm}
\begin{small}
\begin{align*}
\forall i \in N :\;
|w_{ii}^0|+\sum_{j\in N} |w_{ij}|+|w_{ig}|+|w_{ib}| \leq 1
\end{align*}
\end{small}
\vspace{-3mm}

\noindent
A standard assumption for guaranteeing convergence of the dynamics is 

\vspace{-5mm}
\begin{small}
\begin{align*}
\sum_{j\in N} |w_{ij}| < 1
\end{align*}
\end{small}
\vspace{-3mm}

\noindent
This  assumption is actually well suited for our model where we  would generally have non-zero weights attributed to the influence outside of the network, namely, the influence due to bias ($w_{ii}^0$) and campaigning ($w_{ig},w_{ib}$).

Nodes update their opinions in discrete time steps starting with time step 0.
With the aforementioned factors into consideration, each node $i$ updates its opinion at each step, using the following update rule
(an extension of the Friedkin-Johnsen update rule):

\vspace{-3mm}
\begin{small}
\begin{align*}
\forall i \in N :\;
v_i
\leftarrow
w_{ii}^0 v_i^0
+ \sum_{j \in N} w_{ij}v_j
+ w_{ig} x_i
- w_{ib} y_i
\end{align*}
\end{small}
\vspace{-3mm}

\noindent
 Let $v_i^{\langle\tau\rangle}$ be the opinion of node $i$ at time step $\tau$, and $v_i^{\langle 0 \rangle} = v_i^{0}$.
The update rule can hence be written as

\vspace{-3mm}
\begin{small}
\begin{align}
\forall i \in N :\;
v_i^{\langle\tau\rangle}
=
w_{ii}^0 v_i^0
+ \sum_{j \in N} w_{ij}v_j^{\langle\tau-1\rangle}
+ w_{ig} x_i
- w_{ib} y_i
\label{eqn:update_rule}
\end{align}
\end{small}
\vspace{-3mm}

\noindent
For any  node $i$, the static components are $x_i, y_i, v_i^0$ (weighed by $w_{ig}, w_{ib}, w_{ii}^0$), while the dynamic components are $v_j$'s (weighed by $w_{ij}$'s).
The static components remain unchanged while the dynamic ones get updated in every time step.

Let $\mathbf{w}$ be the matrix consisting of the elements $w_{ij}$ for each pair $(i,j)$ (note that $\mathbf{w}$ contains only the network weights and not $w_{ig}, w_{ib}, w_{ii}^0$).
Let $\mathbf{v}$ be the vector consisting of the opinions $v_i$, $\mathbf{v^0}$ and $\mathbf{w^0}$ be the vectors consisting of the elements $v_i^0$ and $w_{ii}^0$ respectively, $\mathbf{x}$ and $\mathbf{y}$ be the vectors consisting of the investments $x_i$ and $y_i$ respectively, $\mathbf{w_g}$ and $\mathbf{w_b}$ be the vectors consisting of the weights $w_{ig}$ and $w_{ib}$ respectively. 
Let the operation $\circ$ denote Hadamard product (elementwise product) of vectors, that is, $(\mathbf{a}\circ\mathbf{b})_{i} = {a}_i {b}_i$. Let Hadamard power be expressed as $(\mathbf{a}^{\circ p})_{i} = a_i^p$.

Assuming $\mathbf{v}^{\langle\tau\rangle}$ to be the vector consisting of the opinions $v_i^{\langle\tau\rangle}$,
the update rule (\ref{eqn:update_rule}) can be written in matrix form as

\vspace{-3mm}
\begin{small}
\begin{align}
\mathbf{v}^{\langle\tau\rangle} = \mathbf{w}\mathbf{v}^{\langle\tau-1\rangle} + \mathbf{w^0}\circ\mathbf{v^0} + \mathbf{w_g}\circ\mathbf{x} - \mathbf{w_b}\circ\mathbf{y}
\label{eqn:update_rule_matrix}
\end{align}
\end{small}
\vspace{-4mm}

%




\begin{proposition}
The dynamics defined by the update rule in (\ref{eqn:update_rule_matrix}) converges to
$
\mathbf{v} = (\mathbf{I}-\mathbf{w})^{-1} (\mathbf{w^0}\circ\mathbf{v^0} + \mathbf{w_g}\circ\mathbf{x} - \mathbf{w_b}\circ\mathbf{y})
$.
\label{prop:convergence}
\end{proposition}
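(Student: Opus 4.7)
The plan is to treat the update rule as a standard affine linear iteration and invoke the Neumann series. Writing $\mathbf{c} = \mathbf{w^0}\circ\mathbf{v^0} + \mathbf{w_g}\circ\mathbf{x} - \mathbf{w_b}\circ\mathbf{y}$, which is independent of $\tau$, the recurrence in \eqref{eqn:update_rule_matrix} reads $\mathbf{v}^{\langle\tau\rangle} = \mathbf{w}\mathbf{v}^{\langle\tau-1\rangle} + \mathbf{c}$. Unrolling gives
\begin{equation*}
\mathbf{v}^{\langle\tau\rangle} = \mathbf{w}^{\tau}\mathbf{v}^{\langle 0\rangle} + \Bigl(\sum_{k=0}^{\tau-1} \mathbf{w}^{k}\Bigr)\mathbf{c},
\end{equation*}
so the statement reduces to showing $\mathbf{w}^{\tau}\to \mathbf{0}$ and $\sum_{k=0}^{\tau-1}\mathbf{w}^{k}\to(\mathbf{I}-\mathbf{w})^{-1}$ as $\tau\to\infty$.

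Both conclusions follow from a spectral radius bound on $\mathbf{w}$. The convergence assumption in the model states that for every row $i$, $\sum_{j\in N}|w_{ij}|<1$. This means $\|\mathbf{w}\|_{\infty}=\max_i \sum_{j\in N}|w_{ij}|<1$, and since the spectral radius is bounded by any induced matrix norm, $\rho(\mathbf{w})<1$. I would then invoke the standard facts that $\rho(\mathbf{w})<1$ implies (i) $\mathbf{I}-\mathbf{w}$ is invertible, (ii) $\mathbf{w}^{\tau}\to \mathbf{0}$, and (iii) $\sum_{k=0}^{\infty}\mathbf{w}^{k}=(\mathbf{I}-\mathbf{w})^{-1}$ (the Neumann series). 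Substituting these into the unrolled expression yields
\begin{equation*}
\lim_{\tau\to\infty}\mathbf{v}^{\langle\tau\rangle} = \mathbf{0}+ (\mathbf{I}-\mathbf{w})^{-1}\mathbf{c} = (\mathbf{I}-\mathbf{w})^{-1}\bigl(\mathbf{w^0}\circ\mathbf{v^0} + \mathbf{w_g}\circ\mathbf{x} - \mathbf{w_b}\circ\mathbf{y}\bigr),
\end{equation*}
which is precisely the claim.

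There is no real obstacle here; the only subtlety worth a sentence is justifying why the row-sum assumption on $\mathbf{w}$ (rather than on the full set of weights including $w_{ii}^0$, $w_{ig}$, $w_{ib}$) is the relevant one for convergence, and this is exactly the point highlighted in the paragraph immediately preceding the proposition. One could alternatively verify the claim by direct substitution, noting that a fixed point $\mathbf{v}^{*}$ of the iteration must satisfy $(\mathbf{I}-\mathbf{w})\mathbf{v}^{*}=\mathbf{c}$, and then argue uniqueness via invertibility of $\mathbf{I}-\mathbf{w}$ plus contractiveness of the map $\mathbf{v}\mapsto \mathbf{w}\mathbf{v}+\mathbf{c}$ in the $\infty$-norm; I would present the Neumann-series argument as the primary one since it simultaneously gives existence, uniqueness, and the closed form.
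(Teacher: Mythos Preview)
Your proposal is correct and follows essentially the same route as the paper: unroll the affine recursion, use the row-sum condition $\sum_{j}|w_{ij}|<1$ to conclude $\rho(\mathbf{w})<1$, and apply the Neumann series to obtain both $\mathbf{w}^{\tau}\to\mathbf{0}$ and $\sum_{k\ge 0}\mathbf{w}^{k}=(\mathbf{I}-\mathbf{w})^{-1}$. Your explicit use of the $\|\cdot\|_{\infty}$ bound on the spectral radius is a slightly crisper justification than the paper's appeal to substochasticity, but the argument is otherwise identical.
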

\begin{proof}
The recursion in (\ref{eqn:update_rule_matrix}) can be simplified as
\begin{small}
\begin{align*}
\mathbf{v}^{\langle\tau\rangle} &= \mathbf{w}\mathbf{v}^{\langle\tau-1\rangle} + \mathbf{w^0}\circ\mathbf{v^0} + \mathbf{w_g}\circ\mathbf{x} - \mathbf{w_b}\circ\mathbf{y}
\\
&= \mathbf{w} \left( \mathbf{w}\mathbf{v}^{\langle\tau-2\rangle} + \mathbf{w^0}\circ\mathbf{v^0} + \mathbf{w_g}\circ\mathbf{x} - \mathbf{w_b}\circ\mathbf{y} \right)
\\ &\;\;\;\;\;+ \mathbf{w^0}\circ\mathbf{v^0} + \mathbf{w_g}\circ\mathbf{x} - \mathbf{w_b}\circ\mathbf{y}
\\
&= \mathbf{w}^2\mathbf{v}^{\langle\tau-2\rangle} + (\mathbf{I}+\mathbf{w})(\mathbf{w^0}\circ\mathbf{v^0} + \mathbf{w_g}\circ\mathbf{x} - \mathbf{w_b}\circ\mathbf{y})
\\
&= \mathbf{w}^3\mathbf{v}^{\langle\tau-3\rangle} + (\mathbf{I}+\mathbf{w}+\mathbf{w}^2)(\mathbf{w^0}\circ\mathbf{v^0} + \mathbf{w_g}\circ\mathbf{x} - \mathbf{w_b}\circ\mathbf{y})
\\
&= \mathbf{w}^\tau \mathbf{v}^{\langle 0 \rangle} + \left( \sum_{\eta=0}^{\tau-1}  \mathbf{w}^\eta \right) (\mathbf{w^0}\circ\mathbf{v^0} + \mathbf{w_g}\circ\mathbf{x} - \mathbf{w_b}\circ\mathbf{y})
\end{align*}
\end{small}
\vspace{-2mm}

Now, the initial opinion: $\mathbf{v}^{\langle 0 \rangle} = \mathbf{v}^{\mathbf{0}}$.
Also, $\mathbf{w}$ is a strictly substochastic matrix, 
since $\forall i \in N:\sum_{j\in N} |w_{ij}| < 1$;
 its spectral radius is hence less than 1.
So when $\tau \rightarrow \infty$, we have $\lim_{\tau \rightarrow \infty}\!\mathbf{w}^\tau\! = \mathbf{0}$.
Since $\mathbf{v}^0$ is a constant,
we have $\lim_{\tau \rightarrow \infty} \mathbf{w}^\tau \mathbf{v}^0 = \mathbf{0}$.
Furthermore,
$\lim_{\tau \rightarrow \infty} \sum_{\eta=0}^{\tau-1} \mathbf{w}^\eta = (\mathbf{I}-\mathbf{w})^{-1}$,
 an established matrix identity \cite{hubbard2015vector}. 
This  implicitly means that $(\mathbf{I}-\mathbf{w})$ is invertible.  
Hence, 

\vspace{-3mm}
\begin{small}
\begin{align*}
\lim_{\tau \rightarrow \infty}
\mathbf{v}^{\langle\tau\rangle} = (\mathbf{I}-\mathbf{w})^{-1} (\mathbf{w^0}\circ\mathbf{v^0} + \mathbf{w_g}\circ\mathbf{x} - \mathbf{w_b}\circ\mathbf{y})
\end{align*}
\end{small}
\vspace{-3mm}

\noindent
which is a constant  vector, that is, the dynamics converges to this steady state of  opinion values. 
%
\end{proof}


\vspace{-2mm}
\section{{The Fundamental Problem}}
\label{sec:ODSN_probform}

We  now present the fundamental problem of competitive opinion dynamics under the Friedkin-Johnsen model.


\vspace{-2mm}
\subsection{Introduction of the Fundamental Problem}
\label{sec:maxmin_sumvalues}

The problem of maximizing opinion adoption can be modeled as an optimization problem. In particular, considering perfect competition, this problem can be modeled as a maxmin problem as we now present. Here our objective is to determine the strategies of the good and bad camps (the values of $x_i$ and $y_i$ such that they satisfy certain constraints), so that 
the good camp aims to maximize the sum of opinion values of the nodes while the bad camp aims to minimize it.
%
Considering linear constraints for setting the problem in the linear programming framework, we represent these constraints by $A\mathbf{x} \leq b$ and $C\mathbf{y} \leq d$, respectively, where $A,C$ are matrices and $b,d$ are vectors, in general. 

Owing to $x_i$ and $y_i$ being investments, we have the natural constraints: $x_i,y_i \geq 0, \forall i \in N$.
We can hence write the maxmin optimization problem in its general form as

\vspace{-3mm}
\begin{small}
\begin{align*}
\max_{\substack{A\mathbf{x} \leq b \\ \mathbf{x} \geq 0}} \;
 \min_{\substack{C\mathbf{y} \leq d \\ \mathbf{y} \geq 0}} \; \sum_{i\in N} v_i
 \end{align*}
 \vspace{-3mm}
 \begin{align*}
 \text{s.t. }
 \forall i\in N: \;
 v_i =
 w_{ii}^0 v_i^0 
+  \sum_{j \in N} w_{ij}v_j 
+ w_{ig} x_i
 - w_{ib} y_i
\end{align*}
\end{small}
\vspace{-2mm}

\noindent
From Proposition \ref{prop:convergence}, we have

 \vspace{-3mm}
 \begin{small}
 \begin{align*}
&
\mathbf{v} = (\mathbf{I}-\mathbf{w})^{-1} (\mathbf{w^0}\circ\mathbf{v^0} + \mathbf{w_g}\circ\mathbf{x} - \mathbf{w_b}\circ\mathbf{y})
\\
\implies &
\mathbf{1}^T \mathbf{v} = \mathbf{1}^T (\mathbf{I}-\mathbf{w})^{-1} (\mathbf{w^0}\circ\mathbf{v^0} + \mathbf{w_g}\circ\mathbf{x} - \mathbf{w_b}\circ\mathbf{y})
\end{align*}
\end{small}
\vspace{-3mm}

\noindent
Let the constant $\mathbf{1}^T (\mathbf{I}-\mathbf{w})^{-1} = \left( ((\mathbf{I}-\mathbf{w})^{-1})^T \mathbf{1}\right)^T$ $= \left( (\mathbf{I}-\mathbf{w}^T)^{-1} \mathbf{1}\right)^T = \mathbf{r}^T$.
So the above is equivalent to

 \vspace{-3mm}
 \begin{small}
 \begin{align}
\sum_{i \in N} v_i = \sum_{i \in N} r_i w_{ii}^0 v_i^0 +
\sum_{i \in N} r_i  w_{ig}x_i  - \sum_{i \in N} r_i w_{ib}y_i
\label{eqn:steadystate} 
\end{align}
\end{small}
\vspace{-3mm}

So our objective function becomes 

\vspace{-3mm}
\begin{small}
\begin{align*}
\max_{\substack{A\mathbf{x} \leq b \\ \mathbf{x} \geq 0}} \;
 \min_{\substack{C\mathbf{y} \leq d \\ \mathbf{y} \geq 0}} \; 
\sum_{i \in N} r_i w_{ii}^0 v_i^0 +
\sum_{i \in N} r_i  w_{ig}x_i  - \sum_{i \in N} r_i w_{ib}y_i
 \end{align*}
 \end{small}
 \vspace{-1mm}

\noindent
This can be achieved by solving two independent optimization problems, namely,

\vspace{-3mm}
\begin{small}
\begin{align*}
\max_{\substack{A\mathbf{x} \leq b \\ \mathbf{x} \geq 0}} \;
 \sum_{i \in N} r_i  w_{ig}x_i 
 \;\;\text{ and }\;\;
 \min_{\substack{C\mathbf{y} \leq d \\ \mathbf{y} \geq 0}} \;
  \sum_{i \in N} r_i w_{ib}y_i
 \end{align*}
 \end{small}
 \vspace{-1mm}
 
 \noindent
 which can be easily solved.

\subsubsection{The Specific Case: Overall Budget Constraints}
\label{sec:fundamental}

For studying the problem in a broader framework, we consider the case, specific to our model that we introduced in Section \ref{sec:ODSN_prob}.
%
This case that considers overall budget constraints $k_g$ and $k_b$ for the good and bad camps respectively, corresponds to
$\sum_{i \in N} x_i \leq k_g$ and $\sum_{i \in N} y_i \leq k_b$.
That is, we have $A=C=\mathbf{1}^T, b=k_g, d=k_b$.
%
%
It is clear that the solution to this specific optimization problem is

\vspace{-3mm}
\begin{small}
\begin{align}
x_i^* = 0 \;\; , \;\; \forall i \notin \argmax_{i \in N} r_i w_{ig}
\nonumber
\\
\sum_{\substack{i \in \argmax_i r_i w_{ig}\\ x_i^* \geq 0}} x_i^* = k_g  \;\; , \;\; \text{if } \max_{i \in N} r_i w_{ig} > 0
\label{eqn:xoptthis}
\end{align}
\vspace{-3mm}
\end{small}

%
%

~
\vspace{-5mm}
\begin{small}
\begin{align}
\text{ and }\;\;\;\;\;\;\;\;\;\;\;\;\;\;\;\;
y_i^* = 0 \;\; , \;\; \forall i \notin \argmax_{i \in N} r_i w_{ib}
\nonumber
\\
\sum_{\substack{i \in \argmax_i r_i w_{ib}\\ y_i^* \geq 0}} y_i^* = k_b
\;\; , \;\; \text{if } \max_{i \in N} r_i w_{ib} > 0
\label{eqn:yoptthis}
\end{align}
\vspace{-3mm}
\end{small}

\noindent
Note that if $\max_{i \in N} r_i w_{ig} \leq 0$, then
$
x_i^* = 0, \forall {i \in N}
$
and 
if $\max_{i \in N} r_i w_{ib} \leq 0$, then
$
y_i^* = 0, \forall {i \in N}
$.

Equations (\ref{eqn:xoptthis}) and (\ref{eqn:yoptthis}) lead to the following result.

\begin{proposition}
In Setting \ref{sec:fundamental}, it is optimal for the good and bad camps to  invest their entire budgets in node $i$ with maximum value of $r_i w_{ig}$ and $r_i w_{ib}$ respectively, subject to the value being positive.
\label{prop:fundamental_unbounded}
\end{proposition}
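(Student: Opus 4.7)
The plan is to start from the closed form of $\sum_{i \in N} v_i$ given in Equation (\ref{eqn:steadystate}), which expresses the objective as the sum of three terms: a constant term $\sum_{i \in N} r_i w_{ii}^0 v_i^0$ independent of the strategies, a term $\sum_{i \in N} r_i w_{ig} x_i$ depending only on $\mathbf{x}$, and a term $-\sum_{i \in N} r_i w_{ib} y_i$ depending only on $\mathbf{y}$. Because the budget constraints are also decoupled ($\sum_i x_i \le k_g$ involves only $\mathbf{x}$ and $\sum_i y_i \le k_b$ involves only $\mathbf{y}$), the maxmin problem splits into two independent linear programs, exactly as already noted in the text just before Equation (\ref{eqn:xoptthis}).

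Next I would analyze each LP in turn. The good camp's problem is to maximize $\sum_{i \in N} r_i w_{ig} x_i$ over the simplex-like region $\{\mathbf{x} \ge 0,\ \sum_i x_i \le k_g\}$, which is a linear objective over a compact polytope whose vertices are the origin and the points $k_g \mathbf{e}_i$ for each $i \in N$. The maximum of a linear function over such a polytope is attained at a vertex, so the optimum coincides with one of these vertices. Comparing objective values $0$ and $k_g \, r_i w_{ig}$, the maximum equals $k_g \max_{i \in N} r_i w_{ig}$ if this quantity is positive, attained by allocating the entire budget $k_g$ to any node achieving the maximum of $r_i w_{ig}$; otherwise the optimum is $0$, attained by investing nothing. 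The same argument applied to the bad camp's problem (after flipping the sign so that minimization becomes maximization of $\sum_i r_i w_{ib} y_i$) yields the analogous conclusion.

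Finally, I would note that this argument also covers the case where the argmax is not unique: any non-negative distribution of the budget $k_g$ across the set $\argmax_{i} r_i w_{ig}$ is optimal, which justifies the form of Equations (\ref{eqn:xoptthis}) and (\ref{eqn:yoptthis}). There is no real obstacle: the result is a direct consequence of (i) the linearity and separability provided by Equation (\ref{eqn:steadystate}) and (ii) the elementary fact that a linear functional on a budget simplex is maximized by concentrating all mass on a coordinate with the largest coefficient (or placing none at all if every coefficient is non-positive). The proof is essentially a one-line citation of these two facts once the decoupling has been made explicit.
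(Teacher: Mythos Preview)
Your proposal is correct and follows essentially the same approach as the paper: the paper derives Equation~(\ref{eqn:steadystate}), observes that the maxmin problem decouples into two independent linear programs ``which can be easily solved,'' writes down the optimal solutions in (\ref{eqn:xoptthis})--(\ref{eqn:yoptthis}), and states that these lead directly to the proposition. Your vertex-of-the-simplex argument simply makes explicit what the paper leaves as an obvious step.
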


\begin{intuition}
Parameter $r_i$ could be interpreted as the influencing power of node $i$ on the network, while $w_{ig}$ and $w_{ib}$ are respectively the influencing powers of the good and bad camps on node $i$. So it is clear why these parameters factor into the result. Furthermore, the strategies of the camps are mutually independent, which arises from the sum of steady state values of nodes as derived in (\ref{eqn:steadystate}). 
The multilinearity of the model and unconstrained investment on nodes allow  the camps to exhaust their budgets by concentrating their entire investments on a node possessing the highest value of $r_i w_{ig}$ or $r_i w_{ib}$ respectively.
Also, the camps' strategies are independent of the initial opinions, since they aim to optimize the sum of  opinion values without considering their relative values.
\end{intuition}


Actually,  $r_i$ can be viewed as a variant of {\em Katz centrality} \cite{katz1953new} in that, Katz centrality of node $i$ measures its relative influence in a social network (say having adjacency matrix $\mathcal{A}$) with all edges having the same weight (say $\alpha$), while $r_i$ measures its influence in a general weighted social network.
Katz centrality of node $i$ is defined as the $i^{\text{th}}$ element of vector
$
\left( \left( \mathbf{I}-\alpha \mathcal{A}^T\right)^{-1} \! - \mathbf{I} \right) \mathbf{1}
=
 \left( \mathbf{I}-\alpha \mathcal{A}^T\right)^{-1} \mathbf{1}  - \mathbf{1}
$,
for $0\!<\!\alpha \!<\! \frac{1}{|\rho|}$ where $\rho$ is the largest eigenvalue of $\mathcal{A}$.
In our case where 
$
\mathbf{r}= \left( \mathbf{I}- \mathbf{w}^T\right)^{-1} \mathbf{1}
$,
 $\mathcal{A}$ is replaced by the weighted adjacency matrix $\mathbf{w}$, for which $|\rho|<1$ (since $\mathbf{w}$ is strictly substochastic), and we have $\alpha=1$. The subtraction of  vector $\mathbf{1}$ is common for all nodes, so its relative effect can be ignored.
$r_i$ can also be viewed as a variant of {\em absorption centrality} of node $i$ \cite{bimpikis2016competitive}, which captures the expected number of visits to node $i$ in a random walk starting at a node other than $i$ uniformly at random, with  transition probability matrix $\mathbf{w}$  (assuming all elements of $\mathbf{w}$ to be non-negative).


%




Furthermore, recall that 

\vspace{-3mm}
\begin{small}
\begin{align*}
\mathbf{r}^T = \mathbf{1}^T (\mathbf{I}-\mathbf{w})^{-1} 
= \mathbf{1}^T \Big( \mathbf{I} + \sum_{\eta=1}^\infty \mathbf{w}^\eta \Big)
\end{align*}
\end{small}
\vspace{-3mm}

\noindent
So if we have $w_{ij} \geq 0$ for all pairs of nodes $(i,j)$, we will have that all elements of vector $\mathbf{r}$ are at least 1. That is, $w_{ij} \geq 0, \forall (i,j) \implies r_i \geq 1, \forall {i \in N}$.

\subsubsection{The Case of Bounded Investment Per Node}
\label{sec:fundamental_bounded}


This setting, as motivated earlier,
includes an additional bound on the investment per node by a camp.
We assume this bound to be  1 unit without loss of generality, that is,
 $x_i,y_i \leq 1, \forall {i \in N}$.
With respect to the generic constraints $A\mathbf{x} \leq b$ and $C\mathbf{y} \leq d$, this case corresponds to
%
${A}=C=$ \begin{scriptsize}$\left( \begin{array}{c} \mathbf{1}^T \\ \mathbf{I} \\ \end{array} \right)$\end{scriptsize},
$b=$ \begin{scriptsize}$\left( \begin{array}{c} k_g \\ \mathbf{1} \\ \end{array} \right)$\end{scriptsize},
$d=$ \begin{scriptsize}$\left( \begin{array}{c} k_b \\ \mathbf{1} \\ \end{array} \right)$\end{scriptsize}.

From Equation (\ref{eqn:steadystate}), an optimal $\mathbf{x}$ can be obtained as follows.
Let $\mathbb{I}_{r_i w_{ig} > 0} = 1$ if $r_i w_{ig} > 0$, and 0 otherwise.
Let $\omega_1, \omega_2, \ldots, \omega_n$ be the ordering of nodes in decreasing values of $r_i w_{ig}$ with any tie-breaking rule. 
So (\ref{eqn:steadystate}) is maximized with respect to $\mathbf{x}$ when

\vspace{-3mm}
\begin{small}
\begin{align*}
x_i = 1 \cdot \mathbb{I}_{r_i w_{ig} > 0}, &\text{ for } i = \omega_1 , \ldots, \omega_{\lfloor k_g \rfloor}
\\
x_i = (k_g - \lfloor k_g \rfloor) \cdot \mathbb{I}_{r_i w_{ig} > 0}, &\text{ for } i = \omega_{\lfloor k_g \rfloor + 1}
\\
x_i = 0, &\text{ for } i = \omega_{\lfloor k_g \rfloor + 2}, \ldots, \omega_{n}
\end{align*}
\end{small}
%
An optimal $\mathbf{y}$ is analogous, hence the following result.

\begin{proposition}
In Setting \ref{sec:fundamental_bounded}, it is optimal for the good camp to invest in nodes one at a time, subject to a maximum investment of 1 unit per node,
in decreasing order of values of $r_i w_{ig}$
until either the budget $k_g$ is exhausted or we reach a node with a non-positive value of $r_i w_{ig}$.
The optimal strategy of the bad camp is analogous.
\label{prop:fundamental_bounded}
\end{proposition}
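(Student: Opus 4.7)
The plan is to exploit the additive decomposition already established in Equation (\ref{eqn:steadystate}). Since $\sum_{i \in N} r_i w_{ii}^0 v_i^0$ is a constant independent of the camps' strategies, and the $\mathbf{x}$-dependent and $\mathbf{y}$-dependent terms appear with opposite signs and no coupling, the maxmin problem over the joint feasible set (itself a Cartesian product, because neither camp's constraints involve the other's variables) separates into two independent linear programs. Concretely, the good camp solves $\max \sum_{i \in N} (r_i w_{ig}) x_i$ subject to $\sum_{i \in N} x_i \le k_g$ and $0 \le x_i \le 1$, while the bad camp solves the analogous program with coefficients $r_i w_{ib}$ and budget $k_b$. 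I would state this separation explicitly as the first step, noting that it has already been used in Setting~\ref{sec:fundamental}; only the box constraints $x_i,y_i \le 1$ differ from the previous setting.

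Next I would solve the good camp's LP. Writing $c_i := r_i w_{ig}$, this is a bounded continuous knapsack, and I claim its solution is the greedy fractional rule. I would verify this by a standard exchange argument: suppose, for contradiction, that $\mathbf{x}^*$ is optimal but deviates from the claimed greedy assignment. Then either (i) some $x_j^* > 0$ with $c_j \le 0$, in which case setting $x_j^* := 0$ does not decrease the objective and strictly relaxes the budget; or (ii) there exist indices $j,k$ with $c_j > c_k > 0$, $x_j^* < 1$, and $x_k^* > 0$; transferring $\epsilon = \min\{1 - x_j^*,\, x_k^*\} > 0$ from $k$ to $j$ preserves feasibility and strictly increases the objective by $\epsilon(c_j - c_k) > 0$, contradicting optimality. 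Hence any optimal $\mathbf{x}^*$ must allocate full unit mass to nodes in decreasing order of $c_i$ among those with $c_i > 0$ until either the budget is exhausted (with a possibly fractional last node absorbing the residual $k_g - \lfloor k_g \rfloor$) or all positive-coefficient nodes have been saturated, which is exactly the rule stated in the proposition.

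One subtlety worth flagging is that the good camp's LP is insensitive to how mass is split among ties (nodes with equal $c_i$); any feasible tie-breaking yields the same objective value, which is why the proposition statement fixes an arbitrary ordering $\omega_1,\ldots,\omega_n$. Once the good camp's rule is established, the bad camp's LP is the same optimization with $c_i$ replaced by $r_i w_{ib}$ and $k_g$ replaced by $k_b$, so its solution follows immediately by the identical argument. I do not anticipate a real obstacle: the main conceptual step is simply recognizing that (\ref{eqn:steadystate}) reduces the whole constrained zero-sum game to a pair of decoupled fractional-knapsack LPs, after which the greedy characterization is classical and the exchange argument above gives a self-contained justification.
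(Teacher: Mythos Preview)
Your proposal is correct and follows essentially the same route as the paper: use the decoupling in Equation~(\ref{eqn:steadystate}) to reduce the maxmin problem to two independent bounded-knapsack LPs, then read off the greedy fractional solution. The paper's own treatment is actually terser than yours---it simply asserts the greedy allocation $x_{\omega_j}=1\cdot\mathbb{I}_{r_{\omega_j}w_{\omega_j g}>0}$ for $j\le\lfloor k_g\rfloor$ (with a fractional remainder on the next node) as the evident maximizer of $\sum_i r_i w_{ig}x_i$ over the box-plus-budget constraints, without writing out the exchange argument you supply.
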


Also, from Proposition \ref{prop:convergence},
if a camp's investment per node is bounded by 1 unit, 
the opinion value of every node would be bounded between $-1$ and $+1$.
As stated earlier, such bounded opinion value is relevant to elections and product adoption scenarios, where the bounded opinion value of a node could be translated into the probability of the node voting for a camp or adopting a particular product.

%


\subsection{Maxmin versus Minmax Values}
\label{sec:maxmin_basic}

With no bounds on investment per node, it is clear that the maxmin and minmax values are the same, since the strategies of the camps are mutually independent, that is,
\begin{align}
\max_{\mathbf{x}\geq \mathbf{0} } \; \min_{\mathbf{y}\geq \mathbf{0} } \; \sum_{i \in N} v_i  = \min_{\mathbf{y}\geq \mathbf{0} } \; \max_{\mathbf{x}\geq \mathbf{0} } \; \sum_{i \in N} v_i
\label{eqn:maxmin_unC}
\end{align}
The equality would hold even with mutually independent bounds on the camps' investment on a node, that is, 
\begin{align}
\max_{\mathbf{0} \leq \mathbf{x}  \leq \mathbf{1}} \; \min_{\mathbf{0} \leq \mathbf{y} \leq \mathbf{1}} \; \sum_{i \in N} v_i  = \min_{\mathbf{0} \leq \mathbf{y} \leq \mathbf{1}} \; \max_{\mathbf{0} \leq \mathbf{x}  \leq \mathbf{1}} \; \sum_{i \in N} v_i  
\label{eqn:maxmin_C}
\end{align}
It is to be noted that we cannot compare the values  in (\ref{eqn:maxmin_unC}) and (\ref{eqn:maxmin_C}), in general. For instance,  if all $i$'s have equal values of $r_i w_{ib}$ and only one $i$ has good value of $r_i w_{ig}$, then for \mbox{$k_g > 1$}, the value in (\ref{eqn:maxmin_unC}) would be greater than that in (\ref{eqn:maxmin_C}).
This can be seen using Equation (\ref{eqn:steadystate}); 
the value of $\sum_{i \in N} r_i(w_{ii}^0 v_i^0 - w_{ib}y_i)$ would stay the same while the value of $\sum_{i \in N} r_i w_{ig} x_i$ would be higher in (\ref{eqn:maxmin_unC}) than in (\ref{eqn:maxmin_C}).
On the other hand, if all $i$'s have equal values of $r_i w_{ig}$ and only one $i$ has good value of $r_i w_{ib}$, then for $k_b > 1$, the value in (\ref{eqn:maxmin_C}) would be greater than that in (\ref{eqn:maxmin_unC}).

\vspace{-2mm}
\subsection{Result for a Popular Class of Weight Distributions}
\label{sec:WC}

We now present a result concerning a class of distribution of edge weights in a network, which includes the popular weighted cascade (WC) model.

\begin{proposition}
Let $N_i = \{j:w_{ij} \neq 0\}$, $d_i=|N_i|$, and $j \in N_i \Longleftrightarrow i \in N_j$.
If $\forall {i \in N}, w_{ig} = w_{ib} = w_{ii}^0 = \frac{1}{\alpha + d_i}= w_{ij} , \forall j\in N_i$, where $\alpha>0$, then $r_i w_{ig} = r_i w_{ib} = \frac{1}{\alpha}, \forall {i \in N}$.
\label{prop:WC}
\end{proposition}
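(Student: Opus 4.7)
The plan is to use the linear-system characterization of $\mathbf{r}$ implied by its definition, and verify that a natural closed-form candidate satisfies it. Recall that $\mathbf{r}^T = \mathbf{1}^T(\mathbf{I}-\mathbf{w})^{-1}$ is equivalent to $\mathbf{r}^T(\mathbf{I}-\mathbf{w}) = \mathbf{1}^T$, which in coordinate form reads
\begin{small}
\begin{align*}
\forall i\in N:\; r_i - \sum_{j\in N} r_j w_{ji} = 1.
\end{align*}
\end{small}
Under the hypothesis, $w_{ji} = \tfrac{1}{\alpha + d_j}$ exactly when $i \in N_j$ (equivalently, by symmetry, when $j \in N_i$), and is zero otherwise. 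The key observation is that the weight on the edge from $j$ to $i$ depends on $j$'s own degree $d_j$, not $i$'s, so in the sum $\sum_{j\in N_i} r_j w_{ji}$ each term carries the factor $\tfrac{1}{\alpha + d_j}$ attached to node $j$.

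Guided by the statement to be proved, namely $r_i w_{ig} = \tfrac{1}{\alpha}$ with $w_{ig} = \tfrac{1}{\alpha + d_i}$, the natural candidate is $r_i = \tfrac{\alpha + d_i}{\alpha}$. I would substitute this into the identity above and verify it directly:
\begin{small}
\begin{align*}
r_i - \sum_{j\in N_i} r_j w_{ji}
&= \frac{\alpha + d_i}{\alpha} - \sum_{j\in N_i} \frac{\alpha + d_j}{\alpha}\cdot\frac{1}{\alpha + d_j} \\
&= \frac{\alpha + d_i}{\alpha} - \frac{d_i}{\alpha} = 1,
\end{align*}
\end{small}
where the cancellation of $(\alpha + d_j)$ in each summand is the crucial step made possible by the WC-style weight assignment. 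Since $(\mathbf{I}-\mathbf{w})$ is invertible (as established in the proof of Proposition~\ref{prop:convergence}), the solution $\mathbf{r}$ is unique, and hence $r_i = \tfrac{\alpha + d_i}{\alpha}$ for every $i$. Plugging in, $r_i w_{ig} = \tfrac{\alpha + d_i}{\alpha}\cdot\tfrac{1}{\alpha + d_i} = \tfrac{1}{\alpha}$, and identically $r_i w_{ib} = \tfrac{1}{\alpha}$, proving the claim.

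The only subtle point, and hence the main thing to be careful about, is the index bookkeeping: the defining equation for $r_i$ uses $w_{ji}$ (coming from the $i$-th column of $\mathbf{w}$), so the weight has degree $d_j$ in its denominator rather than $d_i$. It is precisely this pairing of $r_j = \tfrac{\alpha+d_j}{\alpha}$ with $w_{ji} = \tfrac{1}{\alpha+d_j}$ that produces the clean telescoping of $(\alpha+d_j)$ and yields the constant $\tfrac{1}{\alpha}$ in the end. The symmetry assumption $j\in N_i \Leftrightarrow i\in N_j$ is what lets us rewrite $\{j : i \in N_j\}$ as $N_i$ so the sum has exactly $d_i$ terms, each equal to $\tfrac{1}{\alpha}$.
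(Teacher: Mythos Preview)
Your proof is correct and follows essentially the same approach as the paper: both verify a closed-form candidate for $r_i$ against the defining linear system $(\mathbf{I}-\mathbf{w}^T)\mathbf{r}=\mathbf{1}$ and appeal to invertibility for uniqueness. The only cosmetic difference is that the paper parametrizes the guess as $r_i=\gamma(\alpha+d_i)$ and solves for $\gamma=\tfrac{1}{\alpha}$, whereas you posit $r_i=\tfrac{\alpha+d_i}{\alpha}$ directly.
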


%
%
%
%
%
 
 \begin{proof}
 We know that 
 
 \vspace{-3mm}
 \begin{small}
 \begin{align*}
 &\big( \mathbf{I}-\mathbf{w}^T \big) \mathbf{r} =  \mathbf{1}
 \\
 \Longleftrightarrow&\;
 \mathbf{r}  = \mathbf{1} + \mathbf{w}^T \mathbf{r}
 \\
 \Longleftrightarrow&\;
 \forall {i \in N}:\;
 r_i = 1+ \sum_{j\in N_i} w_{ji} r_j
 = 1 + \sum_{j\in N_i} \left( \frac{1}{\alpha + d_j} \right) r_j
 \end{align*}
 \end{small}
 \vspace{-2mm}
 
 Let us assume $r_i = \gamma (\alpha+d_i)$, where $\gamma$ is some constant. If this satisfies the above equation, the uniqueness of $r_i$ ensures that it is the only solution. Hence we have
 
 \vspace{-3mm}
 \begin{small}
 \begin{align*}
 &\forall {i \in N}:\;
 \gamma(\alpha+d_i)= 1 + \sum_{j\in N_i} \gamma
 = 1 + \gamma d_i
 \\[-.5em]
 \Longleftrightarrow&\;
 \gamma = \frac{1}{\alpha}
 \end{align*}
 \vspace{-3mm}
 \end{small}

 
 \vspace{-3mm}
 \begin{small}
 \begin{align*}
 \therefore
 \forall {i \in N}:\;
 r_i w_{ig} = r_i w_{ib} = \frac{\alpha+d_i}{\alpha}\cdot\frac{1}{\alpha+d_i}
 = \frac{1}{\alpha}
 \end{align*}
 \end{small}
 \vspace{-3mm}
 \end{proof}
 
 %
The above result implies that models which assign weights for all $i$ such that $w_{ig} = w_{ib} = w_{ii}^0 = \frac{1}{\alpha + d_i}= w_{ij} , \forall j\in N_i$, are suitable for the use of a random strategy,
since the decision parameter for either camp ($r_i w_{ig},r_i w_{ib}$) holds the same value for all nodes. That is, in these models, a random strategy that exhausts the entire budget is optimal. This class of models includes the popular weighted cascade  model, which would assign the weights  with $\alpha = 3$.

\vspace{-2mm}
\section{Effect of Concave Influence Function}
\label{sec:concave}

The linear influence function (\ref{eqn:update_rule})  without any  bound on investment per node, leads to an optimal strategy that concentrates the investment on a single node (Proposition~\ref{prop:fundamental_unbounded}). 
As motivated in Section \ref{sec:motiv},
several social and economic settings follow law of {diminishing marginal returns}, which says that for higher investments, the marginal returns (influence in our context) are lower for a marginal increase in investment.
%
%
%
A concave influence function would account for such diminishing marginal influence of a camp with increasing investment on a node which, as we shall see, would advise against concentrated investment on a single node. 
For the purpose of our analysis so as to arrive at precise closed-form expressions and specific insights, we consider a particular form of concave functions: $x_i^{1/t}$  when the investment is $x_i$. It is to be noted, however, that it can be extended to other concave functions since we use a common framework of convex optimization, however the analysis could turn out to be more complicated or intractable.


\vspace{-2mm}
\subsection{The Case of Unbounded Investment per Node}
\label{sec:concave_unbounded}

~
\vspace{-3mm}
\begin{small}
\begin{align*}
\max_{\substack{\sum_i x_i \leq k_g \\ x_i \geq 0}} \;
 \min_{\substack{\sum_i y_i \leq k_b \\ y_i \geq 0}} \; \sum_{{i \in N}} v_i
 \\ \text{s.t. }
 \forall {i \in N}: \;
 v_i 
 = 
 w_{ii}^0 v_i^0
 + \sum_{j \in N} w_{ij}v_j 
 + w_{ig} x_i^{1/t}
 - w_{ib} y_i^{1/t}
 & 
\end{align*}
\end{small}
\vspace{-3mm}

\begin{proposition}
In Setting \ref{sec:concave_unbounded}, for $t>1$, it is optimal for the good and bad camps to invest in node $i$ proportional to $(r_i w_{ig})^\frac{t}{t-1}$ and $(r_i w_{ib})^\frac{t}{t-1}$, subject to  positivity of $r_i w_{ig}$ and $r_i w_{ib}$ respectively.
\label{prop:concave_unbounded}
\end{proposition}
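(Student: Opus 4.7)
The plan is to reduce the maxmin problem to two independent concave maximization problems in the spirit of Section~III, and then solve each by standard Lagrangian/KKT analysis. First, I would observe that Proposition~\ref{prop:convergence} together with the derivation leading to Equation~(\ref{eqn:steadystate}) extends verbatim to the concave setting: simply substituting $w_{ig} x_i^{1/t}$ for $w_{ig} x_i$ and $w_{ib} y_i^{1/t}$ for $w_{ib} y_i$ in the update rule yields
\begin{small}
\begin{align*}
\sum_{i \in N} v_i = \sum_{i \in N} r_i w_{ii}^0 v_i^0 + \sum_{i \in N} r_i w_{ig} x_i^{1/t} - \sum_{i \in N} r_i w_{ib} y_i^{1/t}
\end{align*}
\end{small}
since the dynamics remain linear in the (now) static terms $w_{ig} x_i^{1/t}$ and $w_{ib} y_i^{1/t}$. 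Because the $\mathbf{x}$ and $\mathbf{y}$ variables appear separately and with opposite signs, the maxmin decouples into
\begin{small}
\begin{align*}
\max_{\sum_i x_i \leq k_g,\, \mathbf{x}\geq 0}\sum_{i \in N} r_i w_{ig} x_i^{1/t}
\quad\text{and}\quad
\min_{\sum_i y_i \leq k_b,\, \mathbf{y}\geq 0}\sum_{i \in N} (-r_i w_{ib}) y_i^{1/t}.
\end{align*}
\end{small}

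Next I would argue that for $t>1$ each objective is concave in its variables (since $x \mapsto x^{1/t}$ is concave on $\mathbb{R}_{\geq 0}$), so the good camp's problem is a standard concave program with a single linear budget constraint. For any node with $r_i w_{ig} \le 0$, increasing $x_i$ does not improve (and strictly worsens, if negative) the objective, so at any optimum $x_i^*=0$ for such nodes; I would state this briefly and then restrict attention to $S_g=\{i : r_i w_{ig} > 0\}$. On $S_g$, the Slater condition is trivially satisfied, so the KKT conditions are necessary and sufficient. Writing the Lagrangian
\begin{small}
\begin{align*}
\mathcal{L}(\mathbf{x},\lambda) = \sum_{i \in S_g} r_i w_{ig} x_i^{1/t} - \lambda\Bigl(\sum_{i \in S_g} x_i - k_g\Bigr),
\end{align*}
\end{small}
stationarity gives $\tfrac{1}{t} r_i w_{ig} x_i^{1/t - 1} = \lambda$ for every $i \in S_g$, hence
\begin{small}
\begin{align*}
x_i^* = \Bigl(\tfrac{r_i w_{ig}}{\lambda t}\Bigr)^{t/(t-1)} \propto (r_i w_{ig})^{t/(t-1)}.
\end{align*}
\end{small}
Since the objective is strictly increasing in each $x_i$ on $S_g$, the budget is tight at the optimum, and I would pin down the constant by $\sum_{i \in S_g} x_i^* = k_g$, yielding
\begin{small}
\begin{align*}
x_i^* = k_g\cdot\frac{(r_i w_{ig})^{t/(t-1)}}{\sum_{j \in S_g}(r_j w_{jg})^{t/(t-1)}},\quad i \in S_g,
\end{align*}
\end{small}
and $x_i^*=0$ otherwise. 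The bad camp's problem is symmetric (with $-r_i w_{ib}$ playing the role of $r_i w_{ig}$), giving the analogous formula proportional to $(r_i w_{ib})^{t/(t-1)}$ on $S_b=\{i: r_i w_{ib}>0\}$.

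I do not expect a serious obstacle; the main subtleties are (i) verifying that concavity plus a strictly positive Lagrange multiplier justifies ignoring the interior $x_i>0$ complementary slackness issue (it does, since the derivative $\tfrac{1}{t}r_i w_{ig} x_i^{1/t-1} \to \infty$ as $x_i\downarrow 0$ for $t>1$, so any node in $S_g$ receives strictly positive investment, and the $x_i\ge 0$ constraints are never tight on $S_g$); and (ii) a clean treatment of the edge case where all $r_i w_{ig}\le 0$, in which case $\mathbf{x}^*=\mathbf{0}$ is trivially optimal. Both can be handled in a sentence each, so the proof will be essentially a clean KKT calculation following the decomposition above.
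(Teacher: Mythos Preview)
Your proposal is correct and follows essentially the same route as the paper: decouple the maxmin into two independent single-camp problems via the steady-state expression, then solve each by KKT/Lagrangian analysis with case-handling on the sign of $r_i w_{ig}$. The only cosmetic difference is that the paper first substitutes $x_i=\mathcal{X}_i^t$ (turning the objective linear and pushing the nonlinearity into the constraint $\sum_i \mathcal{X}_i^t\le k_g$) before applying KKT, whereas you work directly with the concave objective $\sum_i r_i w_{ig} x_i^{1/t}$; both computations yield the identical proportionality formula.
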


%
%
\noindent
A proof of Proposition \ref{prop:concave_unbounded} is provided in 
Appendix \ref{app:concave_unbounded}.
%

%
%


 

\begin{remark}
[Skewness of investment]


When we compare the results for lower and higher values of $t$,
the investment made by the good camp has an exaggerated correlation with the value of $r_i w_{ig}$ for lower values of $t$. In particular, the investment made is very skewed towards nodes with high values of $r_i w_{ig}$ when $t$ is very low, while it is proportional to $r_i w_{ig}$ when $t$ is very high.
Note that $t=1$ corresponds to the linear case in Setting \ref{sec:fundamental} where the investment is extremely skewed with each camp investing its entire budget on only one node.
%



\label{rem:skewness}
\end{remark}

\begin{remark}[User perception of fairness]
The skewness can be linked to user perception of fairness
\cite{jain1984quantitative}.
Suppose a node $p$ is such that $r_p w_{pg} = \max_i r_i w_{ig}$, and it is the unique node with this maximum value. Suppose a node $q$ is such that $r_q w_{qg} = r_p w_{pg} - \epsilon$, where $\epsilon$ is positive and infinitesimal. From the perspective of node $q$, the  strategy would be fair if the investment in $q$ is not much less than that in $p$, since they are almost equally valuable. However, $t=1$ leads to a highly skewed investment where $p$ receives $k_g$ and $q$ receives 0, which can be perceived as unfair by  $q$. As $t$ increases, the investment becomes less skewed; in particular, $t \rightarrow \infty$ leads to investment on a node $i$ to be proportional to $r_i w_{ig}$, which could be perceived as fair by the nodes.
\end{remark}

\vspace{-2mm}
\subsection{The Case of Bounded Investment Per Node}
\label{sec:concave_bounded}

With the additional constraints $x_i \leq 1$ and $y_i \leq 1, \forall {i \in N}$, the optimal investment strategies are given by Proposition~\ref{prop:concave_bounded}. We provide its proof in 
Appendix \ref{app:concave_bounded}.


\begin{proposition}
\label{prop:concave_bounded}

\noindent
In Setting \ref{sec:concave_bounded},
if the number of nodes with $r_i w_{ig}>0$ is less than $k_g$,
it is optimal for  good camp to invest 1 unit on each node $i$ with $r_i w_{ig}>0$ and 0 on all other nodes.

\noindent
If the number of nodes with $r_i w_{ig}\!>\!0$ is at least $k_g$,
let $\hat{\gamma}>0$ be the solution of 

\vspace{-3mm}
\begin{small}
\begin{align*}
\sum_{{i: {r_i w_{ig}} \in (0, t \gamma]}} \left( \frac{r_i w_{ig}}{t \gamma} \right) ^\frac{t}{t-1} + \sum_{i:r_i w_{ig} > t {\gamma}} 1 = k_g
\end{align*}
\end{small}
%
\noindent
It can be shown that $\hat{\gamma}$ exists and is unique;
 it is then optimal for the good camp to follow the investment strategy:

\vspace{-3mm}
\begin{small}
\begin{align*}
x_i^* &= 0, \text{ if } r_i w_{ig} \leq 0
\\
x_i^* &= 1, \text{ if } r_i w_{ig} > t \hat{\gamma}
\\
x_i^* &= \left( k_g - \sum_{i:r_i w_{ig} > t \hat{\gamma}} 1 \right) \left( \frac{(r_i w_{ig})^\frac{t}{t-1}}{\sum_{{i: {r_i w_{ig}} \in (0, t \hat{\gamma}]}} (r_i w_{ig})^\frac{t}{t-1}} \right)  
, 
\\ &\;\;\;\;\;\;\;\;\;\;\;\;\;\;\;\;\;\;\;\;\;\;\;\;\;\;\;\;\;\;\;\;\;\;\;\;\;\;\;\;\;\;\;\;\;\;\;\;\;\;\;\;\;\;\;\;\text{ if } r_i w_{ig} \in (0, t \hat{\gamma}]
\end{align*}
\end{small}
\vspace{-4mm}

\noindent
The optimal strategy of the bad camp is analogous.
\end{proposition}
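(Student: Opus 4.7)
The plan is to leverage Equation (\ref{eqn:steadystate}), which still decouples into two independent optimization problems once $x_i$ is replaced by $x_i^{1/t}$ (the two camps' investments do not interact in the objective). I would therefore focus on the good camp's subproblem $\max \sum_{i \in N} r_i w_{ig}\, x_i^{1/t}$ subject to $\sum_i x_i \le k_g$, $0 \le x_i \le 1$, and obtain the bad camp's strategy by symmetry. Since $x_i^{1/t}$ is strictly concave for $t>1$ and the feasible region is a compact convex polytope, the problem has a unique maximizer and the KKT conditions are both necessary and sufficient.

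First I would dispose of trivial coordinates: if $r_i w_{ig} \le 0$, then the objective is non-increasing in $x_i$, so any optimum sets $x_i^*=0$. I would then handle the degenerate regime: if the number of $i$ with $r_i w_{ig}>0$ is strictly less than $k_g$, the budget constraint is slack at $x_i^*=\mathbb{I}_{r_i w_{ig}>0}$, and the concavity-plus-monotonicity argument confirms this is optimal. For the nontrivial regime, I would write the Lagrangian
\begin{small}
\begin{equation*}
L = \sum_{i\in N} r_i w_{ig}\, x_i^{1/t} - \gamma\!\left(\sum_{i\in N} x_i - k_g\right) - \sum_{i\in N} \mu_i(x_i-1) + \sum_{i\in N} \lambda_i x_i,
\end{equation*}
\end{small}
and set $\partial L/\partial x_i = 0$, which for $r_i w_{ig}>0$ gives $(r_i w_{ig}/t)\, x_i^{1/t-1} = \gamma + \mu_i - \lambda_i$. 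Complementary slackness then produces three cases: (i) if $r_i w_{ig} > t\gamma$ the only consistent choice is $x_i^*=1$ (with $\mu_i>0$); (ii) if $r_i w_{ig}\le t\gamma$ and $x_i^*\in(0,1)$, then $\mu_i=\lambda_i=0$, yielding $x_i^* = (r_i w_{ig}/(t\gamma))^{t/(t-1)} \le 1$; (iii) the boundary $r_i w_{ig}=t\gamma$ matches both formulas seamlessly.

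Next I would pin down $\hat{\gamma}$ from the binding budget constraint $\sum_i x_i^* = k_g$, i.e., the equation displayed in the proposition. Define
\begin{small}
\begin{equation*}
f(\gamma) = \sum_{i:\,r_i w_{ig}\in(0,t\gamma]} \!\left(\frac{r_i w_{ig}}{t\gamma}\right)^{\!t/(t-1)} + \sum_{i:\,r_i w_{ig}>t\gamma} 1 .
\end{equation*}
\end{small}
Each term of the first sum is continuous and strictly decreasing in $\gamma$; at each transition point $\gamma = r_i w_{ig}/t$ the summand value equals $1$ from both sides, so $f$ is continuous and strictly decreasing on the set where it is positive. Because $f(\gamma)\to |\{i:r_i w_{ig}>0\}|\ge k_g$ as $\gamma\downarrow 0^+$ and $f(\gamma)\to 0$ as $\gamma\to\infty$, the intermediate value theorem gives existence and uniqueness of $\hat{\gamma}>0$ with $f(\hat{\gamma})=k_g$.

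Finally, to obtain the stated closed form, I would eliminate $\hat{\gamma}$ from the interior expression. Summing $x_i^* = (r_i w_{ig}/(t\hat{\gamma}))^{t/(t-1)}$ over $i$ with $r_i w_{ig}\in(0,t\hat{\gamma}]$ and equating to $k_g - \sum_{i:r_i w_{ig}>t\hat{\gamma}} 1$ solves for $(1/(t\hat{\gamma}))^{t/(t-1)}$, yielding
\begin{small}
\begin{equation*}
x_i^* = \left(k_g - \!\!\sum_{i:\,r_i w_{ig}>t\hat{\gamma}}\!\!1\right)\frac{(r_i w_{ig})^{t/(t-1)}}{\sum_{i:\,r_i w_{ig}\in(0,t\hat{\gamma}]}(r_i w_{ig})^{t/(t-1)}}
\end{equation*}
\end{small}
exactly as claimed. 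I expect the most delicate step to be the continuity and strict monotonicity argument for $f$ at the transition points $\gamma=r_i w_{ig}/t$, since naively the two pieces of the definition could create a jump; the check that the interior formula evaluates to $1$ precisely on the boundary is what glues the pieces together and makes the existence/uniqueness argument clean. The bad camp's strategy then follows by replacing $w_{ig}$ with $w_{ib}$ and $k_g$ with $k_b$.
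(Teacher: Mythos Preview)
Your proposal is correct and follows essentially the same KKT/Lagrangian route as the paper; the only difference is that the paper first substitutes $x_i=\mathcal{X}_i^t$ so that the objective becomes linear in $\mathcal{X}_i$ (with the nonlinearity pushed into the budget constraint $\sum_i \mathcal{X}_i^t\le k_g$), whereas you differentiate the concave objective $\sum_i r_i w_{ig}x_i^{1/t}$ directly. Your continuity/monotonicity check for $f(\gamma)$ at the threshold points $\gamma=r_iw_{ig}/t$ is in fact more explicit than the paper's one-line monotonicity remark; just be careful to state the concavity/uniqueness claim \emph{after} restricting to coordinates with $r_i w_{ig}>0$, since for negative coefficients the summand $r_i w_{ig}x_i^{1/t}$ is convex rather than concave.
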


Note that for $r_i w_{ig} \in (0,t \hat{\gamma}]$, we can alternatively write 
$x_i^* = \left( \frac{r_i w_{ig}}{t \hat{\gamma}} \right) ^\frac{t}{t-1}$, which would be between 0 and 1. So the nodes with positive values of $r_i w_{ig}$ should be classified into two sets, one containing nodes with $r_i w_{ig} \in (0,t \hat{\gamma}]$ (for which $x_i^* \in (0,1]$) and the other containing nodes with $r_i w_{ig} > t \hat{\gamma}$ ($x_i^*$ forcefully limited to $1$). So we can effectively start with all nodes in the former set (meaning $t\gamma \geq \max_{i \in N} r_i w_{ig}$) and then transfer nodes to the latter set as per descending values of $r_i w_{ig}$ (as we reduce $t\gamma$), until we have two sets, one with \mbox{$x_i^* = \left( \frac{r_i w_{ig}}{t {\gamma}} \right) ^\frac{t}{t-1} 
= \left( \frac{(r_i w_{ig})^\frac{t}{t-1}}{\sum_{{i: {r_i w_{ig}} \in (0, t {\gamma}]}} (r_i w_{ig})^\frac{t}{t-1}} \right)
\leq 1$} and the other with $x_i^*$ forcefully limited to $1$.

%
%

\begin{intuition}
The solution suggests that the optimal strategy can be obtained using a trial-and-error iterative process. A camp could use the optimal strategy for the unbounded case suggested in Proposition \ref{prop:concave_unbounded}. If we get $x_i^*>1$ for any node, we assign $x_i^*=1$ to node $i$ with the highest value of $r_i w_{ig}$, and use Proposition \ref{prop:concave_unbounded} again by excluding node $i$ and decrementing the available budget by 1. This process would be repeated until $x_i^* \leq 1, \forall {i \in N}$.
\label{int:trialnerror}
\end{intuition}

\vspace{-5mm}
\section{Acting as Competitor's Adversary}
\label{sec:minmax_investment}


In this setting, a camp explicitly acts to maximize the competitor's investment  or deviation from its desired investment, that is required to drive the sign of the average opinion value of the population in the latter's favor. Without loss of analytical generality, we consider that the good camp acts as the adversary.



\vspace{-3mm}
\subsection{The Case of Unbounded Investment per Node}
\label{sec:minmax_investment_unbounded}

~
\vspace{-4mm}
\begin{small}
\begin{gather*}
\max_{\substack{\sum_i x_i \leq k_g \\ x_i \geq 0}} \;
 \min_{y_i \geq 0} \; \sum_{{i \in N}} y_i
\\
\text{s.t. }
\sum_{i \in N} v_i \leq 0
\\
\hspace{-3mm}
\sum_{i \in N}
v_i 
=
 \sum_{i \in N} r_i ( w_{ig}x_i + w_{ii}^0 v_i^0) - \sum_{i \in N} r_i w_{ib}y_i
\end{gather*}
\end{small}
\vspace{-3mm}

\begin{proposition}
In Setting \ref{sec:minmax_investment_unbounded}, it is optimal for the good camp to invest its budget in node $i$ with the maximum value of $r_i w_{ig}$, subject to it being positive. For the bad camp, it is optimal to invest in node $i$ with maximum value of $r_i w_{ib}$, subject to its positivity.
(If there does not exist any node $i$ with positive value of $r_i w_{ib}$, it is optimal for the bad camp to not invest at all).
%
The optimal amount of investment made by the bad camp is 

\vspace{-3mm}
\begin{small}
\begin{align*}
\max \left\{ \frac{1}{\max_{j\in N} w_{jb} r_j}  \left( k_g \max \left\{ \max_{i \in N} r_i w_{ig} , 0 \right\} +  \sum_{i \in N} r_i w_{ii}^0 v_i^0\right) , 0 \right\}
\end{align*}
\end{small}
\vspace{-4mm}
\label{prop:investment_unbounded}
\end{proposition}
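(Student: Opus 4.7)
The plan is to proceed by backward induction: first solve the bad camp's inner minimization for an arbitrary feasible investment vector $\mathbf{x}$, and then use the resulting best-response value to solve the outer maximization over $\mathbf{x}$.

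First I would fix $\mathbf{x}$ and substitute the closed form for $\sum_{i \in N} v_i$ from (\ref{eqn:steadystate}) into the constraint $\sum_{i \in N} v_i \leq 0$. Writing $S(\mathbf{x}) := \sum_{i \in N} r_i w_{ig} x_i + \sum_{i \in N} r_i w_{ii}^0 v_i^0$, the bad camp's program reduces to the one-constraint LP
\begin{align*}
\min_{\mathbf{y} \geq \mathbf{0}} \; \sum_{i \in N} y_i \quad \text{s.t.} \quad \sum_{i \in N} r_i w_{ib}\, y_i \;\geq\; S(\mathbf{x}).
\end{align*}
If $S(\mathbf{x}) \leq 0$, then $\mathbf{y}=\mathbf{0}$ is feasible and trivially optimal. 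If $S(\mathbf{x}) > 0$, this is a fractional knapsack–type LP: minimizing the unweighted sum subject to a weighted sum reaching a positive target is achieved by concentrating all mass on a coordinate $i$ maximizing the ``efficiency'' $r_i w_{ib}$. Setting $M_b := \max_{j \in N} r_j w_{jb}$, the optimal objective value is $S(\mathbf{x})/M_b$ whenever $M_b > 0$. If $M_b \leq 0$ and $S(\mathbf{x})>0$, no non-negative $\mathbf{y}$ satisfies the constraint, so the bad camp cannot achieve its goal and hence does not invest, which handles the parenthetical edge case in the statement.

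Next I would solve the outer problem. Since $M_b$ and $\sum_i r_i w_{ii}^0 v_i^0$ do not depend on $\mathbf{x}$, maximizing the bad camp's post-best-response investment is equivalent (in the generic case $M_b>0$) to maximizing $S(\mathbf{x}) = \sum_{i \in N} r_i w_{ig}\, x_i + \text{const}$ subject to $\sum_i x_i \leq k_g$ and $\mathbf{x} \geq \mathbf{0}$. This is a standard LP whose optimum places the whole budget $k_g$ on any node attaining $M_g := \max_{i \in N} r_i w_{ig}$ provided $M_g>0$; otherwise investing anywhere would weakly decrease $S$, so the good camp is optimally inactive. Plugging the optimal $\mathbf{x}$ back yields $S^* = k_g \max\{M_g,0\} + \sum_{i \in N} r_i w_{ii}^0 v_i^0$, and the total bad-camp investment equals $S^*/M_b$ clipped below at zero (since if $S^* \leq 0$ the bad camp's constraint is already satisfied by $\mathbf{y}=\mathbf{0}$), which is precisely the expression in the statement.

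The main obstacle will be the case analysis needed to justify that the outer maximization is still well defined even when the inner LP is infeasible for some $\mathbf{x}$, and to argue that the combined objective behaves monotonically in $S(\mathbf{x})$. Formally one must verify that, as a function of $\mathbf{x}$, the bad camp's optimal total investment equals $\max\{S(\mathbf{x})/M_b,\,0\}$ when $M_b>0$ and $0$ when $M_b\leq0$; both expressions are non-decreasing in $S(\mathbf{x})$, so maximizing them over $\mathbf{x}$ coincides with maximizing $S(\mathbf{x})$, which validates the stated good-camp strategy and completes the derivation.
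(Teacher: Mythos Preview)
Your proof is correct and follows the same backward-induction structure as the paper: solve the bad camp's inner problem for fixed $\mathbf{x}$, then optimize over $\mathbf{x}$. The only difference is that you solve the inner one-constraint LP directly via a fractional-knapsack argument (concentrate on the highest-efficiency coordinate), whereas the paper passes to the LP dual to extract the same optimal value $S(\mathbf{x})/\max_j r_j w_{jb}$; both routes yield identical conclusions and the case analysis matches.
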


\noindent
A proof of Proposition \ref{prop:investment_unbounded} is provided in 
Appendix \ref{app:investment_unbounded}.


It is to be noted that, contrary to the previous settings, the amount of investment made by the bad camp in this setting is dependent on the good camp's parameters ($r_i w_{ig}$ and $k_g$) as well as the opinion bias parameters ($w_{ii}^0 v_i^0$). This is because in the previous settings, the bad camp's objective was to minimize the sum of opinion values without considering the actual value of this sum, while the current setting necessitates the bad camp to ensure that this sum is non-positive; this requires taking into account the effects of good camp's influence and the initial biases on this sum.

\begin{remark}
[Maximizing competitor's deviation]

Let the desired investments for the good and bad camps be $\bar{x}_i$ and $\bar{y}_i$, respectively.
Thus the optimization problem is

\vspace{-3mm}
\begin{small}
\begin{gather*}
\max_{\substack{\sum_i (x_i - \bar{x}_i)^2 \leq k_g \\ x_i \geq 0}} \;
 \min_{y_i \geq 0} \; \sum_{{i \in N}} (y_i - \bar{y}_i)^2
\\
\text{s.t. }
\sum_{i \in N} v_i \leq 0
\\
\hspace{-3mm}
\sum_{i \in N}
v_i 
=
 \sum_{i \in N} r_i ( w_{ig}x_i + w_{ii}^0 v_i^0) - \sum_{i \in N} r_i w_{ib}y_i
\end{gather*}
\end{small}
\vspace{-2mm}

\noindent
Let $\hat{\gamma}>0$ be the solution of 

\vspace{-3mm}
\begin{small}
\begin{align*}
\sum_{\substack{i: {r_i w_{ig}} \\ \geq -2 \gamma \bar{x}_i}} \left( \frac{r_i w_{ig}}{2 \gamma} \right) ^2 + \sum_{\substack{i:r_i w_{ig} \\ < -2 {\gamma} \bar{x}_i}} (\bar{x}_i)^2 = k_g
\end{align*}
\end{small}
\vspace{-1mm}

\noindent
Then the good camp's optimal strategy is the following:
\begin{small}
\begin{align*}
& x_i^* = 0, \text{ if } r_i w_{ig} < -2 {\gamma} \bar{x}_i
\\
 &= \bar{x}_i + \sgn(r_i w_{ig}) \Bigg( k_g - \!\! \sum_{\substack{i:r_i w_{ig} \\ < -2 {\gamma} \bar{x}_i}} \!\! (\bar{x}_i)^2 \Bigg)^{\frac{1}{2}} \left(  \frac{  (r_i w_{ig})^2}{\sum_{\substack{i: {r_i w_{ig}} \\ \geq -2 \gamma \bar{x}_i}} (r_i w_{ig})^2} \right)^{\frac{1}{2}}
\\ &\;\;\;\;\;\;\;\;\;\;\;\;\;\;\;\;\;\;\;\;\;\;\;\;\;\;\;\;\;\;\;\;\;\;\;\;\;\;\;\;\;\;\;\;\;\;\;\;\;\;\;\;\;\;\;\;\;\;\;\;\;\;\;\;\;\;\;\text{ if } r_i w_{ig} \geq -2 {\gamma} \bar{x}_i
\end{align*}
\end{small}
%
%
If there does not exist a $\hat{\gamma}>0$ (because $\sum_{i:r_i w_{ig}<0} (\bar{x}_i)^2 < k_g$ and no node with $r_i w_{ig}>0$), we invest 0 on any node with $r_i w_{ig}<0$ and $\bar{x}_i$ on any node with $r_i w_{ig}=0$.
\end{remark}

\noindent
This can be proved on similar lines as Proposition \ref{prop:concave_bounded}. Here,
\begin{small}
\begin{align*}
(x_i^* - \bar{x}_i)^2 =  \Bigg( k_g - \sum_{\substack{i:r_i w_{ig} \\ < -2 {\gamma} \bar{x}_i}} (\bar{x}_i)^2 \Bigg) \left(  \frac{  (r_i w_{ig})^2}{\sum_{\substack{i: {r_i w_{ig}} \\ \geq -2 \gamma \bar{x}_i}} (r_i w_{ig})^2} \right)
\end{align*}
\end{small}
and the optimal square root is determined by $\sgn(r_i w_{ig})$ (since a positive $r_i w_{ig}$ would mean a higher optimal investment as opposed to a negative $r_i w_{ig}$).
Here, it is possible that a node $i$ is invested on by the good camp even if it has negative $r_i w_{ig}$, so as to have the investment close to $\bar{x}_i$.

\vspace{-2mm}
\subsection{The Case of Bounded Investment per Node}
\label{sec:minmax_investment_bounded}

The optimal strategies of the camps can be easily obtained for this setting on similar lines as  Proposition \ref{prop:fundamental_bounded}.

\begin{proposition}
In Setting \ref{sec:minmax_investment_bounded}, it is optimal for the good camp to invest in nodes one at a time, subject to a maximum investment of 1 unit per node,
in decreasing order of values of $r_i w_{ig}$
until either the budget $k_g$ is exhausted or we reach a node with a non-positive value of $r_i w_{ig}$.
Say the so derived optimal investment on node $i$ is $x_i^*$.
The optimal strategy of the bad camp is to invest in nodes one at a time, subject to a maximum investment of 1 unit per node,
in decreasing order of values of $r_i w_{ib}$
until  
$\sum_{i \in N} r_i w_{ib} y_i \geq \sum_{j \in N} r_j(w_{jg}x_j^* + w_{jj}^0 v_j^0) $.
%
\label{prop:minmax_investment_bounded}
\end{proposition}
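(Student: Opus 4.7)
My plan is to exploit the sequential Stackelberg structure: the good camp picks $\mathbf{x}$, the bad camp best-responds with $\mathbf{y}$, and the problem decouples into two nested LPs that can each be solved by a greedy/exchange argument analogous to Proposition~\ref{prop:fundamental_bounded}. First I would substitute the closed-form expression from Equation~(\ref{eqn:steadystate}) into the constraint $\sum_i v_i \leq 0$ to rewrite it as $\sum_i r_i w_{ib}\, y_i \geq T(\mathbf{x})$, where $T(\mathbf{x}) = \sum_i r_i\bigl(w_{ig} x_i + w_{ii}^0 v_i^0\bigr)$. For any fixed $\mathbf{x}$, the bad camp's feasible polytope shrinks monotonically as $T(\mathbf{x})$ grows, so the optimal value $\min \sum_i y_i$ is a non-decreasing function of $T(\mathbf{x})$. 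Consequently the outer maximization reduces to maximizing $T(\mathbf{x})$ subject to $\sum_i x_i \leq k_g$ and $0 \leq x_i \leq 1$. Since $\sum_i r_i w_{ii}^0 v_i^0$ is a constant, this is precisely the problem solved in Proposition~\ref{prop:fundamental_bounded}, and the greedy rule on decreasing values of $r_i w_{ig}$ (truncated once the values turn non-positive or the budget is exhausted) yields $\mathbf{x}^*$.

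Next I would solve the bad camp's inner LP with $\mathbf{x}^*$ fixed. Writing $c_i = r_i w_{ib}$ and $T^* = T(\mathbf{x}^*)$, the problem is
\begin{align*}
\min \sum_i y_i \quad \text{s.t.} \quad \sum_i c_i\, y_i \geq T^*, \;\; 0 \leq y_i \leq 1.
\end{align*}
If $T^* \leq 0$, the all-zero vector is trivially optimal. Otherwise, any node with $c_i \leq 0$ can be dropped without loss (setting $y_i=0$ strictly improves both the objective and the constraint), so attention restricts to nodes with $c_i > 0$. I would then invoke a standard exchange argument: if an optimal $\mathbf{y}$ had indices $i, j$ with $c_i > c_j$, $y_i < 1$, and $y_j > 0$, transferring mass of size $\epsilon = \min\{1-y_i,\, y_j\}$ from $j$ to $i$ would strictly decrease $\sum_i y_i$ (because we can then shave $\epsilon \cdot (c_i-c_j)/c_i$ off $y_j$ while preserving feasibility), contradicting optimality. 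Hence the greedy rule — pour one unit at a time into nodes ordered by decreasing $r_i w_{ib}$, with a fractional investment on the last node — is optimal, and the stopping condition is exactly the covering constraint $\sum_i r_i w_{ib}\, y_i \geq \sum_j r_j(w_{jg} x_j^* + w_{jj}^0 v_j^0)$ stated in the proposition. Equivalently, one can read off the same rule from the KKT conditions of this LP, where the single dual multiplier on the covering constraint determines the common ``bang-per-buck'' threshold.

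The main obstacle will be a clean handling of the edge cases: when $T^* \leq 0$ the bad camp trivially invests nothing; when the aggregate positive-coefficient capacity $\sum_{i:\, c_i > 0} c_i$ is strictly less than $T^*$, the constraint is infeasible and the good camp has driven $\sum_i v_i > 0$ regardless of the bad camp's response (so the maxmin value is $+\infty$ in the statement's sense, and the good camp has effectively ``won''); and the non-uniqueness introduced by ties among $r_i w_{ig}$ or $r_i w_{ib}$ values, which must be addressed by an arbitrary tie-breaking rule as in Proposition~\ref{prop:fundamental_bounded}. Once these boundary behaviors are settled, the decoupling argument plus the two greedy characterizations immediately yield the claim.
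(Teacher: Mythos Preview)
Your approach is correct and is precisely what the paper intends: it states only that the result ``can be easily obtained for this setting on similar lines as Proposition~\ref{prop:fundamental_bounded}'' and gives no further argument, so your reduction of the outer problem to maximizing $T(\mathbf{x})$ (hence Proposition~\ref{prop:fundamental_bounded}) and your greedy/exchange treatment of the inner covering LP are exactly the missing details. One small slip: in your exchange step the amount $\epsilon(c_i-c_j)/c_i$ should be shaved off $y_i$ (or, equivalently, the divisor should be $c_j$ if you insist on shaving $y_j$); with that correction the argument goes through as written.
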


Note that the terminating condition $\sum_{i \in N} r_i w_{ib} y_i \geq \sum_{j \in N} r_j(w_{jg}x_j^* + w_{jj}^0 v_j^0) $ is same as the required condition $\sum_{i \in N} v_i \leq 0$, when $x_j^*$ is the optimal investment by the good camp on node $j$.

\vspace{-2mm}
\section{Common Coupled Constraints Relating Bounds on Combined Investment per Node}
\label{sec:CCC}

As motivated in Section \ref{sec:motiv},
 a sequential play would be more natural than a simultaneous one in certain scenarios, for instance, in presence of a  mediator or central authority which may be responsible for giving permissions for campaigning or scheduling product advertisements to be presented to an individual.
%
%
We hence consider two sequential play settings, which result in Stackelberg variants of the considered game.
We use {\em subgame perfect Nash equilibrium} as the equilibrium notion; also since it is a  zero-sum game, we refer to the equilibrium as either maxmin or minmax, based on which camp plays first. 
Without loss of analytical generality, we conduct our analysis while assuming the good camp plays first. The bad camp would hence choose a strategy that minimizes the sum of opinion values as a best response to the good camp's strategy. Foreseeing this, the good camp would want to maximize this minimum value; this is popularly known as the {\em backward induction approach}. We hence derive the subgame perfect Nash equilibrium strategy profile and the corresponding maxmin value. The minmax profile and value can be obtained symmetrically. 





In this section, we consider a setting
in which the combined  investment on a node by both  camps  is bounded by a certain limit. Without loss of generality, we assume  this limit to be 1 unit. This leads to the introduction of common coupled constraints (CCC): $\,x_i+y_i \leq 1, \forall i \in N$. 
%
%

\vspace{-3mm}
\begin{small}
\begin{align*}
\max_{\substack{\sum_i x_i \leq k_g \\ x_i \geq 0 }} 
 \min_{\substack{\sum_i y_i \leq k_b \\ 0 \leq y_i \leq (1-x_i) }}  \sum_{{i \in N}} v_i
\end{align*}
\end{small}
\vspace{-3mm}

\vspace{-1mm}
\begin{small}
\begin{align*}
\text{s.t. }
\forall {i \in N}: 
v_i 
= 
 w_{ii}^0 v_i^0
+ \sum_{j \in N} w_{ij}v_j 
+ w_{ig} x_i
 -  w_{ib} y_i
\end{align*}
\end{small}
\vspace{-3mm}

The inner term is

\vspace{-5mm}
\begin{small}
\begin{align*}
 \min \sum_{{i \in N}} v_i
\\ \text{s.t. }
\mathbf{y} \geq \mathbf{0}
\\
\sum_{i \in N} y_i \leq k_b
\text{ or }
 -\sum_{i \in N} y_i \geq -k_b
& \;\; \leftarrow \alpha
\\
\forall {i \in N}: 
v_i 
- \sum_{j \in N} w_{ij}v_j 
+ w_{ib} y_i
= w_{ig} x_i
+ w_{ii}^0 v_i^0
& \;\; \leftarrow z_i
\\
\forall {i \in N}: \;
x_i + y_i \leq 1
\text{ or }
 -y_i \geq -(1-x_i)
& \;\; \leftarrow \gamma_i
\end{align*}
\end{small}
\vspace{-3mm}

Its dual problem can be written as

\vspace{-3mm}
\begin{small}
\begin{align}
\max \;\; -\alpha k_b + \sum_{i\in N} \left( z_i ( w_{ig} x_i + w_{ii}^0 v_i^0 ) - \gamma_i(1-x_i) \right)
\label{eqn:dualthisnew}
&
\\ \text{s.t. }
\forall {i \in N}: \;
 z_i 
- \sum_{j \in N} w_{ji} z_j 
= 1
 \;\; \leftarrow v_i &
\label{eqn:thisnew}
\\
\forall {i \in N}: \;
w_{ib} z_i 
-  \gamma_i
- \alpha
\leq 0
 \;\; \leftarrow y_i &
\label{eqn:ynz1new}
\\
\nonumber
\alpha \geq 0
\\
\nonumber
\forall {i \in N}: \;
z_i \in \mathbb{R}, 
\gamma_i \geq 0
\end{align}
\end{small}
\vspace{-3mm}

\noindent
As earlier, from (\ref{eqn:thisnew}), we have $z_i = \left( (\mathbf{I}-\mathbf{w}^T)^{-1} \mathbf{1} \right)_i = r_i$.
%
%
For satisfying Constraint (\ref{eqn:ynz1new}), it is required that
\begin{align}
\forall {i \in N}: \;
r_i w_{ib}  - \gamma_i -\alpha\leq 0
\;\text{ or }\;
\gamma_i \geq {r_i w_{ib}}-\alpha
\label{eqn:gammanew}
\end{align}

To maximize objective function (\ref{eqn:dualthisnew}), it is required that $\gamma_i$ should be as low as possible (knowing that $1-x_i \geq 0$).
So the above condition $\gamma_i \geq {r_i w_{ib}}-\alpha$ along with $\gamma_i \geq 0$ gives
\begin{align*}
\forall {i \in N}: \;
\gamma_i = \max \{r_i w_{ib} -\alpha, 0\}
\end{align*}

So we need to maximize the objective function with respect to $\gamma_i,x_i, \forall {i \in N}$ and $\alpha$. For this purpose, let us define a set with respect to $\alpha$, namely, 
\begin{align*}
J_{\alpha} = \{j:r_j w_{jb}-\alpha \geq 0\}
\end{align*}
So the objective function to be maximized is
\begin{small}
\begin{align}
 -\alpha k_b -\sum_{j \in J_{\alpha}} (r_j w_{jb} -\alpha)(1-x_j) + \sum_{i\in N} r_i ( w_{ig} x_i + w_{ii}^0 v_i^0 ) 
 \label{eqn:objfunbudget}
\end{align}
\end{small}
which is equal to
\begin{small}
\begin{align}
 \alpha  \Big( \! \sum_{j\in J_{\alpha}} (1 \! - \! x_j)-k_b \Big) - \! \sum_{j\in J_{\alpha}} (1 \! - \! x_j) r_j w_{jb} + \sum_{i\in N} r_i ( w_{ig} x_i \! + \! w_{ii}^0 v_i^0 ) 
  \label{eqn:objfunrearranged}
\end{align}
\end{small}
\vspace{-2mm}

\begin{claim}
It is sufficient to search the values of
$\alpha \in \{r_j w_{jb}\}_{j:r_j w_{jb}>0} \cup \{0\}$ to find an optimal solution.
%
\label{clm:alpharange}
\end{claim}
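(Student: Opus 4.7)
The plan is to treat (\ref{eqn:objfunrearranged}) as a function of $\alpha$ for each fixed feasible $x$, show that it is concave and piecewise linear in $\alpha$ with kinks exactly at the values $r_j w_{jb}$ with $r_j w_{jb} > 0$, and conclude that its maximum over $\alpha \geq 0$ is attained either at $\alpha = 0$ or at such a kink. Since this property holds for every fixed $x$, the joint maximum over $(\alpha, x)$ can be restricted to the same finite set of $\alpha$-values by interchanging the order of maximization:
\begin{small}
\begin{equation*}
\max_{\alpha \geq 0}\, \max_{x} f(\alpha, x) \;=\; \max_{x}\, \max_{\alpha \geq 0} f(\alpha, x) \;=\; \max_{x}\, \max_{\alpha \in S} f(\alpha, x),
\end{equation*}
\end{small}
where $S = \{r_j w_{jb}\}_{j: r_j w_{jb}>0} \cup \{0\}$.

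To execute this, I would first fix a feasible $x$, noting that primal feasibility of the inner problem forces $x_j \in [0,1]$ through $0 \leq y_j \leq 1 - x_j$. As $\alpha$ increases over $[0,\infty)$, the index set $J_\alpha = \{j: r_j w_{jb} \geq \alpha\}$ is piecewise constant, losing those indices $j$ with $r_j w_{jb} = \alpha$ as $\alpha$ crosses $r_j w_{jb}$. On each maximal interval where $J_\alpha$ is constant, (\ref{eqn:objfunrearranged}) is affine in $\alpha$ with slope $\sum_{j \in J_\alpha}(1-x_j) - k_b$. Continuity across a breakpoint $\alpha = r_k w_{kb}$ is a direct calculation: the left and right affine pieces differ by $(1-x_k)(\alpha - r_k w_{kb})$, which vanishes exactly at $\alpha = r_k w_{kb}$.

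Next, I would establish concavity: when $\alpha$ crosses $r_k w_{kb}$ from below, $k$ leaves $J_\alpha$ and the slope decreases by $(1 - x_k)$, which is non-negative since $x_k \leq 1$. Hence the slope is non-increasing in $\alpha$, so the function is concave and piecewise linear. Moreover, for $\alpha > \max_j r_j w_{jb}$, we have $J_\alpha = \emptyset$ and the slope is $-k_b \leq 0$, so the function is non-increasing in the tail. A concave piecewise-linear function on $[0,\infty)$ with non-increasing tail attains its maximum either at the boundary $\alpha = 0$ or at a kink, so the optimal $\alpha$ for this fixed $x$ lies in $S$. Since the finite candidate set $S$ does not depend on $x$, the interchange of maxima gives the claim.

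The main obstacle, though not severe, is the bookkeeping at the breakpoints---carefully verifying continuity and the precise slope change---together with making explicit that the implicit bound $x_j \leq 1$ carries over into the joint optimization (otherwise the slope-decrease across a kink could be negative and concavity would fail). Once these two points are handled cleanly, the rest of the argument is essentially the classical observation that a concave piecewise-linear function is maximized at a breakpoint.
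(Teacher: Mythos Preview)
Your argument is correct and reaches the same conclusion as the paper, but by a cleaner and more conceptual route. The paper fixes $\mathbf{x}$, partitions $[0,\infty)$ into the intervals between consecutive values $r_jw_{jb}$, and then does a case analysis on each interval (Cases~1, 2a, 2b, 3): it checks the sign of the slope $\sum_{j\in J_\alpha}(1-x_j)-k_b$ and argues separately that the optimum on each piece is at an endpoint, with special handling of the lowest interval (to pick up $\alpha=0$) and the tail beyond $\max_i r_iw_{ib}$. You instead observe the global structure directly: the objective is concave and piecewise linear in $\alpha$ (since crossing each breakpoint $r_kw_{kb}$ removes $k$ from $J_\alpha$ and decreases the slope by the nonnegative amount $1-x_k$), with eventual slope $-k_b\le 0$, so its maximum over $[0,\infty)$ lies in $S=\{r_jw_{jb}\}_{j:r_jw_{jb}>0}\cup\{0\}$. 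The interchange $\max_\alpha\max_{\mathbf{x}}=\max_{\mathbf{x}}\max_\alpha$ then transfers this to the joint problem. Your approach buys a unified one-line justification in place of several cases; the paper's case analysis, while more laborious, makes explicit which endpoint is optimal on each interval, information that feeds into the subsequent derivation of the greedy algorithm. One small remark: your expression for the difference of the two affine pieces should carry a minus sign, $-(1-x_k)(\alpha-r_kw_{kb})$, though this does not affect continuity at the breakpoint.
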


\begin{proof}


%

Since $\alpha\geq 0$, we have $\alpha \neq r_j w_{jb}$ for any $r_j w_{jb}<0$.
Consider a range $[r_l w_{lb},r_u w_{ub}]$ for a consecutive pair of distinct values of $r_j w_{jb}$.
If a range has both these values negative, we do not search for $\alpha$ in that range, since $\alpha \geq 0$.
If a range has $r_l w_{lb} \leq 0$ and $r_u w_{ub} > 0$, we search for $\alpha$ in $[0,r_u w_{ub}]$.
%
We will now determine an optimal value of $\alpha$ in the valid searchable subset of $[r_l w_{lb},r_u w_{ub}]$,
for a given $\mathbf{x}$.


\vspace{1mm}
Case 1: If $\alpha =r_l w_{lb}$ (where $r_l w_{lb} \geq 0$):
\\
We have $J_{\alpha}=\{j:r_j w_{jb} \geq r_l w_{lb}\}$.
The value of the objective function (\ref{eqn:objfunbudget}) becomes

\vspace{-3mm}
\begin{small}
\begin{align*}
 - r_l w_{lb} k_b -\sum_{j \in J_{\alpha}} (r_j w_{jb} - r_l w_{lb})(1-x_j) + \sum_{i \in N} r_i ( w_{ig} x_i + w_{ii}^0 v_i^0 ) 
\end{align*}
\end{small}
\vspace{-1mm}

\vspace{1mm}
Case 2: If $\alpha \in (r_l w_{lb},r_u w_{ub}]$:
\\
We have $J_{\alpha}=\{j:r_j w_{jb} \geq r_u w_{ub}\}$.
\\
Case 2a: If $\sum_{j \in J_{\alpha}} (1-x_j)-k_b \geq 0$, we have an optimal $\alpha = r_u w_{ub}$ (from (\ref{eqn:objfunrearranged})).
\\
Case 2b: Instead, if $\sum_{j \in J_{\alpha}} (1-x_j)-k_b < 0$, we have optimal $\alpha \rightarrow r_l w_{lb}$ if $r_l w_{lb} \geq 0$, and the optimal value is the same as that for $\alpha = r_l w_{lb}$ (Case 1).
Note here that if $r_l w_{lb}<0$ and $r_u w_{ub} \geq 0$, we would have optimal $\alpha = 0$.

\vspace{1mm}
Case 3:
If $r_l w_{lb} = \max_{i \in N} r_i w_{ib}$, that is, when we are looking for $\alpha \geq \max_{i \in N} r_i w_{ib}$. For $\alpha = \max_{i \in N} r_i w_{ib}$, we have $J_{\alpha} = \{ \argmax_{i \in N} r_i w_{ib} \}$ and so the term $\sum_{j \in J_{\alpha}} (r_j w_{jb} -\alpha)(1-x_j)$ in (\ref{eqn:objfunbudget}) vanishes. 
For $\alpha > \max_{i \in N} r_i w_{ib}$, we have $J_{\alpha} = \{ \}$ and so the term $\sum_{j \in J_{\alpha}} (r_j w_{jb} -\alpha)(1-x_j)$ in (\ref{eqn:objfunbudget}) vanishes in this case too. 
So the objective function to be maximized becomes

\vspace{-3mm}
\begin{small}
\begin{align*}
 -\alpha k_b + \sum_{i \in N} r_i ( w_{ig} x_i + w_{ii}^0 v_i^0 ) 
\end{align*}
\end{small}
\vspace{-3mm}

\noindent
for which the optimal $\alpha = \max_{i \in N} r_i w_{ib}$ (the lowest value of $\alpha$ such that $\alpha \geq \max_{i \in N} r_i w_{ib}$).

\vspace{1mm}
The above cases show that it is sufficient to search the values of
$\alpha \in \{r_j w_{jb}\}_{j:r_j w_{jb}>0} \cup \{0\}$ to determine an optimal value of the objective function.
\end{proof}

Now that we have established that the only possible values of optimal $\alpha$ are $\{r_j w_{jb}\}_{j:r_j w_{jb}>0} \cup \{0\}$, we can assume optimal $\alpha = r_{\hat{j}} w_{\hat{j}b}$ for $\hat{j} \in \{j:r_j w_{jb}>0\} \cup \{d\}$, where the dummy node $d$ is such that $r_d w_{db} = 0$. 



Recalling the objective function in (\ref{eqn:objfunrearranged}),

\vspace{-3mm}
\begin{small}
\begin{align*}
\hspace{-3mm}
&\sum_{i \in N} r_i ( w_{ig} x_i \! + \! w_{ii}^0 v_i^0 )  +  \alpha  \Big( \! \sum_{j\in J_{\alpha}} (1 \! - \! x_j)-k_b \Big) - \! \sum_{j\in J_{\alpha}} (1 \! - \! x_j) r_j w_{jb} 
\\
\hspace{-5mm}
&=
\sum_{i \in N} r_i ( w_{ig} x_i \! + \! w_{ii}^0 v_i^0 )   - \! \left[\sum_{j\in J_{\alpha}} (1 \! - \! x_j) r_j w_{jb} +  \alpha  \Big( \! k_b - \! \sum_{j\in J_{\alpha}} (1 \! - \! x_j) \Big) \! \right]
\\
\hspace{-5mm}
&=
\sum_{i \in N} r_i ( w_{ig} x_i \! + \! w_{ii}^0 v_i^0 ) 
\\ 
\hspace{-5mm}
&\;\;\;\;\;
- \! \left[\sum_{j\in J_{\alpha}} (1 \! - \! x_j) r_j w_{jb} +  \Big( \! k_b - \sum_{j\in J_{\alpha}} (1 \! - \! x_j) \Big) r_{\hat{j}} w_{\hat{j}b} \right]
\end{align*}
\end{small}
\vspace{-2mm}

\noindent
Let $I_{\alpha}=\{j:r_j w_{jb} > \alpha\}$, $P_{\alpha}=\{j:r_j w_{jb} = \alpha = r_{\hat{j}} w_{\hat{j}b}\}$.
So the objective function is

\vspace{-3mm}
\begin{small}
\begin{align*}
&
\sum_{i \in N} r_i ( w_{ig} x_i \! + \! w_{ii}^0 v_i^0 ) 
- \! \Bigg[\sum_{j\in P_{\alpha}} (1 \! - \! x_j) r_j w_{jb} - \sum_{j\in P_{\alpha}} (1 \! - \! x_j) r_{\hat{j}} w_{\hat{j}b} 
\\
&\;\;\;\;\;
+ \sum_{j\in I_{\alpha}} (1 \! - \! x_j) r_j w_{jb} +  \Big( \! k_b - \sum_{j\in I_{\alpha}} (1 \! - \! x_j) \Big) r_{\hat{j}} w_{\hat{j}b}  \Bigg]
\\
&=
\sum_{i \in N} r_i ( w_{ig} x_i \! + \! w_{ii}^0 v_i^0 ) 
\\ 
&\;\;\;\;\;
- \! \left[\sum_{j\in I_{\alpha}} (1 \! - \! x_j) r_j w_{jb} +  \Big( \! k_b - \sum_{j\in I_{\alpha}} (1 \! - \! x_j) \Big) r_{\hat{j}} w_{\hat{j}b} \right]
\end{align*}
\end{small}
\vspace{-2mm}

\noindent
Comparing this with generic objective function (\ref{eqn:steadystate}) and since it should hold for any $r_i,w_{ig},w_{ib},w_{ii}^0,v_i^0$, it is necessary that the coefficients of {non-zero} values of $r_i w_{ib}$ are the same in both forms of the objective function. This along with the fact that $\forall j \in I_{\alpha}: r_j w_{jb} > 0$ (since $\alpha \geq 0$), gives
$\sum_{j \in I_{\alpha}} y_j = \sum_{j \in I_{\alpha}} (1-x_j)$.
Also if $r_{\hat{j}} w_{\hat{j}b} > 0$, then
$\sum_{j \in P_{\alpha}} y_j = k_b - \sum_{j\in I_{\alpha}} (1  -  x_j)$.
And for all other terms, we have
$\sum_{j \notin J_{\alpha}} y_j = 0$.
Since $\forall j\in N : 0 \leq y_j \leq 1-x_j$, these are equivalent to

\begin{small}
\vspace{-3mm}
\begin{align}
\forall j \in I_{\alpha} \! : y_j = 1 \! - \! x_j \; ,
\forall j \notin J_{\alpha} \! : y_j = 0 \; ,
\sum_{j \in P_{\alpha}} \! y_j = k_b - \! \sum_{j\in I_{\alpha}} (1 \! - \! x_j)
\label{eqn:alloc}
\end{align}
\vspace{-3mm}
\end{small}

To check for the consistency of budget of the bad camp, it is necessary that
$\sum_{j \in I_{\alpha}} y_j \leq k_b$. This gives the constraint
$ \sum_{j \in I_{\alpha}} (1-x_j) \leq k_b$ or equivalently,

\begin{small}
\vspace{-3mm}
\begin{align}
\sum_{j \in I_{\alpha}} x_j \geq |I_{\alpha}| - k_b
\label{eqn:constraint1}
 \end{align}
 \vspace{-2mm}
 \end{small}
 
Also if $r_{\hat{j}} w_{\hat{j}b} > 0$, for the consistency of investment on the nodes in $P_{\alpha}$ (that is, $\forall j \in P_{\alpha} : x_j+y_j \leq 1$), it is necessary that 
$\sum_{j \in P_{\alpha}}  y_j \leq \sum_{j \in P_{\alpha}} (1-x_j)$ or equivalently, $ k_b -  \sum_{j\in I_{\alpha}} (1 -  x_j) \leq \sum_{j \in P_{\alpha}} (1-x_j)$ or equivalently,

\begin{small}
\vspace{-3mm}
\begin{align}
\sum_{j \in I_{\alpha}} x_j + \sum_{j \in P_{\alpha}} x_j \leq |I_{\alpha}| + |P_{\alpha}| - k_b
\label{eqn:constraint2}
 \end{align}
  \vspace{-2mm}
 \end{small}
 
 To check for the consistency of budget of the good camp, it is necessary that $\sum_{j \in I_{\alpha}} x_j \leq k_g$ and $\sum_{j \in I_{\alpha}} x_j + \sum_{j \in P_{\alpha}} x_j \geq 0$.
 These along with Inequalities (\ref{eqn:constraint1}) and (\ref{eqn:constraint2}) give
 $|I_{\alpha}| - k_b \leq k_g$ and $|I_{\alpha}| + |P_{\alpha}| - k_b \geq 0$, or equivalently,
 
 \begin{small}
 \vspace{-3mm}
 \begin{align}
|I_{\alpha}|  \leq k_g + k_b \;\;\;\text{ and }\;\;\; |I_{\alpha}| + |P_{\alpha}| \geq k_b 
 \label{eqn:constraint3}
  \end{align}
   \vspace{-3mm}
  \end{small}
  
  The sets $I_{\alpha}$ and $P_{\alpha}$  depend only on $\hat{j}$.
  So let the set of  $\hat{j}$'s that satisfy the constraints in (\ref{eqn:constraint3}) be denoted by $\tilde{J}$, that is,
  
   \begin{small}
   \vspace{-3mm}
   \begin{align*}
  \tilde{J} = \{  \hat{j} : |I_{\alpha}|  \leq k_g + k_b \text{ and } |I_{\alpha}| + |P_{\alpha}| \geq k_b  \}
    \end{align*}
     \vspace{-2mm}
    \end{small}

The term $\sum_{i \in N} r_i w_{ii}^0 v_i^0 $ being a constant, and substituting $\alpha = r_{\hat{j}} w_{\hat{j}b}$, objective function (\ref{eqn:objfunbudget}) becomes

\vspace{-2mm}
\begin{small}
\begin{align}
\nonumber
\max_{\mathbf{x},\hat{j}}
\sum_{i \in N} r_i w_{ig} x_i  + \sum_{j : r_j w_{jb} \geq r_{\hat{j}} w_{\hat{j}b} } x_j (r_j w_{jb} -r_{\hat{j}} w_{\hat{j}b}) 
\\- \sum_{j : r_j w_{jb} \geq r_{\hat{j}} w_{\hat{j}b} } (r_j w_{jb} -r_{\hat{j}} w_{\hat{j}b})
-r_{\hat{j}} w_{\hat{j}b} k_b 
\nonumber
\\
\nonumber
\Longleftrightarrow
\max_{\hat{j}} \Big[  \max_{\mathbf{x}} 
\sum_{i \in N} x_i (r_i w_{ig} + \max\{r_i w_{ib}-  r_{\hat{j}} w_{\hat{j}b} , 0\})
\\- \sum_{i \in N} \max\{ r_i w_{ib} -r_{\hat{j}} w_{\hat{j}b} , 0\}
-r_{\hat{j}} w_{\hat{j}b} k_b \Big]
\label{eqn:maxj_maxx}
\end{align}
\end{small}
\vspace{-2mm}


Hence the good camp's optimal strategy
 can be obtained by maximizing  (\ref{eqn:maxj_maxx}) with respect to $\mathbf{x}$ and $\hat{j}\in \tilde{J}$, subject to Constraints  
(\ref{eqn:constraint1}) and (\ref{eqn:constraint2}), and $x_i \! \in \! [0,1], \forall {i \in N}$.



\vspace{-2mm}
\subsection*{A Greedy Approach for Determining  Optimal Strategy}

For a given $\hat{j}$, it can be seen from (\ref{eqn:maxj_maxx}) that the optimal strategy of the good camp is to determine $\mathbf{x}$ which maximizes $\sum_{i \in N} x_i (r_i w_{ig} + \max\{r_i w_{ib}-  r_{\hat{j}} w_{\hat{j}b} , 0\})$.
Since Constraint (\ref{eqn:constraint1}) should be satisfied, the minimum total investment by the good camp on nodes belonging to set $I_{\alpha}$ should be $|I_{\alpha}|-k_b$.
Hence it should invest in nodes belonging to $I_{\alpha}$ one at a time (subject to a maximum investment of 1 unit per node)
in decreasing order of values of $(r_i w_{ig} + \max \{r_i w_{ib}-  r_{\hat{j}} w_{\hat{j}b} , 0\})$, until a total investment of $|I_{\alpha}|-k_b$ is made.
Let $\tilde{x}_i$ be the good camp's investment on node $i$ after this step; its remaining budget is $k_g-(|I_{\alpha}|-k_b)$ and the maximum amount that it could henceforth invest on a node $i$ is $1-\tilde{x}_i$ (since each node has an investment capacity of 1 unit).

Now since Constraint (\ref{eqn:constraint2}) should also be satisfied, the maximum total investment by the good camp on nodes belonging to set $I_{\alpha} \cup P_{\alpha}$ should be $|I_{\alpha}|+|P_{\alpha}|-k_b$.
Hence it should now invest in nodes one at a time (maximum investment of $1-\tilde{x}_i$ in node $i$)
in decreasing order of values of $(r_i w_{ig} + \max \{r_i w_{ib}-  r_{\hat{j}} w_{\hat{j}b} , 0\})$ until one of the following occurs: (a) the remaining budget ($k_g-|I_{\alpha}|+k_b$) is exhausted or (b) a node with a negative value of $(r_i w_{ig} + \max \{r_i w_{ib}-  r_{\hat{j}} w_{\hat{j}b} , 0\})$ is reached or (c) the investment made on nodes belonging to $I_{\alpha} \cup P_{\alpha}$ reaches $|I_{\alpha}|+|P_{\alpha}|-k_b$. 
If condition (a) or (b) is met, the so obtained strategy $\mathbf{x}_{\hat{j}}^* = ( x_{\hat{j} i}^* )$ is the optimal $\mathbf{x}$ for the given $\hat{j}$.
However, if condition (c) is met,  good camp should continue investing the remaining available amount on nodes belonging to $N \setminus (I_{\alpha} \cup P_{\alpha})$ one at a time (subject to a maximum investment of 1 unit per node)
in decreasing order of values of $(r_i w_{ig} + \max \{r_i w_{ib}-  r_{\hat{j}} w_{\hat{j}b} , 0\})$.
The so obtained strategy $\mathbf{x}_{\hat{j}}^* = ( x_{\hat{j} i}^* )$ would hence be the optimal $\mathbf{x}$ for the given $\hat{j}$.

The absolute optimal strategy of the good camp can now be computed by iterating over all $\hat{j} \in \tilde{J}$ and taking the one that maximizes (from Expression (\ref{eqn:maxj_maxx}))

\vspace{-3mm}
\begin{small}
\begin{align}
\nonumber
\max_{\hat{j} \in \tilde{J}} 
\sum_{i \in N} x_{\hat{j} i}^* (r_i w_{ig} + \max\{r_i w_{ib}-  r_{\hat{j}} w_{\hat{j}b} , 0\})
\\- \sum_{i \in N} \max\{ r_i w_{ib} -r_{\hat{j}} w_{\hat{j}b} , 0\}
-r_{\hat{j}} w_{\hat{j}b} k_b 
\label{eqn:CCC_final}
\end{align}
\end{small}
\vspace{-2mm}

For the bad camp's optimal strategy, recall that

\vspace{-3mm}
\begin{small}
\begin{align*}
\sum_{i \in N} v_i = \sum_{i \in N} r_i w_{ii}^0 v_i^0 + \sum_{i \in N} r_i w_{ig}x_i  - \sum_{i \in N} r_i w_{ib}y_i
\end{align*}
\end{small}
\vspace{-3mm}

%

Since  $y_i \in [0,1-x_i], \forall {i \in N}$,
the optimal strategy of the bad camp is to invest in nodes one at a time (subject to a maximum investment of $1-x_i$ per node)
in decreasing order of values of $r_i w_{ib}$
until either its budget $k_b$ is exhausted or we reach a node with a negative value of $r_i w_{ib}$.

%

It can also be seen that if $k_g$ and $k_b$ are integers, it is an optimal investment strategy of the good and bad camps to invest one unit or not invest at all in a node. 


\begin{intuition}
Assuming a $\hat{j}$, the strategy of the good camp is to choose nodes with good values of $(r_i w_{ig} + \max\{r_i w_{ib}-  r_{\hat{j}} w_{\hat{j}b} , 0\})$. That is, it chooses nodes with not only good values of $r_i w_{ig}$, but also good values of $r_i w_{ib}$.
This is expected since 
the budget constraint per node allows the good camp (which plays first) to block those nodes on which the bad camp would have preferred to invest. 
Also, 
based on (\ref{eqn:alloc}) and the definitions of $J_{\alpha},I_{\alpha},P_{\alpha}$, 
node $\hat{j}$ can be viewed as a boundary for the bad camp's investment, that is, the bad camp would not invest in any node $i$ such that $r_i w_{ib} < r_{\hat{j}} w_{\hat{j}b}$.
%
\label{int:CCC}
\end{intuition}

\subsection*{Time Complexity of the Greedy Approach}
\label{app:complexity}

For a given $\hat{j}$,
the above greedy approach would require the good camp to select a total of $O(k_g)$ nodes to invest on. This could be done by either 
(a) iteratively choosing a node with the maximum value of $(r_i w_{ig} + \max \{r_i w_{ib}-  r_{\hat{j}} w_{\hat{j}b} , 0\})$ according to the greedy approach or
(b) presorting the nodes as per decreasing values of $(r_i w_{ig} + \max \{r_i w_{ib}-  r_{\hat{j}} w_{\hat{j}b} , 0\})$ and then choosing nodes according to the greedy approach.
The time complexity of (a) would be $O(n k_g)$  and that of (b) would be $O(n \log n + k_g) = O(n \log n)$.
Hence, following (a) would be more efficient if $k_g << \log n$, while (b) would be better if $k_g >> \log n$ (else, asymptotically indifferent between the two).
So its time complexity is $O(n \cdot \min\{k_g , \log n \})$.

Now, the absolute optimal strategy is computed by iterating over all $\hat{j} \in \tilde{J}$. 
Based on (\ref{eqn:alloc}) and the definitions of $J_{\alpha},I_{\alpha},P_{\alpha}$,
node $\hat{j}$ can be viewed as a boundary for the bad camp's investment since it would not invest in any node $i$ such that $r_i w_{ib} < r_{\hat{j}} w_{\hat{j}b}$.
So if the nodes are ordered in decreasing order of $r_i w_{ib}$ values, such a node $\hat{j}$ would be in a position no later than $\lceil k_g+k_b \rceil$ (this limiting case is met if the good camp invests $k_g$ in nodes with the highest values of $r_i w_{ib}$, and then the bad camp invests $k_b$ in nodes with the highest values of $r_i w_{ib}$ which are not exhaustively invested on by the good camp).
So the possible $\hat{j}$'s are at most the top $\lceil k_g+k_b \rceil$ nodes in the sorted order of $r_i w_{ib}$, hence determining the possible $\hat{j}$'s requires $O(\min\{n k_b,n \log n\})$ time and the number of possible $\hat{j}$'s is $O(k_g+k_b)$.

Since  the absolute optimal strategy of the good camp is computed by iterating over all $\hat{j} \in \tilde{J}$, the overall time complexity of the greedy approach is 
$O(\min\{n k_b,n \log n\} + (k_g+k_b)\cdot n \cdot \min\{k_g , \log n \} )$, which is same as 
$O(n (k_g+k_b) \cdot \min\{k_g,\log n\})$.

\vspace{-3mm}
\subsection*{Maxmin versus Minmax Values}

Here, we compare the maxmin and minmax values of the game in the fundamental setting (Section~\ref{sec:maxmin_basic}) with that in the common coupled constraints setting.
%
%
%
The introduction of the total budget constraints per node disturbs the equality between maxmin and minmax, as we show now.
Let $(\mathbf{x'},\mathbf{y'})$ be an optimal maxmin strategy profile in (\ref{eqn:maxmin_C}).
Adding the constraint $\mathbf{0} \leq \mathbf{y} \leq \mathbf{1}-\mathbf{x}$ restricts the set of feasible strategies for the bad camp, and this set of feasible strategies and hence its optimal strategy now depends on $\mathbf{x}$. 
So we have

\vspace{-3mm}
\begin{small}
\begin{align*}
\max_{\mathbf{0} \leq \mathbf{x}  \leq \mathbf{1}} \; \min_{\mathbf{0} \leq \mathbf{y} \leq \mathbf{1}} \; \sum_{i \in N} v_i  \leq  \max_{\mathbf{0} \leq \mathbf{x}  \leq \mathbf{1}} \; \min_{\mathbf{0} \leq \mathbf{y} \leq \mathbf{1}-\mathbf{x}} \;  \sum_{i \in N} v_i 
\end{align*}
\end{small}
\vspace{-3mm}

\noindent
Similarly,

\vspace{-3mm}
\begin{small}
\begin{align*}
\min_{\mathbf{0} \leq \mathbf{y} \leq \mathbf{1}} \; \max_{\mathbf{0} \leq \mathbf{x}  \leq \mathbf{1}} \; \sum_{i \in N} v_i  \geq  \min_{\mathbf{0} \leq \mathbf{y}  \leq \mathbf{1}} \; \max_{\mathbf{0} \leq \mathbf{x} \leq \mathbf{1}-\mathbf{y}} \;  \sum_{i \in N} v_i
\end{align*}
\end{small}
\vspace{-3mm}

%
%
\noindent
These two inequalities, along with  (\ref{eqn:maxmin_C}), 
result in the following inequality,

\vspace{-3mm}
\begin{small}
\begin{align}
\max_{\mathbf{0} \leq \mathbf{x}  \leq \mathbf{1}} \; \min_{\mathbf{0} \leq \mathbf{y} \leq \mathbf{1}-\mathbf{x}} \;  \sum_{i \in N} v_i  \geq \min_{\mathbf{0} \leq \mathbf{y}  \leq \mathbf{1}} \; \max_{\mathbf{0} \leq \mathbf{x} \leq \mathbf{1}-\mathbf{y}} \;  \sum_{i \in N} v_i
\label{eqn:maxmin_totalC}
\end{align}
\end{small}
\vspace{-3mm}

This  result, which is contrary to general functions (for which maxmin is less than or equal to minmax), has also been derived in \cite{altman2009constrained}. 
In our problem, this is a direct consequence of the first mover advantage, which restricts the strategy set of the second mover.
In the maxmin case as analyzed earlier, the good camp invests in nodes with good values of $(r_i w_{ig} + \max \{r_i w_{ib}-  r_{\hat{j}} w_{\hat{j}b} , 0\})$ (assuming a $\hat{j}$). That is, it is likely to invest in nodes with good values of $r_i w_{ib}$ which are the preferred investees of the bad camp. Owing to total investment limit per node, the bad camp may not be able to invest in its preferred nodes (those with high values of $r_i w_{ib}$).
It can be shown on similar lines that, in the minmax case where the bad camp plays first, it would play symmetrically opposite, thus limiting the ability of good camp to invest in nodes with good values of $r_i w_{ig}$.

\begin{remark}[CCC under simultaneous play]
If instead of sequential play, the two camps play simultaneously under  CCC setting, it can be seen that the uniqueness of Nash equilibrium is not guaranteed (an immediate example is that the maxmin and minmax values could be different).
We address this precise question in \cite{dhamal2018resource} for general resource allocation games, albeit assuming strict preference ordering of the camps over nodes.
Therein, however, we do not derive an equilibrium strategy profile since there could be infinite number of Nash equilibria.
In order to derive a precise strategy profile which would be of practical and conceptual interest, we considered a sequential play in this paper and computed the subgame perfect Nash equilibrium.
For the case where the Nash equilibrium is unique, the sequence of play would not matter (that is, the maxmin and minmax values would be the same), 
and our derived subgame perfect Nash equilibrium would  be the unique pure strategy Nash equilibrium.
\label{rem:RAPG}
\end{remark}

\vspace{-3mm}
\section{Decision under Uncertainty}
\label{sec:robustness}

In this section, 
we  look at another sequential play setting which considers the possibility that the good camp, which plays first, may not have complete or exact information regarding the extrinsic weights, namely, $w_{ig}, w_{ib}, w_{ii}^0$. The bad camp, however, which plays second, has perfect information regarding the values of these parameters, and hence it is known that it would act optimally. Forecasting the optimal strategy of the bad camp, the good camp aims at choosing a robust strategy which would give it a good payoff even in the worst case.

Let $\mathbf{u}=$ \begin{scriptsize}$\left( \begin{array}{c} \mathbf{u_1} \\ \vdots \\ \mathbf{u_n} \\ \end{array} \right)$\end{scriptsize},
where $\mathbf{u_i}=$ \begin{scriptsize}$\left( \begin{array}{c} w_{ii}^0 \\ w_{ig} \\ w_{ib} \\ \end{array} \right)$\end{scriptsize}.
That is, $\mathbf{u}= \left( w_{11}^0 \; w_{1g} \; w_{1b} \; \cdots  \; w_{nn}^0  \; w_{ng} \; w_{nb} \right)^T$.

Let $U$ be a polytope defined by $E\mathbf{u} \leq f$ (that is, $\mathbf{u}\in U$). It can be viewed as the uncertainty set, which in this case, is a convex set. The polytope would be based on the application at hand and could be deduced from observations, predictions, etc. We use the framework of robust optimization \cite{ben2009robust} for solving this problem. 

For the purpose of this section,
let us assume that all the elements of $\mathbf{u}$ are non-negative. This is to ensure bounded values of the parameters. For instance, if we have a constraint in the linear program such as $w_{ig}+w_{ib}+w_{ii}^0 \leq \theta_i$, the individual values $w_{ig},w_{ib},w_{ii}^0$ can be unbounded. So for this current setting (under uncertain parameters), we will assume $w_{ig},w_{ib},w_{ii}^0 \geq 0, \forall {i \in N}$.

Since the good camp aims to optimize in the worst case of parameter values, while the bad camp has knowledge of these values with certainty, the optimization problem is

\vspace{-3mm}
\begin{small}
\begin{gather*}
\max_{\substack{\sum_i x_i \leq k_g \\ x_i \geq 0}} \;
\min_{E \mathbf{u} \leq f} \;
\min_{\substack{\sum_i y_i \leq k_b \\ y_i \geq 0}} \;
  \sum_{{i \in N}} r_i w_{ig} x_i + \sum_{i \in N} r_i w_{ii}^0 v_i^0 - \sum_{i \in N} r_i w_{ib} y_i
\end{gather*}
\end{small}
\vspace{-1mm}

If $\max_{j \in N} r_j w_{jb}>0$, that is, the bad camp has at least one feasible node to invest on, then we have $\sum_{i \in N} r_i w_{ib} y_i = k_b \max_{j \in N} r_j w_{jb}$, else we have $\sum_{i \in N} r_i w_{ib} y_i =0$.
For arriving at a concise solution, let $d$ be a dummy node such that $r_d w_{db}=0$. So now we have $\sum_{i \in N} r_i w_{ib} y_i = k_b \max_{j \in N \cup \{d\}} r_j w_{jb}$.
The optimization problem thus is

\vspace{-3mm}
\begin{small}
\begin{gather*}
\max_{\substack{\sum x_i \leq k_g \\ x_i \geq 0}} 
 \min_{\mathbf{u}} \sum_{{i \in N}} r_i w_{ig} x_i + \sum_{i \in N} r_i w_{ii}^0 v_i^0 - k_b \max_{j \in N \cup \{d\}} r_j w_{jb}
 \\ \text{s.t. }
 E \mathbf{u} \leq f
\end{gather*}
\end{small}
\vspace{-3mm}


\begin{figure*} 
\hspace{-7mm}
\begin{tabular}{|c||c||c|}
\hline 
 & & \vspace{-3mm} \\
\includegraphics[width=0.33\textwidth]{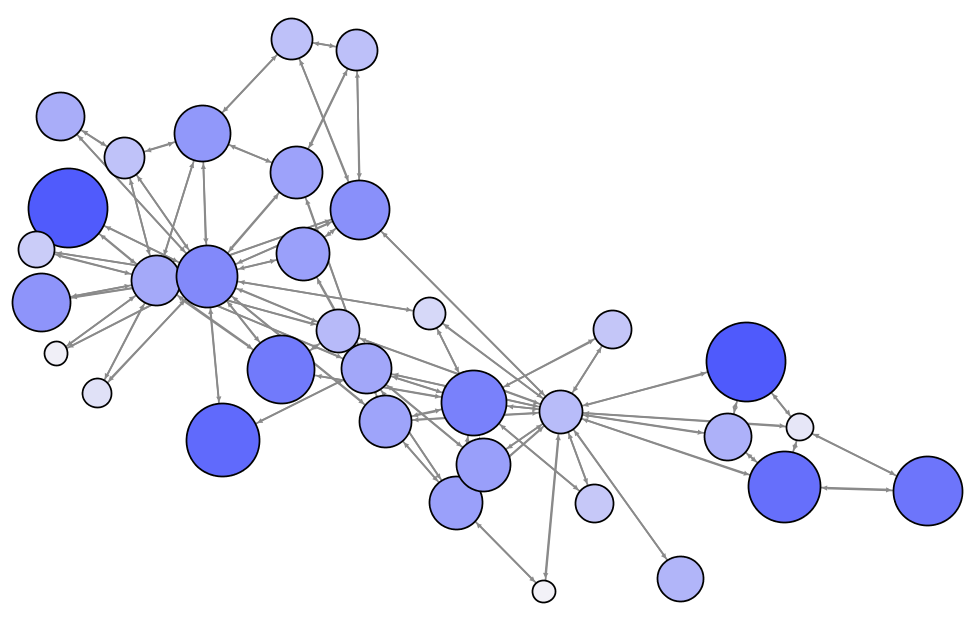} 
&
\includegraphics[width=0.33\textwidth]{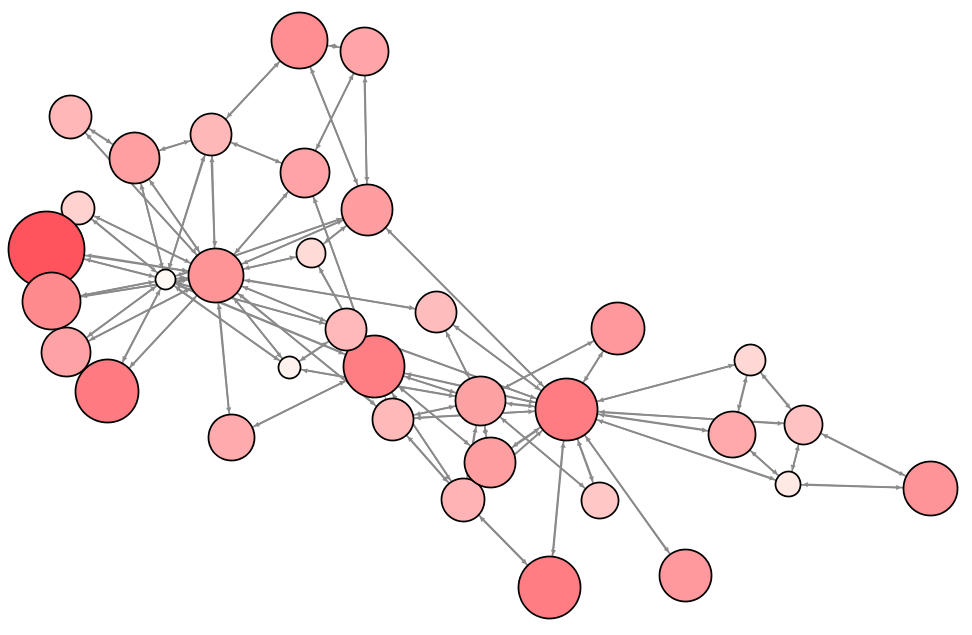}
&
\includegraphics[width=0.33\textwidth]{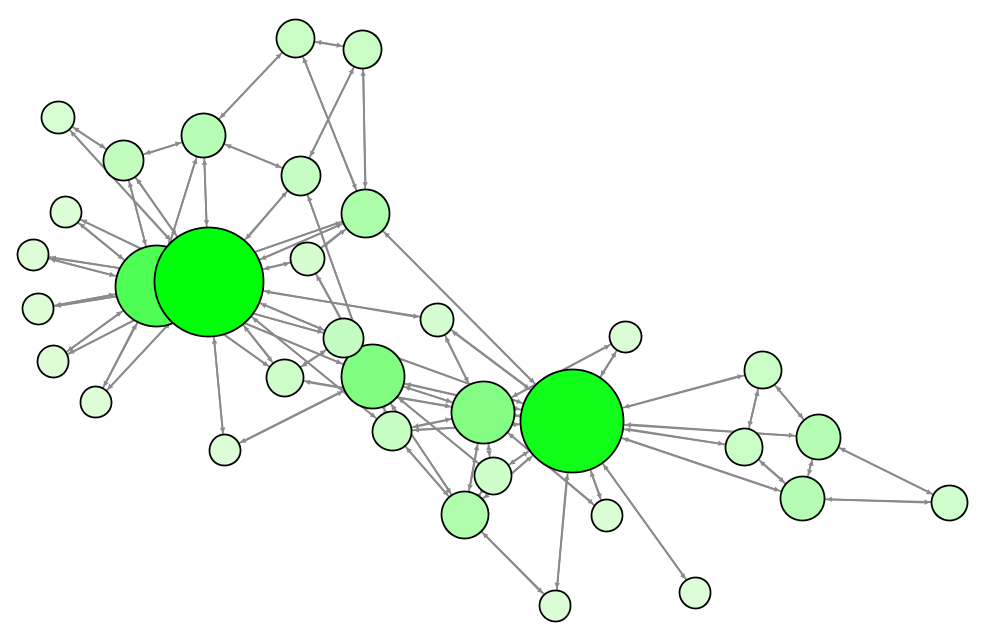} 
\\ & & \vspace{-3mm} \\ 
(a)
The value of $w_{ig}$
&
(b)
The value of $w_{ib}$
&
(d)
The value of $r_i$
\\
\hline
\end{tabular}
\caption{Details about the Karate club dataset used in our simulations. The size and color saturation of a node $i$ represent the value of the parameter.
}
\label{fig:karate_details}
\vspace{-2mm}
\end{figure*}

\begin{figure*} 
\hspace{-7mm}
\begin{tabular}{|c||c||c|}
\hline 
 & & \vspace{-3mm} \\
\includegraphics[width=0.33\textwidth]{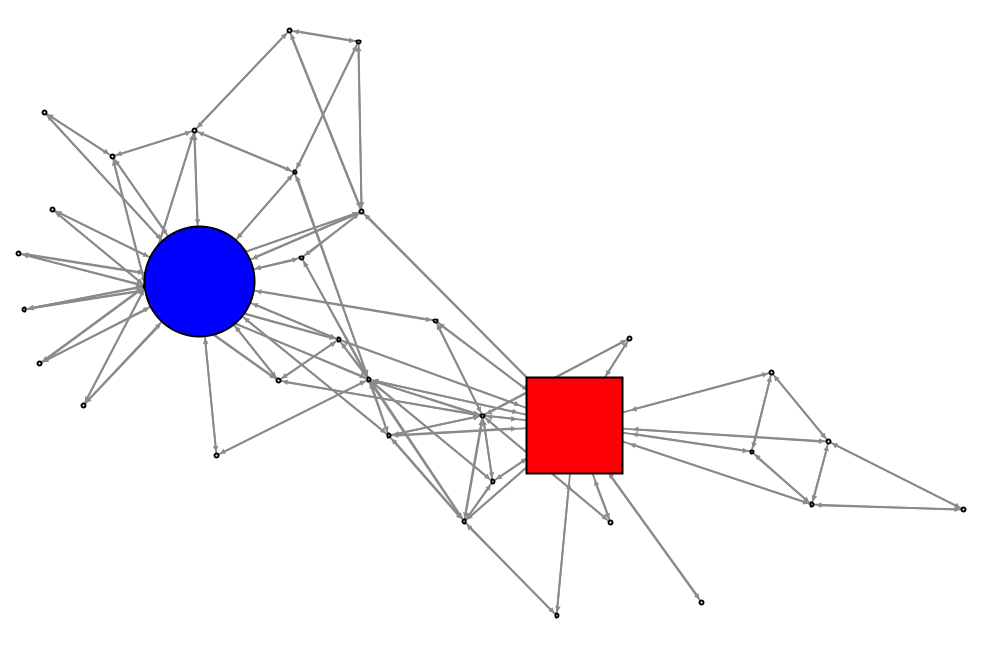} 
&
\includegraphics[width=0.33\textwidth]{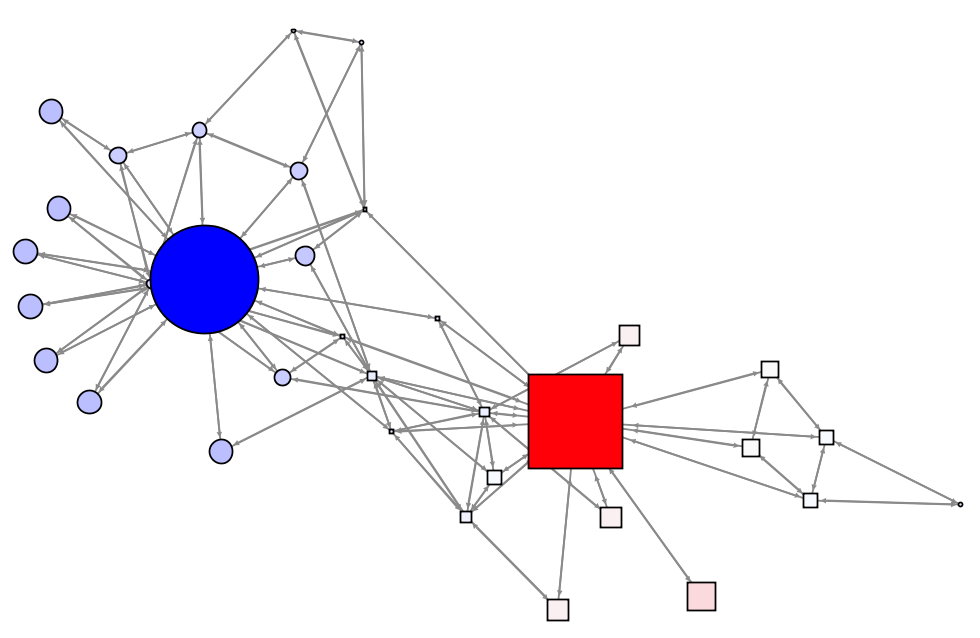}
&
\includegraphics[width=0.33\textwidth]{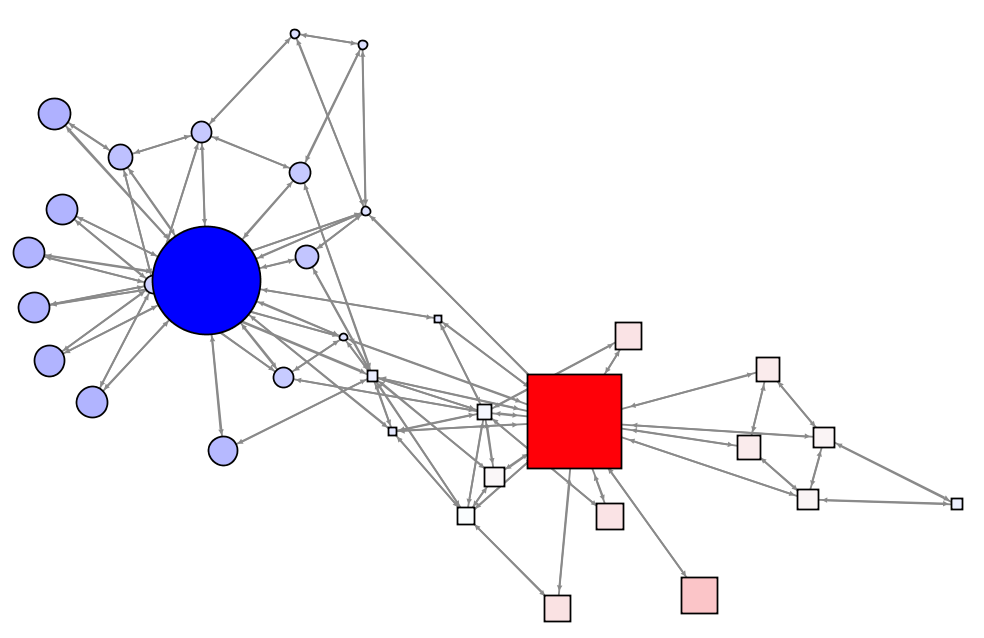} 
\\ & & \vspace{-3mm} \\ 
(a)
$\tau=1$ ($\max_{\mathbf{x}} \! \min_{\mathbf{y}} \! \sum_{i} v_i^{\langle \tau \rangle} \!=\! -0.0342$)
&
(b)
$\tau=2$ ($\max_{\mathbf{x}} \! \min_{\mathbf{y}} \! \sum_{i} v_i^{\langle \tau \rangle} \!=\! 0.2058$)
&
(d)
$\tau=4$ ($\max_{\mathbf{x}} \! \min_{\mathbf{y}} \! \sum_{i} v_i^{\langle \tau \rangle} \!=\! 0.1566$)
\\
\hline
\end{tabular}
\caption{Progression of opinion values for  Karate club dataset with $k_g\!=\!k_b\!=\!5$ under linear influence function with unbounded investment per node.
Opinions of nodes are signified by their shapes, sizes, color saturations (circular blue nodes: positive opinions, square red nodes: negative opinions)
}
\label{fig:karate_prog_linear}
\vspace{-3mm}
\end{figure*}

Note that there are in general $n+1$ possibilities for $\max_{j \in N \cup \{d\}} r_j w_{jb}$.
We could write a linear program for each possibility of $i_0 = \argmax_{j \in N \cup \{d\}} r_j w_{jb}$.
For a fixed $i_0\in N \cup \{d\}$, the inner term is

\vspace{-3mm}
\begin{small}
\begin{align*}
 \min_{\mathbf{u}} \sum_{{i \in N}} r_i w_{ig} x_i + \sum_{i \in N} r_i w_{ii}^0 v_i^0 - k_b  r_{i_0} w_{i_0 b}
 &
\\ \text{s.t. }
E \mathbf{u} \leq f
& \;\;\; \leftarrow \alpha_{i_0}
\\
\forall {i \in N}: \;
r_i w_{ib} \leq r_{i_0} w_{i_0 b}
& \;\;\; \leftarrow \beta_{i_0 i}
\end{align*}
\end{small}
\vspace{-3mm}

For this problem to be feasible, the constraint set should be non-empty. Let $N_f$ be the subset of $N \cup \{d\}$ consisting of nodes $i_0$ such that the constraint set satisfying $E \mathbf{u} \leq f$ and $\forall {i \in N}: 
r_i w_{ib} \leq r_{i_0} w_{i_0 b}$ is non-empty.
Its dual is:

\vspace{-3mm}
\begin{small}
\begin{align*}
\max -\alpha_{i_0}^T f
\\
\text{ s.t. }\;\;
\forall i\in N \setminus \{i_0\}: \;
-\alpha_{i_0}^T E_{ib} - \beta_{i_0 i} r_i \leq 0
& \;\;\; \leftarrow w_{ib}
\\
\text{if } i_0 \neq d:\;
-\alpha_{i_0}^T E_{i_0 b} + \sum_{i\neq i_0} \beta_{i_0 i} r_{i_0} \leq -k_b r_{i_0}
& \;\;\; \leftarrow w_{i_0 b}
\\
\forall {i \in N}: \;
-\alpha_{i_0}^T E_{ig} \leq r_i x_i
& \;\;\; \leftarrow w_{ig}
\nonumber
\\
\forall {i \in N}: \;
-\alpha_{i_0}^T E_{ii} \leq r_i v_i^0
& \;\;\; \leftarrow w_{ii}^0
\nonumber
\\
\alpha_{i_0} \geq \mathbf{0}
\nonumber
\\
\forall {i \in N}: \;\beta_{i_0 i} \geq 0
\end{align*}
\end{small}
\vspace{-3mm}

\noindent
We need to find a common $\mathbf{x}$ for all possibilities of \mbox{$i_0 \in N_f$}. So we have a constraint on the value of the dual, say $\rho$, namely, $\rho \leq -\alpha_{i_0}^T f, \forall i_0$.
We hence obtain a solution to the 
optimization problem by solving the following  LP.

~\vspace{-3mm}
\begin{small}
\begin{gather*}
\max \rho
\\
\text{s.t. }
\sum_{i \in N} x_i \leq k_g
\\
\forall {i \in N}:\; x_i \geq 0
\end{gather*}
\begin{empheq}[left= \forall i_0 \in N_f: \empheqlbrace]{align*}
\rho +\alpha_{i_0}^T f\leq 0
&
\\ 
\forall i\in N \setminus \{i_0\}: \;
-\alpha_{i_0}^T E_{ib} - \beta_{i_0 i} r_i \leq 0
& 
\\
\text{if } i_0 \neq d:\;
-\alpha_{i_0}^T E_{i_0 b} + \sum_{i\neq i_0} \beta_{i_0 i} r_{i_0} \leq -k_b r_{i_0}
& 
\\
\forall {i \in N}: \;
-\alpha_{i_0}^T E_{ig} - r_i x_i \leq 0
& 
\nonumber
\\
\forall {i \in N}: \;
-\alpha_{i_0}^T E_{ii} \leq r_i v_i^0
& 
\nonumber
\\
\alpha_{i_0} \geq \mathbf{0}
\nonumber
\\
\forall {i \in N}: \;\beta_{i_0 i} \geq 0
\nonumber
\end{empheq}
\end{small}

We solve the above LP for a specific example in our simulation study, so as to derive insights on the effect of uncertainty on the optimal strategy of the good camp.



\vspace{-3mm}
\section{Simulations and Results}
\label{sec:ODSN_sim}

Throughout this paper, we analytically derived the optimal investment strategies of competing camps in a social network for driving the opinion of the population in their favor. We hence presented either closed-form expressions or algorithms with polynomial running time. 
With the aim of determining implications of the analytically derived results on real-world networks and obtaining further insights, we conducted a simulation study on two popular network datasets. In this section, we present the setup and observed results, and provide insights behind them.

\vspace{-3mm}
\subsection{Simulation Setup}

%
We consider an academic collaboration network obtained from co-authorships in the ``High Energy Physics - Theory'' papers published on the e-print arXiv from 1991 to 2003. It contains 15,233 nodes and 31,376 links among them, and is popularly denoted as NetHEPT. This network exhibits many structural features of large-scale social networks  and is widely used for experimental justifications, for example, in \cite{kempe2003maximizing, chen2009efficient, chen2010scalable}.
For the purpose of graphical illustration, we use the popular Zachary's Karate club dataset  consisting of 34 nodes and 78 links among them \cite{zachary1977information}. 


Our analyses throughout the paper are valid for any distribution of edge weights satisfying the general constraints in Section \ref{sec:extended_model}.
However, in order to concretize our simulation study, we need to consider a particular  distribution of edge weights.
Proposition \ref{prop:WC} showed that some popular models of distributing edge weight in a graph would result in random strategy being optimal, and so are not suited for our simulations.
Hence in order to transform an undirected unweighted network dataset into a weighted directed one for our simulations,
we consider that
for any node $i$ (having  $d_i$ number of connections), the tuple $(w_{ii}^0 , w_{ig} , w_{ib})$ is randomly generated such that

\vspace{-3mm}
\begin{small}
\begin{gather*}
\forall {i \in N}:\; w_{ii}^0 + w_{ig} + w_{ib} = 0.5, \text{ and}
\\
w_{ij} = \frac{1\!-\!(w_{ii}^0 + w_{ig} + w_{ib})}{d_i}, \text{ if there is an edge between $i$ and $j$}.
\end{gather*}
\end{small}
\vspace{-3mm}

A primary reasoning for considering $w_{ii}^0+w_{ig}+w_{ib}=0.5$ is to have a natural first guess that nodes give 
equal weightage to intra-network influencing factors ($\{w_{ij}\}_{j\in N}$) and extra-network influencing factors ($w_{ii}^0,w_{ig},w_{ib}$).
We provide results for the extreme cases in Appendix~\ref{app:sim}, namely, when the value of $w_{ii}^0+w_{ig}+w_{ib}$ is $0.1$ or $0.9$ (with the values of the individual parameters scaled proportionally).
We also  highlight some key effects of this value on the obtained results, throughout this section.


Figure \ref{fig:karate_details} presents the values of the parameters used for our experimentation on the Karate club dataset, considering our weight distribution with $w_{ii}^0 + w_{ig} + w_{ib} = 0.5$. The size and color saturation of a node $i$ represent the value of the parameter mentioned in the corresponding caption (bigger size and higher saturation implies higher value).
Unless otherwise specified, we consider $k_g = k_b = 5$ for this dataset.
Also, unless otherwise specified, we start with an unbiased network, that is, $v_i^0=0, \forall {i \in N}$.

\begin{table*} 
\caption{Results for Karate club and NetHEPT datasets}
\vspace{-2mm}
\label{tab:karate_hep}
\centering
\begin{tabular}{|c|c|c|c|c|c|c|c|c|}
\hline 
\multicolumn{2}{|c|}{{Setting}} &  \multirow{3}{*}{Section} & \multicolumn{3}{c|}{\T \B Karate club} & \multicolumn{3}{c|}{\T \B NetHEPT}
\\ \cline{1-2}\cline{4-9} \T \B
Aspect & Case & & $k_g$ & $k_b$ & Results  & $k_g$ & $k_b$ & Results
\\ \hline \T \B
\multirow{4}{*}{Fundamental} & \multirow{3}{*}{Unbounded} & \multirow{3}{*}{\ref{sec:fundamental}} & 5 & 5 & $\max_{\mathbf{x}} \min_{\mathbf{y}} \sum_{i\in N} v_i = 0.1564$ 
& 100 & 100 & $\max_{\mathbf{x}} \min_{\mathbf{y}} \sum_{i\in N} v_i = 73.2539$ 
\\ \cline{4-9} \T \B
&  & & 10 & 5 & $\max_{\mathbf{x}} \min_{\mathbf{y}} \sum_{i\in N} v_i = 5.8811$ 
& 200 & 100 & $\max_{\mathbf{x}} \min_{\mathbf{y}} \sum_{i\in N} v_i = 347.0770$ 
\\ \cline{2-9} \T \B
& Bounded & \ref{sec:fundamental_bounded} & 5 & 5 & $\max_{\mathbf{x}} \min_{\mathbf{y}} \sum_{i\in N} v_i = -0.0538$ 
& 100 & 100 & $\max_{\mathbf{x}} \min_{\mathbf{y}} \sum_{i\in N} v_i = 2.8513$ 
\\ \hline \T \B 
\multirow{3}{*}{Adversary} &
Unbounded & \ref{sec:minmax_investment_unbounded} & 5 & - & $\max_{\mathbf{x}} \min_{\mathbf{y}} \sum_{i\in N} y_i = 5.1404$ 
& 100 & - & $\max_{\mathbf{x}} \min_{\mathbf{y}} \sum_{i\in N} y_i = 136.5231$ 
\\ \cline{2-9} \T \B
& Bounded & \ref{sec:minmax_investment_bounded} & 5 & - & $\max_{\mathbf{x}} \min_{\mathbf{y}} \sum_{i\in N} y_i = 4.8936$ 
& 100 & - & $\max_{\mathbf{x}} \min_{\mathbf{y}} \sum_{i\in N} y_i = 102.7266$ 
\\ \hline \T \B
\multirow{6}{*}{Concave ($t=2$)} &
\multirow{3}{*}{Unbounded} &
\multirow{3}{*}{\ref{sec:concave_unbounded}} & \multirow{1}{*}{5} & \multirow{1}{*}{5} &  $\max_{\mathbf{x}} \min_{\mathbf{y}} \sum_{i\in N} v_i = 0.4581$ 
& \multirow{1}{*}{100} & \multirow{1}{*}{100} &  $\max_{\mathbf{x}} \min_{\mathbf{y}} \sum_{i\in N} v_i = -0.8446$ 
\\ \cline{4-9} \T \B
&  &  & 20 & 20 &  $\max_{\mathbf{x}} \min_{\mathbf{y}} \sum_{i\in N} v_i = 0.9163$ 
& 400 &  400 &  $\max_{\mathbf{x}} \min_{\mathbf{y}} \sum_{i\in N} v_i = -1.6892$ 
\\ \cline{2-9} \T \B
& \multirow{3}{*}{Bounded} & \multirow{3}{*}{\ref{sec:concave_bounded}} & \multirow{1}{*}{5} & \multirow{1}{*}{5} & $\max_{\mathbf{x}} \min_{\mathbf{y}} \sum_{i\in N} v_i = 0.4612$ 
& \multirow{1}{*}{100} & \multirow{1}{*}{100} & $\max_{\mathbf{x}} \min_{\mathbf{y}} \sum_{i\in N} v_i = -0.8446$ 
\\ \cline{4-9} \T \B
&  &  & 20 & 20 &  $\max_{\mathbf{x}} \min_{\mathbf{y}} \sum_{i\in N} v_i = 1.2653$ 
& 400 &  400 &  $\max_{\mathbf{x}} \min_{\mathbf{y}} \sum_{i\in N} v_i = -1.7117$ 
\\ \hline \T \B
\multirow{6}{*}{Concave ($t=10$)} &
\multirow{3}{*}{Unbounded} &
\multirow{3}{*}{\ref{sec:concave_unbounded}} & \multirow{1}{*}{5} & \multirow{1}{*}{5} &  $\max_{\mathbf{x}} \min_{\mathbf{y}} \sum_{i\in N} v_i = 1.1180$ 
& \multirow{1}{*}{100} & \multirow{1}{*}{100} & $\max_{\mathbf{x}} \min_{\mathbf{y}} \sum_{i\in N} v_i = -12.4212$ 
\\ \cline{4-9} \T \B
&  &  & 20 & 20 &  $\max_{\mathbf{x}} \min_{\mathbf{y}} \sum_{i\in N} v_i = 1.2842$  
& 400 & 400 &  $\max_{\mathbf{x}} \min_{\mathbf{y}} \sum_{i\in N} v_i = -14.2682$ 
\\ \cline{2-9} \T \B
& \multirow{3}{*}{Bounded} & \multirow{3}{*}{\ref{sec:concave_bounded}} & \multirow{1}{*}{5} & \multirow{1}{*}{5} & $\max_{\mathbf{x}} \min_{\mathbf{y}} \sum_{i\in N} v_i =1.1180 $ 
& \multirow{1}{*}{100} & \multirow{1}{*}{100} &  $\max_{\mathbf{x}} \min_{\mathbf{y}} \sum_{i\in N} v_i =-12.4212$ 
\\ \cline{4-9} \T \B
&  &  & 20 & 20 &  $\max_{\mathbf{x}} \min_{\mathbf{y}} \sum_{i\in N} v_i = 1.3104$ 
& 400 & 400 &  $\max_{\mathbf{x}} \min_{\mathbf{y}} \sum_{i\in N} v_i = -14.2682$ 
 \\ \hline \T \B
 \multirow{3}{*}{CCC} & \multirow{3}{*}{Bounded} &
\multirow{3}{*}{\ref{sec:CCC}} & \multirow{3}{*}{5} & \multirow{3}{*}{5} & $\max_{\mathbf{x}} \min_{\mathbf{y}} \sum_{i\in N} v_i = 1.5399$ 
& \multirow{3}{*}{100} & \multirow{3}{*}{100} & $\max_{\mathbf{x}} \min_{\mathbf{y}} \sum_{i\in N} v_i = 7.4843$ 
\\  \T \B
 &  &  & & & $\min_{\mathbf{y}} \max_{\mathbf{x}} \sum_{i\in N} v_i = -0.8900$ 
 & & & $\min_{\mathbf{y}} \max_{\mathbf{x}} \sum_{i\in N} v_i = -3.6795$ 
\\ \hline
\end{tabular}
\vspace{-3mm}
\end{table*}

\subsubsection*{Progression with time}

Throughout this paper, our analyses were based on the steady state opinion values. However for the purpose of completion, we now provide a brief note on the progression of opinion values with time, which occurs according to our update rule given by (\ref{eqn:update_rule}).

Figure \ref{fig:karate_prog_linear}
illustrates the progression of opinion values with time under the linear influence function and unbounded investment per node,
where $\sum_{i\in N} v_i^{\langle \tau \rangle}$ is the sum of opinion values of nodes in time step $\tau$.
The network starts with $v_i^0=0, \forall i \in N$, and then at $\tau=1$, the good and bad camps invest their entire budgets on their respective target nodes having maximum values of $r_i w_{ig}$ and $r_i w_{ib}$, respectively. Hence the opinion values of these nodes change to being highly positive and negative respectively, while other nodes still hold an opinion value of $0$ (Figure~\ref{fig:karate_prog_linear}(a)). At $\tau=2$, nodes which are directly connected to these target nodes, update their opinions; as seen from Figure~\ref{fig:karate_prog_linear}(b), nodes in the left region hold positive opinions while the ones in the right region hold negative opinions. Few nodes like the ones on the top, center, and extreme right regions, still hold an opinion value of $0$. By $\tau=3$, all nodes hold a non-zero opinion value and at $\tau=4$, the individual opinion values and hence the sum of opinion values almost reach the convergent value. The sum of opinion values at $\tau=4$ is $0.1566$ (Figure~\ref{fig:karate_prog_linear}(c)), while the convergent sum is $0.1564$ (Table~\ref{tab:karate_hep}).
In general, assuming the threshold of convergence to be $10^{-4}$, the convergence is reached in 8-10 time steps
for the Karate club dataset, and in 12-15 time steps for the NetHEPT dataset.

An important insight is that any significant changes in the opinion value of a node occur in the earlier time steps.
For instance, a node which is geodesically closer to the target nodes receive influence from them in the earlier time steps, and also the influence is strong since the entries of the substochastic weight matrix $\mathbf{w}^{\tau}$ are significantly higher for lower values of $\tau$. Hence, owing to the substochastic nature of $\mathbf{w}^{\tau}$, the change in a node's opinion value is insignificant at a later time step.
%
%
We could also observe that the rate of convergence depends on the investment strategies, for instance, the convergence is almost immediate under the concave influence function setting where the investment is already distributed over nodes (see \mbox{Figures~\ref{fig:karate_prog_concave_2}-\ref{fig:karate_prog_concave_10}} in Appendix~\ref{app:sim}); since this investment is made in the earliest time step, it plays a significant role in determining the nodes' convergent opinion values.
Also, it is usually observed that the individual opinion values as well as sum of opinion values are, more often than not, monotone increasing or decreasing with time (see \mbox{Figures~\ref{fig:karate_prog_concave_2}-\ref{fig:karate_prog_minmax}} in Appendix~\ref{app:sim}).

\begin{figure*}
\hspace{-7mm}
\begin{tabular}{|c||c||c|}
\hline &~ &~ \vspace{-3mm} \\
\includegraphics[width=0.32\textwidth]{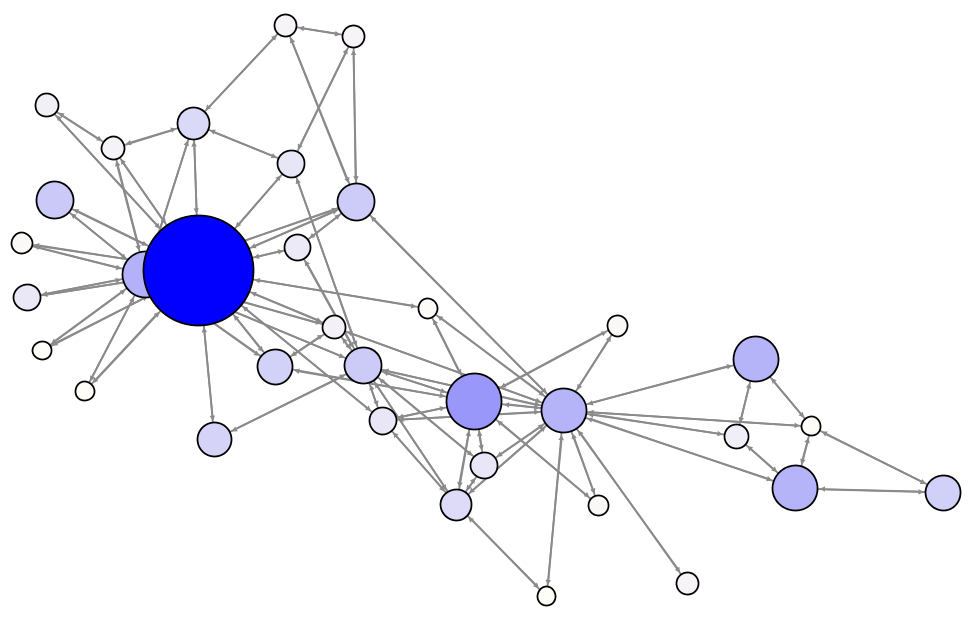} 
&
\includegraphics[width=0.32\textwidth]{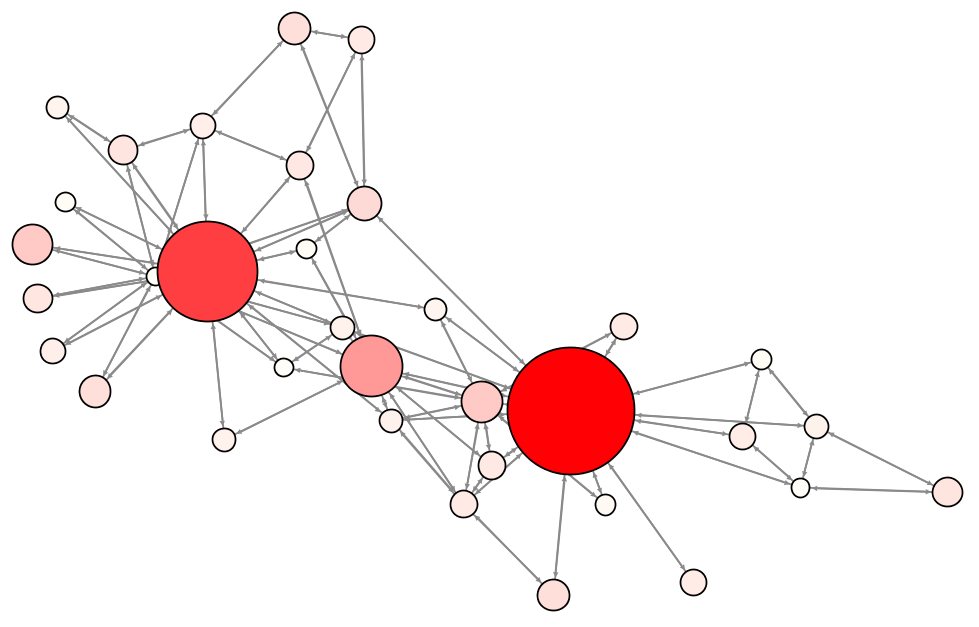}
&
\includegraphics[width=0.32\textwidth]{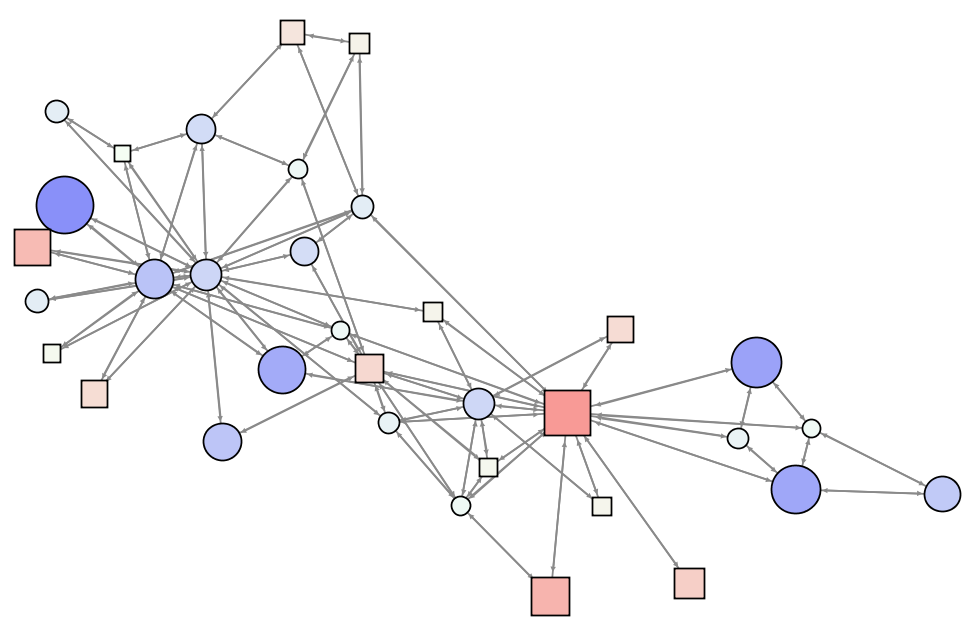} 
\\
&
&
(c) Final opinion values of nodes signified 
\\
(a) Investment made by  good camp 
&
(b) Investment made by  bad camp 
&
$ \! \! \! $ by  shapes, sizes,  color saturations for $t=2$ $ \! \! \! $
\\ 
on nodes signified by their sizes
&
on nodes signified by their sizes 
&
  (Circular blue nodes: positive opinions, 
\\
 and color saturations for $t=2$
 &
  and color saturations for $t=2$
  &
    square red nodes: negative opinions)
  \\
\hline
\hline &~ & \vspace{-3mm}\\
\includegraphics[width=0.32\textwidth]{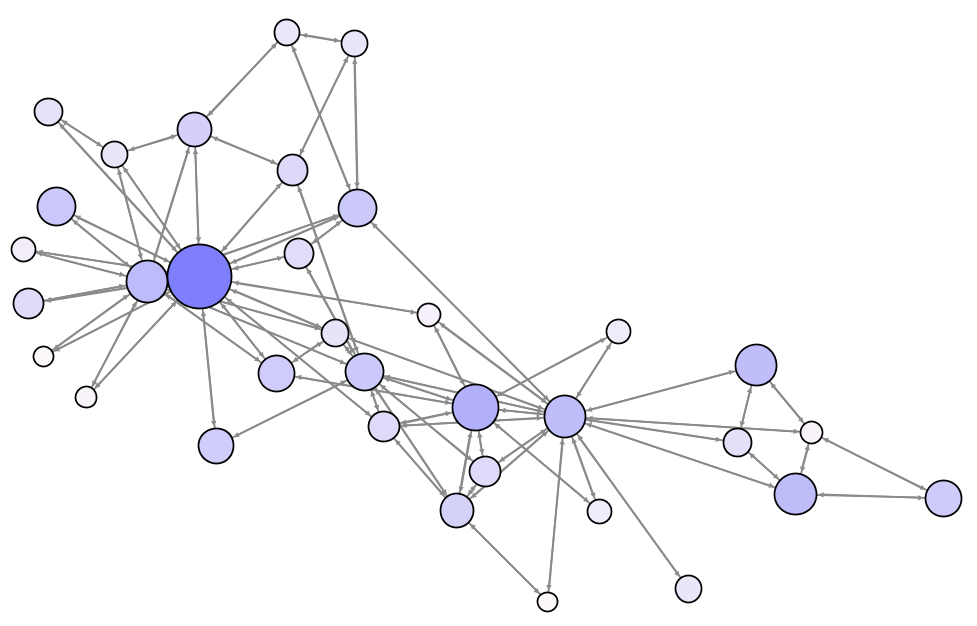} 
&
\includegraphics[width=0.32\textwidth]{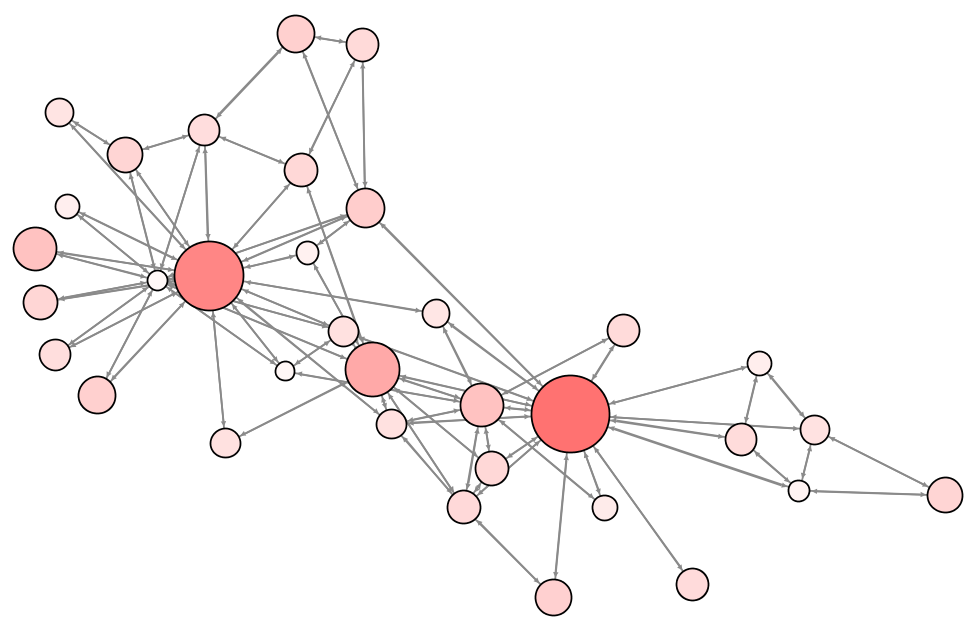}
&
\includegraphics[width=0.32\textwidth]{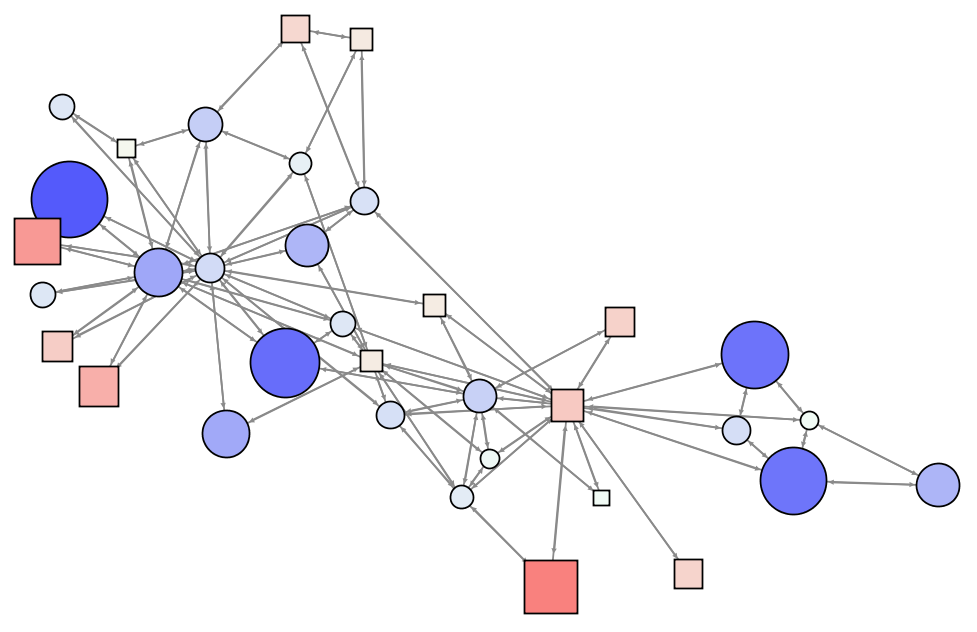}
\\
&&
(f) Final opinion values of nodes signified
\\
(d) Investment made by  good camp 
&
(e) Investment made by  bad camp 
&
$ \! \! \! $ by shapes, sizes, color saturations for $t=10$ $ \! \! \! $
\\ 
on nodes signified by their sizes 
&
on nodes signified by their sizes 
&
  (Circular blue nodes: positive opinions, 
\\
and color saturations for $t=10$
&
        and color saturations for $t=10$
        &
            square red nodes: negative opinions)
        \\
\hline
\end{tabular}
\caption{Simulation results for the Karate club dataset with $k_g=k_b=5$ when the influence function is concave (Setting \ref{sec:concave_unbounded})
}
\label{fig:concave}
\vspace{-3mm}
\end{figure*}

\vspace{-2mm}
\subsection{Simulation Results}

Table \ref{tab:karate_hep} presents the quantitative results of our simulations on Karate club and NetHEPT datasets.
(The results for $w_{ii}^0+w_{ig}+w_{ib} = 0.1$ and $0.9$ are provided in Tables~\ref{tab:karate_W0WgWb} and \ref{tab:hep_W0WgWb} of Appendix~\ref{app:sim}.)

\subsubsection{The fundamental setting}
For both the datasets, the overall opinion value is positive in the fundamental unbounded setting (\ref{sec:fundamental}) when both camps had the same budget (first row of Table \ref{tab:karate_hep}), which implies that the maximum value of $r_i w_{ig}$ in the network was higher than the maximum value of $r_i w_{ib}$ (as clear from Equation (\ref{eqn:steadystate}) and Proposition \ref{prop:fundamental_unbounded}). The results of the good camp doubling its budget can also be seen (second row of Table \ref{tab:karate_hep}). When both camps had the same budget, their overall influences tend to nullify each other to a great extent and so the sum of opinion values is neither exceedingly positive nor exceedingly negative. However, with the good camp doubling its budget, this additional budget could be used to have a large surplus of positive influence in the network. Actually, owing to close competition, even a slight imbalance in the camps' budgets would result in significant skewness in the overall opinion of the network.

The effect of bounded investment (Setting \ref{sec:fundamental_bounded}) can also be seen (third row of Table \ref{tab:karate_hep}); for these particular datasets, the maxmin value decreases implying that the value of $r_i w_{ig}$ is probably concentrated on one node, while that of $r_i w_{ib}$ is well distributed, thus giving the bad camp an advantage in bounded (and hence distributed) investment.
This can be seen from Proposition \ref{prop:fundamental_bounded}.
%
%
Figure \ref{fig:CCC}(a) illustrates the case with bounded investment  per node ($x_i,y_i \leq 1, \forall {i \in N}$) for the Karate club dataset.
Here, the label `c' means that, that node is invested on by both the camps with 1 unit, while `g' or `b' mean that the node is invested on by the good or bad camp, respectively.



\subsubsection{The adversary setting}

Considering the adversary setting of maximizing the competitor's (bad camp's) investment, we could see how much budget the bad camp required to draw the overall opinion in its favor. As expected from the results of the above setting where the overall opinion value turns out to be positive when the camps have the same budget, the investment required in the unbounded case of Setting \ref{sec:minmax_investment_unbounded} ($5.1404$ for Karate club and $136.5231$ for NetHEPT) is more than the budget available in Setting \ref{sec:fundamental} ($5$ for Karate club and $100$ for NetHEPT). In the bounded setting (\ref{sec:minmax_investment_bounded}), for Karate club dataset, we can see that the bad camp could have driven the overall opinion in its favor by expending $4.8936$ instead of  its entire budget of $5$.
For NetHEPT dataset, as expected from the results of the fundamental setting where the overall opinion in the bounded setting (\ref{sec:fundamental_bounded}) was less than that in the unbounded setting (\ref{sec:fundamental}), the investment required by the bad camp in the bounded setting (\ref{sec:minmax_investment_bounded}), which is $102.7266$, is less than that required in the unbounded setting (\ref{sec:minmax_investment_unbounded}), which is $136.5231$.
%



\subsubsection{Concave influence function}
Results under the concave influence function  are presented 	for both  unbounded (\ref{sec:concave_unbounded}) and bounded (\ref{sec:concave_bounded}) cases for $t=2$ and $t=10$, in Table~\ref{tab:karate_hep}.
%
%
Figure \ref{fig:concave} shows the effect of the value of $t$ on the distribution of investment and  final opinion values for the Karate club dataset, in the unbounded case. 
In Figures \ref{fig:concave}(a),(b),(d),(e), the size and color saturation of a node represent the amount of investment on it by the camp mentioned in the corresponding caption.
With careful observation, it can be seen that for $t=2$, the investments are more skewed, while for $t=10$, the investments by the good and bad camps on a node $i$ are close to being proportional to the values of $r_i w_{ig}$ and $r_i w_{ib}$, respectively (as suggested in Proposition \ref{prop:concave_unbounded} and Remark~\ref{rem:skewness}).
In Figures \ref{fig:concave}(c) and \ref{fig:concave}(f), the shape and color of a node  represent its opinion sign (blue circle implies positive, red square implies negative), while
its size and color saturation represent its opinion magnitude.

\begin{figure*}
\hspace{-7mm}
\begin{tabular}{|c||c||c|}
\hline &~ & \vspace{-3mm}\\
\includegraphics[width=0.33\textwidth]{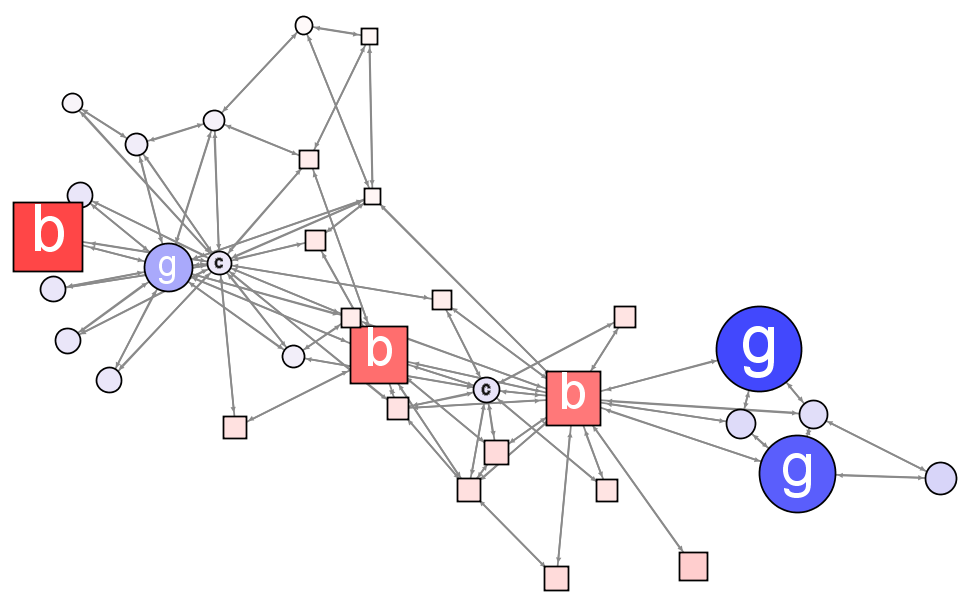}
&
\includegraphics[width=0.33\textwidth]{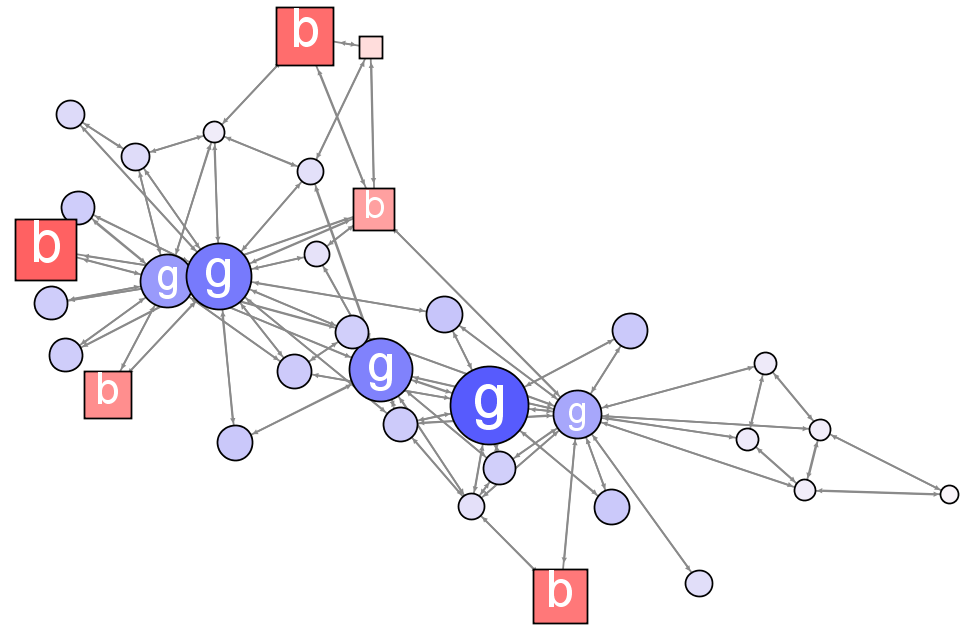}
&
\includegraphics[width=0.33\textwidth]{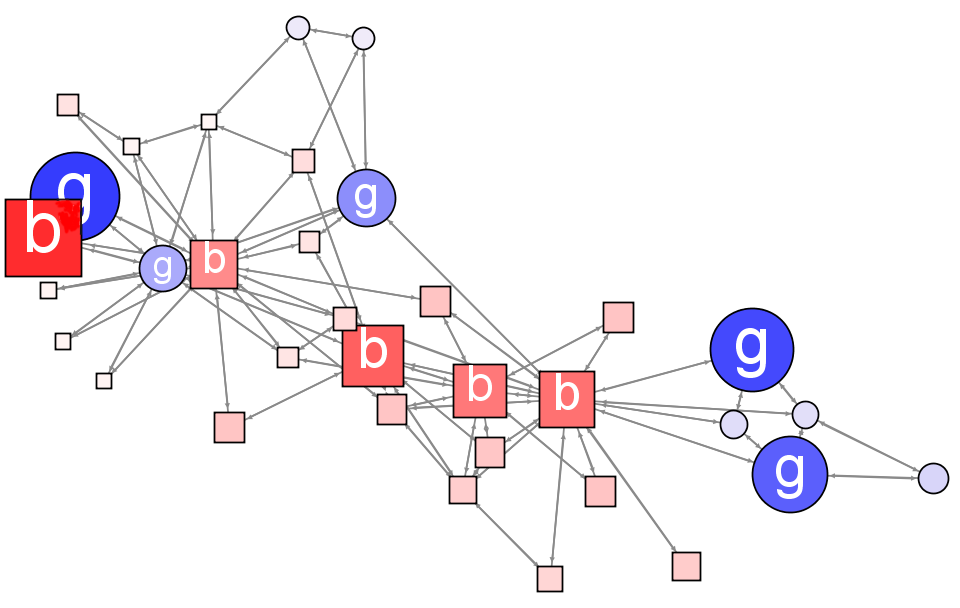}
\\ & & \vspace{-3mm} \\
(a) Bound on investment per node 
&
(b) Common coupled constraints 
&
(c) Common coupled constraints 
\\
 $x_i,y_i\leq 1, \forall i$ 
 &
$x_i+y_i \leq 1, \forall i$ 
&
$x_i+y_i \leq 1, \forall i$ 
\\ 
($\max_{\mathbf{x}} \min_{\mathbf{y}} \sum_i v_i = -0.0538$)
&
($\max_{\mathbf{x}} \min_{\mathbf{y}} \sum_i v_i = 1.5399$)
&
($\min_{\mathbf{y}} \max_{\mathbf{x}} \sum_i v_i = -0.8900$)
\\
\hline 
\end{tabular}
\caption{Results in presence of additional constraints  for the Karate club dataset with $k_g=k_b=5$ ; The nodes are labeled `g/b/c' to signify if invested on by good/bad/both camps respectively. The sign of the opinion value of a node is signified by its shape and color (circle and blue for good, square and red for bad), while the absolute value of its opinion is signified by its size and color saturation.}
\label{fig:CCC}
\vspace{-3mm}
\end{figure*}

In the unbounded case,  for some values of budgets, there exist nodes for which either $x_i$ or $y_i$ or both exceed 1 unit. So for the bounded case, the camps are directed to have different investment strategies than in the unbounded case (as can be understood from Insight~\ref{int:trialnerror}).
The effects can be seen in Table \ref{tab:karate_hep} where the values are different in Settings \ref{sec:concave_unbounded} and \ref{sec:concave_bounded} for the same values of budgets.
In some scenarios such as NetHEPT with $k_g=k_b=100$, however, the investment strategies inadvertently assured $x_i,y_i \leq 1, \forall {i \in N}$ even for the unbounded case; so the investment strategies remain the same in both settings and hence resulted in the same overall opinion value ($-0.8446$ for $t=2$ and $-12.4212$ for $t=10$).
%
A careful analysis of the values would indicate that the constraints $x_i,y_i \leq 1$ are likely to come into picture for some nodes, for lower values of $t$ and higher values of budgets. 
Lower values of $t$ lead to skewed investment and so a higher likelihood of some nodes having investment more than 1 unit in the unbounded case. Similarly, higher values of budgets scale up the investments on the nodes, resulting in a higher likelihood of some nodes having investment more than 1 unit in the unbounded case.
This can also be inferred from our analytically derived investment strategies.


For $w_{ii}^0+w_{ig}+w_{ib}=0.1$ (see Figure~\ref{fig:concave_0.1} of  Appendix~\ref{app:sim}), the values of extra-network parameters ($w_{ii}^0,w_{ig},w_{ib}=0.1$) get scaled down proportionally, while the intra-network ones ($w_{ij}$'s) scale up; that is, the network influence plays a  stronger role than the  camps' recommendations.
Because of the weak impact of the camps' recommendations, that is, the nodes being unwilling to accept opinions that are external to the network, and since all nodes started with a zero initial opinion, the magnitudes of their opinion values are very low.
Furthermore, because of the strong impact of the intra-network influence, the camps invest greatly on the most influential nodes (and not so much on the lesser influential ones), and allow the network to spread its influence.
Owing to the low magnitudes of opinion values and 
a strong network effect which aids in distributing the influence evenly, the opinion values in the network are less skewed.
For $w_{ii}^0+w_{ig}+w_{ib}=0.9$ (see Figure~\ref{fig:concave_0.9} of Appendix~\ref{app:sim}), the above reasoning gets reversed, and so the magnitudes of individual opinion values are high, the camps' investments are more evenly distributed, and the opinion values are highly skewed.



\subsubsection{Common coupled constraints}
%
Figures \ref{fig:CCC} (b-c) illustrate the effect of common coupled constraints $x_i+y_i \leq 1, \forall {i \in N}$, for the Karate club dataset. 
The advantage of playing first is clearly visible from the overall opinion value as well as the distribution of opinion values.
Specifically, the overall opinion value is the highest in Figure \ref{fig:CCC}(b) with a healthy distribution of positive opinion values, followed by the value in Figure \ref{fig:CCC}(a), followed by that in Figure \ref{fig:CCC}(c) which is dominated by negative opinions.


For $w_{ii}^0 + w_{ig} + w_{ib} = 0.9, \forall i \in N$ on NetHEPT dataset, we observed that $\max_{\mathbf{x}} \min_{\mathbf{y} \leq \mathbf{1}-\mathbf{x}} \sum_{i\in N} v_i = \min_{\mathbf{y}} \max_{\mathbf{x} \leq \mathbf{1}-\mathbf{y}} \sum_{i\in N} v_i = 1.5930$ (see Table \ref{tab:hep_W0WgWb} of Appendix~\ref{app:sim}). So the camps did not have the first mover advantage; and it would not matter if the camps played sequentially or simultaneously (Remark~\ref{rem:RAPG}).
%
Hence the camp which plays first, could devise its investment strategy without having to consider the best response investment strategy of the camp which plays second.
Also, as reasoned earlier, the opinion values are less skewed for $w_{ii}^0 + w_{ig} + w_{ib} = 0.1$, while highly skewed for $w_{ii}^0 + w_{ig} + w_{ib} = 0.9$.

\subsubsection{Decision under uncertainty}
For studying the effects of decision under uncertainty as analyzed in Section \ref{sec:robustness}, we consider that the good camp (first mover) is uncertain about  parameters $w_{ig}$ and $w_{ib}$ up to a certain limit. In particular, there is a fractional uncertainty of $\epsilon_{l}$ regarding the values of these parameters, while there is a fractional uncertainty of $\epsilon_{o}$ regarding the values of the sums of these parameters over the entire network. 
$\epsilon_{l}$ can be hence viewed as local uncertainty and $\epsilon_{o}$ as global uncertainty. 
Let $\hat{w}_{ig}$ and $\hat{w}_{ib}$ be the underlying ground truth values for a node $i$ (the actual values destined to be realized).

\vspace{-5.5mm}
\begin{small}
\begin{gather*}
(1-\epsilon_{l})  \hat{w}_{ig} \leq  w_{ig}  \leq (1+\epsilon_{l})  \hat{w}_{ig}
\\
(1-\epsilon_{l})  \hat{w}_{ib} \leq  w_{ib}  \leq (1+\epsilon_{l})  \hat{w}_{ib}
\\
(1-\epsilon_{o}) \sum_{i \in N} \hat{w}_{ig} \leq \sum_{i \in N} w_{ig}  \leq (1+\epsilon_{o}) \sum_{i \in N} \hat{w}_{ig}
\\
(1-\epsilon_{o}) \sum_{i \in N} \hat{w}_{ib} \leq \sum_{i \in N} w_{ib}  \leq (1+\epsilon_{o}) \sum_{i \in N} \hat{w}_{ib}
\end{gather*}
\end{small}
\vspace{-2mm}

It is clear that the latter two constraints would come into picture only if $\epsilon_{o} < \epsilon_{l}$ (this would usually be the case since, though there may be significant relative deviation for the individual parameters, the relative deviation of their sum is usually low owing to significant balancing of positive and negative deviations of the individual parameters). 
For different values of $\epsilon_l$ and $\epsilon_o$, Figure \ref{fig:uncertainty_3d} presents the maxmin values: 
(a) as computed by the good camp (first mover) as its worst case value using our robust optimization approach and
(b) as realized based on the ground truth.
We assume the ground truth values as depicted in Figure \ref{fig:karate_details}.
It can be seen that for a large enough range of values of $\epsilon_{l}$ and $\epsilon_{o}$, though the good camp computes the worst case maxmin value to be very low, the corresponding realized value is the same as when the good camp is certain about  parameter values. The uncertainty factor starts affecting the good camp only for very high  $\epsilon_{l}$ and $\epsilon_{o}$.

 \begin{figure}[t]
 \centering
 \includegraphics[width=0.43\textwidth]{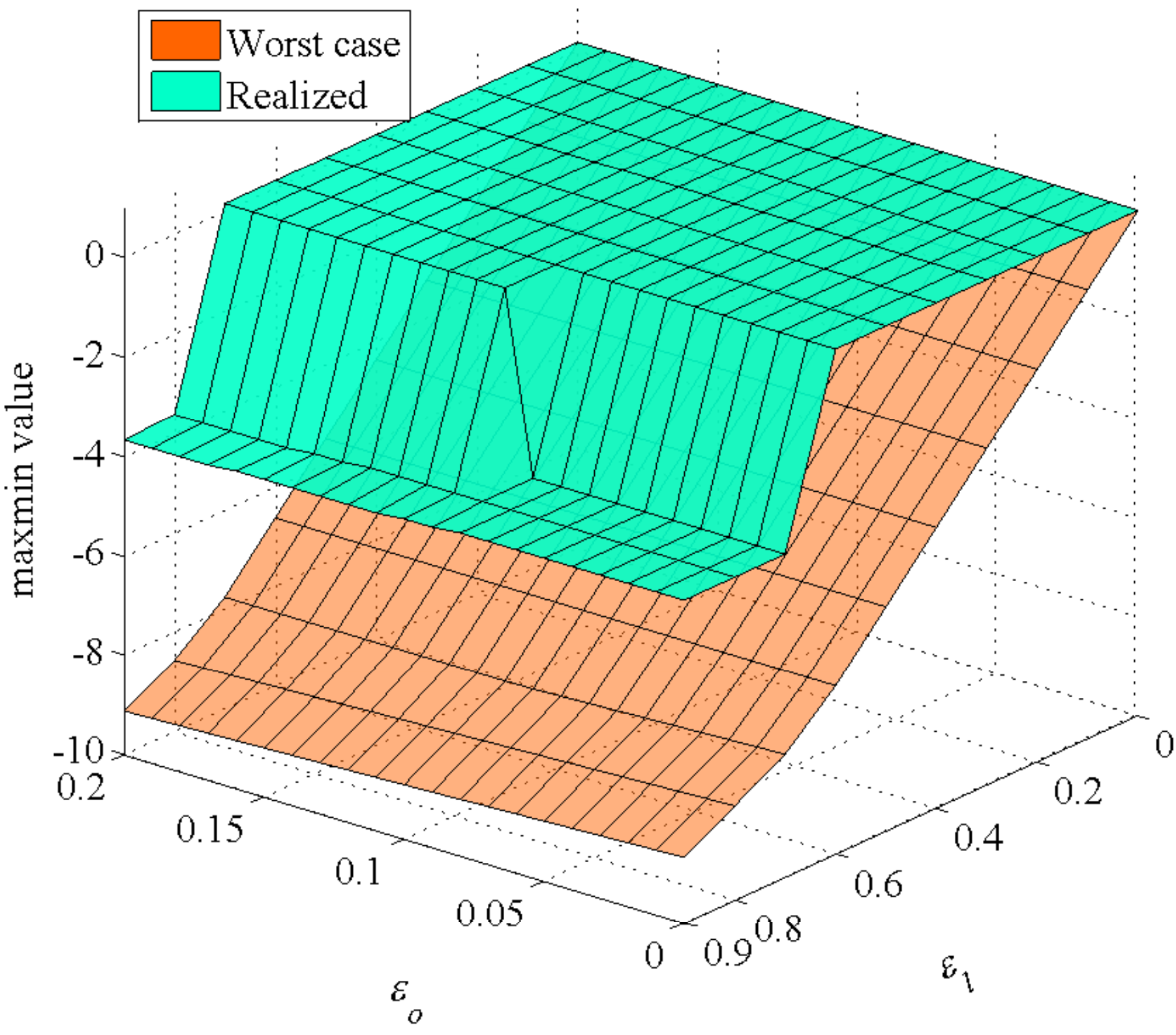}
 \vspace{-3mm}
 \caption{The worst case and realized values of the maxmin value for different values of $\epsilon_{l}$ and $\epsilon_{o}$ for the Karate club dataset with $k_g=k_b=5$}
 \label{fig:uncertainty_3d}
 \vspace{-3mm}
 \end{figure}
 
 \begin{figure}[t]
 \centering
 \includegraphics[width=0.33\textwidth]{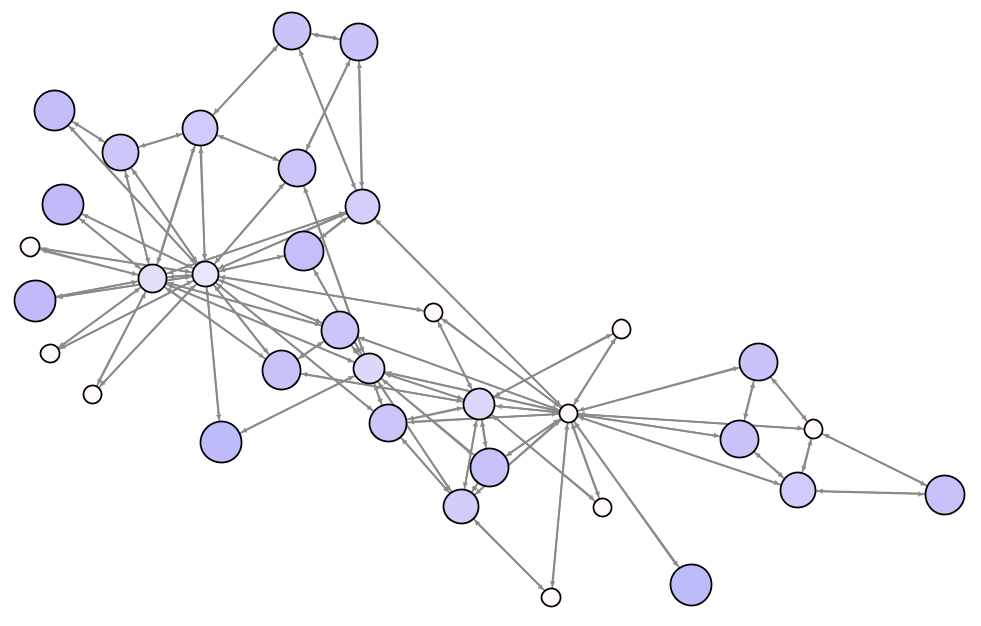}
 \vspace{-2mm}
 \caption{Investment  made by good camp on nodes signified by their sizes and color saturations for the Karate club dataset with $k_g=k_b=5$ (under high uncertainty)}
 \label{fig:uncertainty_graph}
 \vspace{-5mm}
 \end{figure}

As clear from our linear program formulation in Section~\ref{sec:robustness},
the optimal strategy of the good camp under uncertainty is a pessimistic one, that is, ensuring that the good camp performs well even in the worst case.
Hence it is fundamentally different from the optimal strategy under certainty, where there is no requirement of being pessimistic or optimistic
 since all the information required to solve the optimization problem is  known with certainty.
As pointed above, the uncertainty factor plays a role only for very high values of $\epsilon_{l}$ and $\epsilon_{o}$.
Specifically, for values of $\epsilon_l$ upto $0.6$, the good camp's investment is concentrated on one node, like in the certainty setting (Proposition \ref{prop:fundamental_unbounded}).
For $\epsilon_l$ in excess of $0.6$, the distribution of investment showed a very similar nature across different values of $\epsilon_l$ and $\epsilon_o$;
Figure~\ref{fig:uncertainty_graph} illustrates how the good camp distributes its investment over nodes.
Thus,
the investment is distributed under high levels of uncertainty, while it is concentrated on one node under certainty  as well as under low levels of uncertainty.
Furthermore, one could notice a clear correlation between the values of $w_{ig}$ (Figure \ref{fig:karate_details}(a)) and the investment amounts on the corresponding nodes under high uncertainty (Figure~\ref{fig:uncertainty_graph}).

\vspace{-3mm}
\section{Conclusion}

Using a variant of Friedkin-Johnsen model for opinion dynamics, we studied a  zero-sum game framework for optimal investment strategies for two competing camps in a social network. We derived  closed form expressions and efficient algorithms for a number of well-motivated settings.
We showed all the results quantitatively as well as illustratively using simulations on network datasets.
%

Our analysis arrived at a decision parameter analogous to  Katz centrality. We also showed that for some of the popular models of weighing edges,
 a random strategy is indeed optimal. 
We further looked at a setting where the influence of a camp on a node is a concave function of the amount of investment, and derived that a more concave function results in a less skewed investment strategy, which could be perceived as a fairer strategy by the nodes.
%
We studied an adversarial problem where a camp aims to maximize its competitor's investment required to drive the average opinion value of the population in its favor, and saw that the optimal strategies fundamentally remain the same, albeit with different forms of the exact optimal strategies.
We  studied Stackelberg variant of the game, under common coupled constraints stating the bound on combined investment by the two  camps on any node.
We analytically derived the subgame perfect Nash equilibrium strategies of the camps, and 
hence quantified the first-mover advantage.
We derived a linear program for obtaining optimal strategy for a camp to whom the parameters' values are uncertain, while playing against a camp having exact information.
In our simulations, we observed that a camp is likely to get affected only under considerably high levels of uncertainty. Here, the optimal strategy is to have a distributed investment over nodes, as against the concentrated investment in the exact information setting.
%



\vspace{-3mm}
\subsection*{Future Work}

This work has several interesting directions for future work; we mention a few.
The two camps setting can be extended to multiple camps where each camp would attempt to drive the opinion of the population towards its own.
%
With more than two camps, we would need to have the camps hold opinions in a multidimensional plane rather than on the real number line ($+1/\!-\!1$). 
We explain one way in which this could be done.
Let each camp (say $h$)  have a vector associated with its opinion (say $\vec{c}_h$). Let its investment on node $i$ be denoted by $z_{hi}$ and the weightage attributed by  $i$ to the camp's opinion be $w_{ih}$.
Analogous to Equation (\ref{eqn:steadystate}), 
the vector-sum of nodes' opinions could be written as

\vspace{-3mm}
\begin{small}
\begin{align*}
\sum_{i \in N} \vec{v}_i = \sum_{i \in N} r_i  w_{ii}^0 \vec{v}_i^{\,0} + \sum_h \sum_{i \in N} r_i w_{ih} z_{hi} \vec{c}_h
\end{align*}
\end{small}
%
\noindent
 A camp's objective would be
to drive this vector-sum towards the direction of its own opinion vector, that is,
to maximize the inner product between the vector-sum of nodes' opinions and its own opinion vector (that is, $\vec{c}_h \cdot \sum_{i \in N} \vec{v}_i$).
%
%
%
This paper studied the special case where we have two camps: $h=g,b$ with $\vec{c}_g = +1, \vec{c}_b = -1$ and $z_{gi}=x_i, z_{bi}=y_i$.
Since the optimization terms for different camps get decoupled under the settings in Sections 3 and 4, the optimization strategies of all camps would be analogous to those derived in this paper.
However, our set of equations will not be directly adaptable for the CCC and uncertainty settings, and is an interesting direction for future work.


%
%

 One could study other models of opinion dynamics with respect to optimal investment strategies of competing camps and investigate if it is possible to arrive at closed form expressions under them.
 One such type of models could be where the investment parameters get coupled in the optimization problem, and
 the sum of opinion values is no longer a multilinear function in the investment parameters $\{x_i\}_{i\in N},\{y_i\}_{i\in N}$.
 Another possible future direction is to study
the setting of common coupled constraints under more complex constraints, while maintaining  analytical tractability;
our analysis in this paper could act as a conceptual base for solving such a problem.
%
In general, it would be interesting to consider hybrid settings which combine the individually studied settings in this paper, for instance,
 a concave influence function with the cost function accounting for  deviation  from the desired investment.
It would also be interesting to study the tradeoff between the advantages and disadvantages of playing first, since playing first would allow a camp to block certain investments of the competing camp (like in common coupled constraints) but it may also force it to take decision under uncertainty.


%


\bibliographystyle{IEEEtran}
\bibliography{ODSN_references} 

\newpage
\appendices


\vspace{-2mm}
\section{Proof of Proposition \ref{prop:concave_unbounded}}
\label{app:concave_unbounded}
\begin{proof}
We make  the following one-to-one transformation: $y_i=\mathcal{Y}_i^t$ and $x_i=\mathcal{X}_i^t$.
So the above optimization problem is equivalent to

\vspace{-3mm}
\begin{small}
\begin{align*}
\max_{\substack{\sum_i \mathcal{X}_i^t \leq k_g \\ \mathcal{X}_i \geq 0}} 
 \min_{\substack{\sum_i \mathcal{Y}_i^t \leq k_b \\ \mathcal{Y}_i \geq 0}} \sum_{{i \in N}} v_i
 \\ \text{s.t. }
 \forall {i \in N}: \;
 v_i 
 - \sum_{j \in N} w_{ij}v_j 
 + w_{ib} \mathcal{Y}_i
 = w_{ig} \mathcal{X}_i
 + w_{ii}^0 v_i^0
 & 
\end{align*}
\end{small}
\vspace{-3mm}

\noindent
Similar to the derivation of Equation (\ref{eqn:steadystate}) in  main text, we get

\vspace{-3mm}
\begin{small}
\begin{align}
\sum_{i \in N} v_i = \sum_{i \in N} r_i ( w_{ig}\mathcal{X}_i + w_{ii}^0 v_i^0) - \sum_{i \in N} r_i w_{ib}\mathcal{Y}_i
\end{align}
\vspace{-2mm}
\end{small}

Owing to the mutual independence between $\bar{\mathcal{X}}$ and $\bar{\mathcal{Y}}$ (where $\bar{\mathcal{X}}$ and $\bar{\mathcal{Y}}$ are the vectors with components $\mathcal{X}_i$ and $\mathcal{Y}_i$ respectively), we can write

\vspace{-3mm}
\begin{small}
\begin{align*}
&\max_{\substack{\sum_i \mathcal{X}_i^t \leq k_g \\ \mathcal{X}_i \geq 0}} 
 \min_{\substack{\sum_i \mathcal{Y}_i^t \leq k_b \\ \mathcal{Y}_i \geq 0}}  \sum_{{i \in N}} v_i
 \\&= \sum_{i \in N} r_i  w_{ii}^0 v_i^0
 + \max_{\substack{\sum_i \mathcal{X}_i^t \leq k_g \\ \mathcal{X}_i \geq 0}} \sum_{i \in N} r_i  w_{ig}\mathcal{X}_i 
 - \max_{\substack{\sum_{i} \mathcal{Y}_i^t \leq k_b \\ \mathcal{Y}_i \geq 0}} \sum_{i \in N} r_i  w_{ib}\mathcal{Y}_i 
\end{align*}
\vspace{-1mm}
\end{small}

We now solve for the first optimization term with respect to $\bar{\mathcal{X}}$, which can be written as the following convex optimization problem

~\vspace{-3mm}
\begin{small}
\begin{align*}
 \min \;\;\;  -\sum_{{i \in N}} r_i w_{ig} \mathcal{X}_i 
 &
\\ \text{s.t. }
\sum_{i \in N} \mathcal{X}_i^t - k_g \leq 0
& \;\;\; \leftarrow \gamma
\\
\hspace{-3mm}
\forall {i \in N}: \;
-\mathcal{X}_i \leq 0
& \;\;\; \leftarrow \beta_i
\end{align*}
\vspace{-4mm}
\end{small}

The modified objective function with the Lagrangian multipliers is

\vspace{-3mm}
\begin{small}
\begin{align*}
g = -\sum_{{i \in N}} r_i w_{ig} \mathcal{X}_i + \gamma \left( \sum_{i \in N} \mathcal{X}_i^t - k_g \right) -\sum_{i \in N} \beta_i \mathcal{X}_i
\end{align*}
\vspace{-1mm}
\end{small}

The KKT conditions with the modified objective function $g$ give the following additional constraints

\vspace{-2mm}
\begin{small}
\begin{align}
\forall {i \in N}: \;
\frac{\partial g}{\partial \mathcal{X}_i} &= -r_i w_{ig} + t \gamma \mathcal{X}_i^{t-1} -\beta_i = 0
\label{eqn:KKTx}
\\
\gamma &\geq 0
\label{eqn:KKTgamma}
\\
\forall {i \in N}: \; \beta_i &\geq 0
\label{eqn:KKTbeta}
\\
\gamma \left(\sum_i \mathcal{X}_i^t - k_g \right) &= 0
\label{eqn:KKTgammaslack}
\\
\forall {i \in N}: \; \beta_i \mathcal{X}_i &= 0
\label{eqn:KKTbetaslack}
\end{align}
\end{small}
\vspace{-3mm}

\noindent
Constraint (\ref{eqn:KKTx}) gives

\vspace{-3mm}
\begin{small}
\begin{align}
\mathcal{X}_i = \left( \frac{\beta_i + r_i w_{ig}}{t \gamma} \right) ^\frac{1}{t-1}
\label{eqn:xnbeta}
\end{align}
\vspace{-2mm}
\end{small}

Note that if $\gamma=0$, Constraint (\ref{eqn:KKTx}) gives $\beta_i + r_i w_{ig}=0, \forall {i \in N}$. Since $\beta_i \geq 0$, this can hold only when $r_i w_{ig} \leq 0, \forall {i \in N}$. For nodes with $r_i w_{ig} = 0$, the value of $r_i w_{ig} \mathcal{X}_i$ stays 0 for any $\mathcal{X}_i$ and so $\mathcal{X}_i=0$ is an optimal solution. For nodes with $r_i w_{ig} < 0$, we must have $\beta_i>0$ and hence $\mathcal{X}_i=0$ (from (\ref{eqn:KKTbetaslack})).

Henceforth we assume $\gamma>0$ and so Equation (\ref{eqn:xnbeta}) is valid.
This, with (\ref{eqn:KKTgammaslack}) and (\ref{eqn:KKTgamma}), gives

\vspace{-3mm}
\begin{small}
\begin{align*}
\sum_{i \in N} \mathcal{X}_i^t = k_g
\end{align*}
\vspace{-2mm}
\end{small}

\noindent
From (\ref{eqn:xnbeta}), we get

\vspace{-3mm}
\begin{small}
\begin{gather*}
\sum_{i \in N} \left( \frac{\beta_i + r_i w_{ig}}{t \gamma} \right)^\frac{t}{t-1} = k_g
\\ \Longleftrightarrow 
\left( \frac{1}{t \gamma} \right)^\frac{1}{t-1}  = \frac{k_g^{\frac{1}{t}}}{\left( \sum_{i \in N} \left( {\beta_i + r_i w_{ig}} \right)^\frac{t}{t-1} \right)^{\frac{1}{t}}}
\end{gather*}
\vspace{-2mm}
\end{small}

\noindent
This, with (\ref{eqn:xnbeta}) gives

\vspace{-3mm}
\begin{small}
\begin{align}
\mathcal{X}_i^* = k_g^{\frac{1}{t}} \left( \frac{(\beta_i + r_i w_{ig})^\frac{1}{t-1}}{\left( \sum_{j \in N} \left( {\beta_j + r_j w_{jg}} \right)^\frac{t}{t-1} \right)^{\frac{1}{t}}} \right)
\label{eqn:xopt}
\end{align}
\vspace{-2mm}
\end{small}

Now, the following cases are possible depending on the sign of $r_i w_{ig}$.

\begin{description}
\item[Case 1:] ~ 
If $r_i w_{ig} < 0$, we must have $\beta_i > 0$ (from the constraint $\mathcal{X}_i^* \geq 0$).
This with (\ref{eqn:KKTbetaslack}) gives
$\mathcal{X}_i^* = 0$.

\item[Case 2:] ~ 
If $r_i w_{ig} = 0$, we must have $\mathcal{X}_i^* = \beta_i = 0$ (from (\ref{eqn:KKTbetaslack}) and (\ref{eqn:xnbeta})).
\end{description}
Furthermore in both Case 1 and Case 2, from (\ref{eqn:xnbeta}), we have $\beta_i + r_i w_{ig} = 0$.
So the $i$-terms for which $r_i w_{ig} \leq 0$ vanish from the denominator of (\ref{eqn:xopt}) and it transforms into

\vspace{-3mm}
\begin{small}
\begin{align}
\mathcal{X}_i^* = k_g^{\frac{1}{t}} \left( \frac{(\beta_i + r_i w_{ig})^\frac{1}{t-1}}{\left( \sum_{j:r_j w_{jg}>0} \left( {\beta_j + r_j w_{jg}} \right)^\frac{t}{t-1} \right)^{\frac{1}{t}}} \right)
, \text{if } r_i w_{ig} \geq 0
\label{eqn:xoptnew}
\\
\text{and }
\mathcal{X}_i^* = 0
, \text{if } r_i w_{ig} < 0
\nonumber
\end{align}
\vspace{-3mm}
\end{small}

\begin{description}
\item[Case 3:] ~ 
If $r_i w_{ig} > 0$, we  get $\mathcal{X}_i^* > 0$ (from (\ref{eqn:KKTbeta})). This with (\ref{eqn:KKTbetaslack}) gives
$\beta_i = 0$, and hence from (\ref{eqn:xoptnew}), we get 

\vspace{-3mm}
\begin{small}
\begin{align}
\mathcal{X}_i^* = k_g^{\frac{1}{t}} \left( \frac{(r_i w_{ig})^\frac{1}{t-1}}{\left( \sum_{j:r_j w_{jg}>0} \left( {r_j w_{jg}} \right)^\frac{t}{t-1} \right)^{\frac{1}{t}}} \right)
\nonumber
\end{align}
\vspace{-3mm}
\end{small}

\end{description}

The above cases can be concisely written as

\vspace{-3mm}
\begin{small}
\begin{align}
\mathcal{X}_i^* &=  k_g^{\frac{1}{t}} \left( \frac{(r_i w_{ig})^\frac{1}{t-1}}{\left( \sum_{j:r_j w_{jg}>0} \left( {r_j w_{jg}} \right)^\frac{t}{t-1} \right)^{\frac{1}{t}}} \right) \cdot \mathbb{I}_{r_i w_{ig} > 0}
\nonumber
\\[.5em]
\Longleftrightarrow
x_i^* &=  k_g \left( \frac{(r_i w_{ig})^{\frac{t}{t-1}}}{ \sum_{j:r_j w_{jg}>0} (r_j w_{jg})^{\frac{t}{t-1}}  } \right) \cdot \mathbb{I}_{r_i w_{ig} > 0}
\label{eqn:xoptfinal}
\end{align}
\vspace{-1mm}
\end{small}

where $\mathbb{I}_{r_i w_{ig} > 0} = 1$ if $r_i w_{ig} > 0$, and $0$ otherwise.

Similarly, it can be shown that

\vspace{-3mm}
\begin{small}
\begin{align}
y_i^* &=  k_b \left( \frac{(r_i w_{ib})^{\frac{t}{t-1}}}{ \sum_{j:r_j w_{jb}>0} (r_j w_{jb})^{\frac{t}{t-1}}  } \right) \cdot \mathbb{I}_{r_i w_{ib} > 0}
\label{eqn:yoptfinal}
\end{align}
\vspace{-1mm}
\end{small}

From the optimization problem, it can be directly seen that, if $\forall {i \in N}, r_i w_{ig} \leq 0$, then $x_i^*=0, \forall {i \in N}$. Similarly, if $\forall {i \in N}, r_i w_{ib} \leq 0$, then $y_i^*=0, \forall {i \in N}$.
\end{proof}

\vspace{-2mm}
\section{Proof of Proposition \ref{prop:concave_bounded}}
\label{app:concave_bounded}

\begin{proof}
This proof goes as an extension to the proof of Proposition \ref{prop:concave_unbounded}, with the additional constraint

\vspace{-3mm}
\begin{small}
\begin{align}
\forall {i \in N}: \; \mathcal{X}_i \leq 1 \;\;\; \leftarrow \xi_i
\end{align}
\vspace{-3mm}
\end{small}

\noindent
Equation (\ref{eqn:KKTx}) changes to

\vspace{-3mm}
\begin{small}
\begin{align}
\forall {i \in N}: \;
 -r_i w_{ig} + t \gamma \mathcal{X}_i^{t-1} -\beta_i+\xi_i = 0
\label{eqn:KKTxnew}
\end{align}
\vspace{-3mm}
\end{small}

\noindent
and Equation (\ref{eqn:xnbeta}) changes to

\vspace{-3mm}
\begin{small}
\begin{align}
\mathcal{X}_i = \left( \frac{\beta_i -\xi_i+ r_i w_{ig}}{t \gamma} \right) ^\frac{1}{t-1}
\end{align}
\vspace{-3mm}
\end{small}

\noindent
The complementary slackness conditions for the additional constraints are

\vspace{-3mm}
\begin{small}
\begin{align}
\forall {i \in N}: \; \xi_i (\mathcal{X}_i - 1) = 0
\label{eqn:KKTxislack}
\end{align}
\vspace{-3mm}
\end{small}

Further note that the constraints $\mathcal{X}_i \geq 0$ and $\mathcal{X}_i \leq 1$ cannot both be tight, and so at least one of $\beta_i$ or $\xi_i$ must be 0. That is,

\vspace{-3mm}
\begin{small}
\begin{align}
\forall {i \in N}: \; \beta_i \xi_i = 0
\label{eqn:betaxi_conflict}
\end{align}
\vspace{-3mm}
\end{small}

When $\gamma=0$, 
Equation (\ref{eqn:KKTxnew}) transforms into $\beta_i-\xi_i+r_iw_{ig}=0, \forall {i \in N}$. 
Now if $r_i w_{ig}=0$, we must have $\beta_i-\xi_i=0$ and hence $\beta_i=\xi_i=0$ (from Equation (\ref{eqn:betaxi_conflict})).
It is further clear from the objective function that if $r_i w_{ig}=0$, it is optimal to have $\mathcal{X}_i=0$ and hence $x_i^* = 0$.
If $r_i w_{ig}<0$, we must have $\beta_i>0$ (since $\beta_i-\xi_i+r_iw_{ig}=0$) and so $\mathcal{X}_i=x_i^*=0$ (from Equation (\ref{eqn:KKTbetaslack})).
If $r_i w_{ig}>0$, we must have $\xi_i>0$ and so $\mathcal{X}_i=x_i^*=1$ (from Equation (\ref{eqn:KKTxislack})).
That is, when $\gamma=0$, 
we invest an amount of 1 on all nodes $i$ with positive values of $r_i w_{ig}$ and 0 on all other nodes.

\vspace{-2mm}
\begin{small}
\begin{align}
\gamma = 0 \implies x_i^* = 1\cdot\mathbb{I}_{r_i w_{ig} > 0}
\label{eqn:investon_allpositives}
\end{align}
\vspace{-2mm}
\end{small}

For $\gamma>0$, we have

\vspace{-2mm}
\begin{small}
\begin{align*}
&\;\sum_{i \in N} \mathcal{X}_i^t = k_g
\\
\Longleftrightarrow &\;
\sum_{i \in N} \left( \frac{\beta_i -\xi_i+ r_i w_{ig}}{t \gamma} \right) ^\frac{t}{t-1} = k_g
\end{align*}
\vspace{-1mm}
\end{small}

When $\beta_i > 0$, we have $\xi_i=0$ (from (\ref{eqn:betaxi_conflict})) and $\mathcal{X}_i=0$ (from (\ref{eqn:KKTbetaslack})). From the above equation, this corresponds to $r_i w_{ig} < 0$. Further note that for $r_i w_{ig}=0$, $\mathcal{X}_i=0$ is an optimal solution. So we have

\vspace{-2mm}
\begin{small}
\begin{align*}
\sum_{i:r_i w_{ig}>0} \left( \frac{-\xi_i+ r_i w_{ig}}{t \gamma} \right) ^\frac{t}{t-1} = k_g
\end{align*}
\vspace{-1mm}
\end{small}

We can have $\mathcal{X}_i=1$ or if not, we should have $\xi_i = 0$. 
So for any $i$ such that $r_i w_{ig} > 0$, we have $\mathcal{X}_i = \min \left\{ \left( \frac{r_i w_{ig}}{t \gamma} \right) ^\frac{1}{t-1} , 1 \right\}$.
Let $J_\gamma = \{i: 0<{r_i w_{ig}} \leq t \gamma\}$.
So the above equation results in

\vspace{-2mm}
\begin{small}
\begin{align}
\label{eqn:equals_kg_app}
\sum_{{i\in J_\gamma}} \left( \frac{r_i w_{ig}}{t \gamma} \right) ^\frac{t}{t-1} + \sum_{\substack{i\notin J_\gamma\\ i:r_i w_{ig}>0}} 1 = k_g
\\
\Longleftrightarrow
\left( \frac{1}{t\gamma} \right) ^\frac{t}{t-1} = \frac{k_g - \sum_{i\notin J_\gamma,i:r_i w_{ig}>0} 1}{\sum_{i\in J_\gamma} (r_i w_{ig})^\frac{t}{t-1}}
\nonumber
\end{align}
\vspace{-1mm}
\end{small}


Note that if $i \in J_\gamma$, then any $j$ for which $r_j w_{jg} < r_i w_{ig}$ belongs to $J_\gamma$.
So $J_\gamma$ and hence $\gamma$ can be determined by adding nodes to $J_\gamma$, one at a time in increasing order of $r_i w_{ig}$, subject to $r_i w_{ig}>0$.
Let $\hat{\gamma}$ be the value of $\gamma$ so obtained.
%
%
It can be seen that as $\gamma$ decreases, the left hand side of (\ref{eqn:equals_kg_app}) increases. Since the right hand side is a constant, we would obtain a unique $\hat{\gamma}$ satisfying the equality. Furthermore, for $\gamma>0$, this budget constraint is tight and so we are ensured the existence of $\hat{\gamma}$.
Once $\hat{\gamma}$ and hence $J_{\hat{\gamma}}$ are obtained, an optimal solution for the good camp can be expressed as follows (recall that $x_i=\mathcal{X}_i^t$):

If $r_i w_{ig} \leq 0$, $x_i^*=0$.

If $r_i w_{ig} > t\hat{\gamma}$, $x_i^*=1$.

If $0 < r_i w_{ig} \leq t\hat{\gamma}$, 

\vspace{-2mm}
\begin{small}
\begin{align*}
x_i^* = \Bigg( k_g - \sum_{\substack{i\notin J_{\hat{\gamma}}\\i:r_i w_{ig}>0}} 1 \Bigg) \left( \frac{(r_i w_{ig})^\frac{t}{t-1}}{\sum_{{i\in J_{\hat{\gamma}}}} (r_i w_{ig})^\frac{t}{t-1}} \right)  
\end{align*}
\vspace{-1mm}
\end{small}

However, if Equation (\ref{eqn:equals_kg_app}) is not satisfied for any $\gamma$, that is, when 

\vspace{-3mm}
\begin{small}
\begin{align*}
\sum_{{i\in J_\gamma}} \left( \frac{r_i w_{ig}}{t \gamma} \right) ^\frac{t}{t-1} + \sum_{\substack{i\notin J_\gamma\\ i:r_i w_{ig}>0}} 1 < k_g
\end{align*}
\vspace{-1mm}
\end{small}

\noindent
even for the lowest value of $\gamma>0$, we have $J_{\gamma}=\{\}$ and hence $\sum_{i:r_i w_{ig}>0} 1 < k_g$.
Here, the number of nodes with $r_i w_{ig}>0$ is less than $k_g$, meaning that the budget constraint is not tight and so $\gamma=0$ (from Equation (\ref{eqn:KKTgammaslack})). The investment is thus as per Equation (\ref{eqn:investon_allpositives}).
\end{proof}

\begin{table*} 
\caption{Results for Karate club dataset for different values of $w_{ii}^0+w_{ig}+w_{ib}$}
\label{tab:karate_W0WgWb}
\centering
\begin{tabular}{|c|c|c|c|c|c|c|c|c|}
\hline 
\multicolumn{2}{|c|}{{Setting}} &  \multirow{3}{*}{Section} & \multirow{3}{*}{$k_g$} & \multirow{3}{*}{$k_b$} & \multirow{3}{*}{Value of} & \multicolumn{3}{c|}{\T \B $w_{ii}^0+w_{ig}+w_{ib}$} 
\\ \cline{1-2}\cline{7-9} \T \B
Aspect & Case & &  & & & $0.1$  & $0.5$ & $0.9$
\\ \hline \T \B
\multirow{4}{*}{Fundamental} & \multirow{3}{*}{Unbounded} & \multirow{3}{*}{3.1.1} & 5 & 5 & $\max_{\mathbf{x}} \min_{\mathbf{y}} \sum_{i\in N} v_i$ & $0.1141$ & $0.1564$ 
 & $0.1723$
\\ \cline{4-9} \T \B
&  & & 10 & 5 & $\max_{\mathbf{x}} \min_{\mathbf{y}} \sum_{i\in N} v_i$ & $7.6575$ & $5.8811$ 
 & $3.5673$
\\ \cline{2-9} \T \B
& Bounded & 3.1.2 & 5 & 5 & $\max_{\mathbf{x}} \min_{\mathbf{y}} \sum_{i\in N} v_i$ & $-0.0678$ & $-0.0538$ 
 & $0.3309$
\\ \hline \T \B
\multirow{3}{*}{Adversary} &
Unbounded & 5.1 & 5 & - & $\max_{\mathbf{x}} \min_{\mathbf{y}} \sum_{i\in N} y_i$ & $5.0768$ & $5.1404$ 
 & $5.2674$
\\ \cline{2-9} \T \B
& Bounded & 5.2 & 5 & - & $\max_{\mathbf{x}} \min_{\mathbf{y}} \sum_{i\in N} y_i$ & $4.8542$ & $4.8936$ 
 & $5.6844$
\\ \hline \T \B
\multirow{6}{*}{Concave ($t=2$)} &
\multirow{3}{*}{Unbounded} &
\multirow{3}{*}{4.1} & \multirow{1}{*}{5} & \multirow{1}{*}{5} &  $\max_{\mathbf{x}} \min_{\mathbf{y}} \sum_{i\in N} v_i$ & $0.3545$ & $0.4581$ 
 &  $0.5435$
\\ \cline{4-9} \T \B
&  &  & 20 & 20 &  $\max_{\mathbf{x}} \min_{\mathbf{y}} \sum_{i\in N} v_i$ & $0.7089$ & $0.9163$ 
&  $1.0870$
\\ \cline{2-9} \T \B
& \multirow{3}{*}{Bounded} & \multirow{3}{*}{4.2} & \multirow{1}{*}{5} & \multirow{1}{*}{5} & $\max_{\mathbf{x}} \min_{\mathbf{y}} \sum_{i\in N} v_i$ & $0.3931$ & $0.4612$ 
& $0.5435$
\\ \cline{4-9} \T \B
&  &  & 20 & 20 &  $\max_{\mathbf{x}} \min_{\mathbf{y}} \sum_{i\in N} v_i$ & $1.4950$ & $1.2653$ 
 &  $1.0714$
\\ \hline \T \B
\multirow{6}{*}{Concave ($t=10$)} &
\multirow{3}{*}{Unbounded} &
\multirow{3}{*}{4.1} & \multirow{1}{*}{5} & \multirow{1}{*}{5} &  $\max_{\mathbf{x}} \min_{\mathbf{y}} \sum_{i\in N} v_i$ & $1.2512$ & $1.1180$ 
 & $1.0120$
\\ \cline{4-9} \T \B
&  &  & 20 & 20 &  $\max_{\mathbf{x}} \min_{\mathbf{y}} \sum_{i\in N} v_i$ & $1.4373$ & $1.2842$  
 &  $1.1625$
\\ \cline{2-9} \T \B
& \multirow{3}{*}{Bounded} & \multirow{3}{*}{4.2} & \multirow{1}{*}{5} & \multirow{1}{*}{5} & $\max_{\mathbf{x}} \min_{\mathbf{y}} \sum_{i\in N} v_i$ & $1.2512$ & $1.1180 $ 
 & $1.0120$
\\ \cline{4-9} \T \B
&  &  & 20 & 20 &  $\max_{\mathbf{x}} \min_{\mathbf{y}} \sum_{i\in N} v_i$ & $1.5027$ & $1.3104$ 
 &  $1.1633$
 \\ \hline \T \B
 \multirow{3}{*}{CCC} & \multirow{3}{*}{Bounded} &
\multirow{3}{*}{6} & \multirow{3}{*}{5} & \multirow{3}{*}{5} & $\max_{\mathbf{x}} \min_{\mathbf{y}} \sum_{i\in N} v_i$ & $2.8105$ & $1.5399$ 
& $0.3951$
\\  \T \B
 &  &  & & & $\min_{\mathbf{y}} \max_{\mathbf{x}} \sum_{i\in N} v_i$ & $-1.8655$ & $-0.8900$ 
 & $0.2581$
\\ \hline
\end{tabular}
\end{table*}

\section{Proof of Proposition \ref{prop:investment_unbounded}}
\label{app:investment_unbounded}

\begin{proof}

\begin{small}
\begin{gather*}
\vspace{-3mm}
\max_{\substack{\sum_i x_i \leq k_g \\ x_i \geq 0}} 
 \min_{y_i \geq 0} \sum_{{i \in N}} y_i
\\
\text{s.t. }
\sum_{i \in N} v_i \leq 0
\\
\hspace{-3mm}
\sum_{i \in N}
v_i 
=
 \sum_{i \in N} r_i ( w_{ig}x_i + w_{ii}^0 v_i^0) - \sum_{i \in N} r_i w_{ib}y_i
\end{gather*}
\vspace{-2mm}
\end{small}

The inner term of this optimization problem is

\vspace{-3mm}
\begin{small}
\begin{align*}
 \min \sum_{{i \in N}} y_i
 &
\\ \text{s.t. }
  \sum_{i \in N} r_i w_{ib}y_i \geq \sum_{i \in N} r_i ( w_{ig}x_i + w_{ii}^0 v_i^0)
  & \;\;\; \leftarrow \pi
\\
\forall {i \in N}: \;
y_i \geq 0
\end{align*}
\vspace{-3mm}
\end{small}

It is clear that if $\sum_{i \in N} r_i ( w_{ig}x_i + w_{ii}^0 v_i^0) \leq 0$, it is optimal for the bad camp to have $y_i = 0, \forall {i \in N}$, that is, $\sum_{i \in N} y_i = 0$.
So we need to only consider the case where  $\sum_{i \in N} r_i ( w_{ig}x_i + w_{ii}^0 v_i^0) > 0$.
In this case, for the constraint $\sum_{i \in N} v_i \leq 0$ to be satisfied, it is necessary that there exists a node $j$ with positive value of $r_j w_{jb}$.

The dual problem of the above optimization problem can be written as

\vspace{-2mm}
\begin{small}
\begin{align}
\nonumber
\max \pi \sum_{i \in N} r_i ( w_{ig} x_i + w_{ii}^0 v_i^0 ) 
&
\\ \text{s.t. }
\forall {i \in N}: \;
\pi r_i w_{ib} 
\leq 1
& \;\;\; \leftarrow y_i
\nonumber
\\
\pi \geq 0
\nonumber
\end{align}
\vspace{-3mm}
\end{small}

From the dual constraints, we have
$
\pi \leq  \frac{1}{\max_{j \in N} r_j w_{jb} } 
$
(recall that $\max_{j \in N} r_j w_{jb}>0$ in this considered case).
Since we aim to maximize the dual objective function, we have

\vspace{-1mm}
\begin{small}
\begin{align*}
\pi =  \frac{1}{\max_{j \in N} w_{jb} r_j} 
\end{align*}
\end{small}
\vspace{-1mm}

\noindent
The optimization problem now becomes

\vspace{-3mm}
\begin{small}
\begin{align*}
 &
 \max_{\substack{\sum x_i \leq k_g \\ x_i \geq 0}} 
  \pi  \sum_{i \in N} r_i ( w_{ig} x_i + w_{ii}^0 v_i^0 ) 
  \\
   &=
   \pi \left( \max_{\substack{\sum x_i \leq k_g \\ x_i \geq 0}} \sum_{i \in N} r_i w_{ig} x_i +  \sum_{i \in N} r_i w_{ii}^0 v_i^0 \right)
   \\
   &= \frac{1}{\max_{j \in N} w_{jb} r_j} \left( k_g \max \left\{ \max_{i \in N} r_i w_{ig} , 0 \right\} +  \sum_{i \in N} r_i w_{ii}^0 v_i^0\right) 
\end{align*}
\end{small}
\vspace{-1mm}

\noindent
The above expression thus gives the total investment made by the bad camp in the case where $(k_g \max\{ \max_{i \in N} r_i w_{ig} , 0 \} +  \sum_{i \in N} r_i w_{ii}^0 v_i^0) > 0$. 

Accounting for the case that it optimal for the bad camp to have $y_i=0, \forall {i \in N}$ when $(k_g \max\{ \max_{i \in N} r_i w_{ig} , 0 \} +  \sum_{i \in N} r_i w_{ii}^0 v_i^0) \leq 0$, the total investment made by the bad camp is

\vspace{-3mm}
\begin{small}
\begin{align*}
\max \left\{ \frac{1}{\max_{j \in N} w_{jb} r_j}  \left( k_g \max\{ \max_{i \in N} r_i w_{ig} , 0 \} +  \sum_i r_i w_{ii}^0 v_i^0\right), 0 \right\}
\end{align*}
\end{small}
\end{proof}

\section{Further Simulation Results}
\label{app:sim}

The following pages present simulation results not included in the main text.
\begin{itemize}
\item
The quantitative results for Karate club and NetHEPT datasets for different values of $w_{ii}^0+w_{ig}+w_{ib}$ are presented in Tables \ref{tab:karate_W0WgWb} and \ref{tab:hep_W0WgWb}, respectively.
\item
The illustrative results for Karate club dataset under additional constraints such as bounded investment per node and common coupled constraints, when $w_{ii}^0+w_{ig}+w_{ib} = 0.1$ and $0.9$ are presented in Figures \ref{fig:CCC_0.1} and \ref{fig:CCC_0.9}, respectively.
\item
The illustrative results for Karate club dataset under concave influence function when $w_{ii}^0+w_{ig}+w_{ib} = 0.1$ and $0.9$ are presented in Figures \ref{fig:concave_0.1} and \ref{fig:concave_0.9}, respectively.
\item
The illustrations for progression of opinion values in Karate club dataset under different settings are presented in Figures \ref{fig:karate_prog_concave_2}-\ref{fig:karate_prog_minmax}.
\end{itemize}

\begin{table*} 
\caption{Results for NetHEPT dataset for different values of $w_{ii}^0+w_{ig}+w_{ib}$}
\label{tab:hep_W0WgWb}
\centering
\begin{tabular}{|c|c|c|c|c|c|c|c|c|}
\hline 
\multicolumn{2}{|c|}{{Setting}} &  \multirow{3}{*}{Section} & \multirow{3}{*}{$k_g$} & \multirow{3}{*}{$k_b$} & \multirow{3}{*}{Value of} & \multicolumn{3}{c|}{\T \B $w_{ii}^0+w_{ig}+w_{ib}$} 
\\ \cline{1-2}\cline{7-9} \T \B
Aspect & Case & &  & & & $0.1$  & $0.5$ & $0.9$
\\ \hline \T \B
\multirow{4}{*}{Fundamental} & \multirow{3}{*}{Unbounded} & \multirow{3}{*}{3.1.1} & 100 & 100 & $\max_{\mathbf{x}} \min_{\mathbf{y}} \sum_{i\in N} v_i$ & $145.9562$ & $73.2539$ 
 & $13.5988$
\\ \cline{4-9} \T \B
&  & & 200 & 100 & $\max_{\mathbf{x}} \min_{\mathbf{y}} \sum_{i\in N} v_i$ & $626.5203$ & $347.0770$ 
 & $131.4902$
\\ \cline{2-9} \T \B
& Bounded & 3.1.2 & 100 & 100 & $\max_{\mathbf{x}} \min_{\mathbf{y}} \sum_{i\in N} v_i$ & $7.6865$ & $2.8513$ 
 & $1.5930$
\\ \hline \T \B
\multirow{3}{*}{Adversary} &
Unbounded & 5.1 & 100 & - & $\max_{\mathbf{x}} \min_{\mathbf{y}} \sum_{i\in N} y_i$ & $143.6201$ & $136.5231$ 
 & $113.0391$
\\ \cline{2-9} \T \B
& Bounded & 5.2 & 100 & - & $\max_{\mathbf{x}} \min_{\mathbf{y}} \sum_{i\in N} y_i$ & $105.1987$ & $102.7266$ 
 & $101.9037$
\\ \hline \T \B
\multirow{6}{*}{Concave ($t=2$)} &
\multirow{3}{*}{Unbounded} &
\multirow{3}{*}{4.1} & \multirow{1}{*}{100} & \multirow{1}{*}{100} &  $\max_{\mathbf{x}} \min_{\mathbf{y}} \sum_{i\in N} v_i$ & $2.8513$ & $-0.8446$ 
 &  $-1.7849$
\\ \cline{4-9} \T \B
&  &  & 400 & 400 &  $\max_{\mathbf{x}} \min_{\mathbf{y}} \sum_{i\in N} v_i$ & $5.7026$ & $-1.6892$ 
&  $-3.5698$
\\ \cline{2-9} \T \B
& \multirow{3}{*}{Bounded} & \multirow{3}{*}{4.2} & \multirow{1}{*}{100} & \multirow{1}{*}{100} & $\max_{\mathbf{x}} \min_{\mathbf{y}} \sum_{i\in N} v_i$ & $2.8513$ & $-0.8446$ 
& $-1.7849$
\\ \cline{4-9} \T \B
&  &  & 400 & 400 &  $\max_{\mathbf{x}} \min_{\mathbf{y}} \sum_{i\in N} v_i$ & $2.3493$ & $-1.7117$ 
 &  $-3.5698$
\\ \hline \T \B
\multirow{6}{*}{Concave ($t=10$)} &
\multirow{3}{*}{Unbounded} &
\multirow{3}{*}{4.1} & \multirow{1}{*}{100} & \multirow{1}{*}{100} &  $\max_{\mathbf{x}} \min_{\mathbf{y}} \sum_{i\in N} v_i$ & $-7.3610$ & $-12.4212$ 
 & $-13.9045$
\\ \cline{4-9} \T \B
&  &  & 400 & 400 &  $\max_{\mathbf{x}} \min_{\mathbf{y}} \sum_{i\in N} v_i$ & $-8.4556$ & $-14.2682$  
 &  $-15.9721$
\\ \cline{2-9} \T \B
& \multirow{3}{*}{Bounded} & \multirow{3}{*}{4.2} & \multirow{1}{*}{100} & \multirow{1}{*}{100} & $\max_{\mathbf{x}} \min_{\mathbf{y}} \sum_{i\in N} v_i$ & $-7.3610$ & $-12.4212$ 
 & $-13.9045$
\\ \cline{4-9} \T \B
&  &  & 400 & 400 &  $\max_{\mathbf{x}} \min_{\mathbf{y}} \sum_{i\in N} v_i$ & $-8.4556$ & $-14.2682$ 
 &  $-15.9721$
 \\ \hline \T \B
 \multirow{3}{*}{CCC} & \multirow{3}{*}{Bounded} &
\multirow{3}{*}{6} & \multirow{3}{*}{100} & \multirow{3}{*}{100} & $\max_{\mathbf{x}} \min_{\mathbf{y}} \sum_{i\in N} v_i$ & $28.5077$ & $7.4843$ 
& $1.5930$
\\  \T \B
 &  &  & & & $\min_{\mathbf{y}} \max_{\mathbf{x}} \sum_{i\in N} v_i$ & $-19.2724$ & $-3.6795$ 
 & $1.5930$
\\ \hline
\end{tabular}
\end{table*}

\begin{figure*}
\hspace{-7mm}
\begin{tabular}{|c||c||c|}
\hline &~ & \vspace{-3mm} \\
\includegraphics[width=0.33\textwidth]{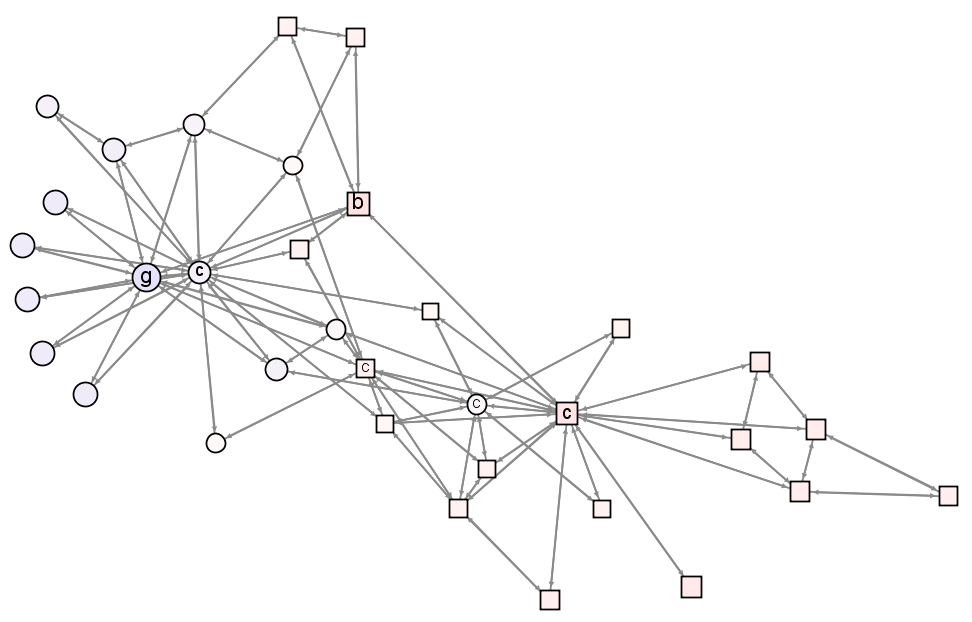}
&
\includegraphics[width=0.33\textwidth]{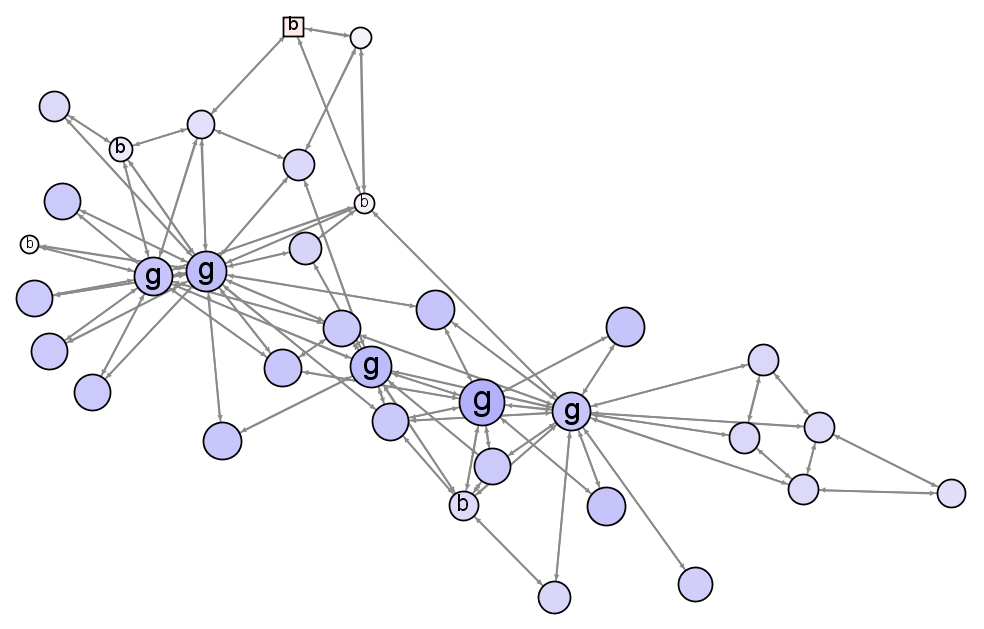}
&
\includegraphics[width=0.33\textwidth]{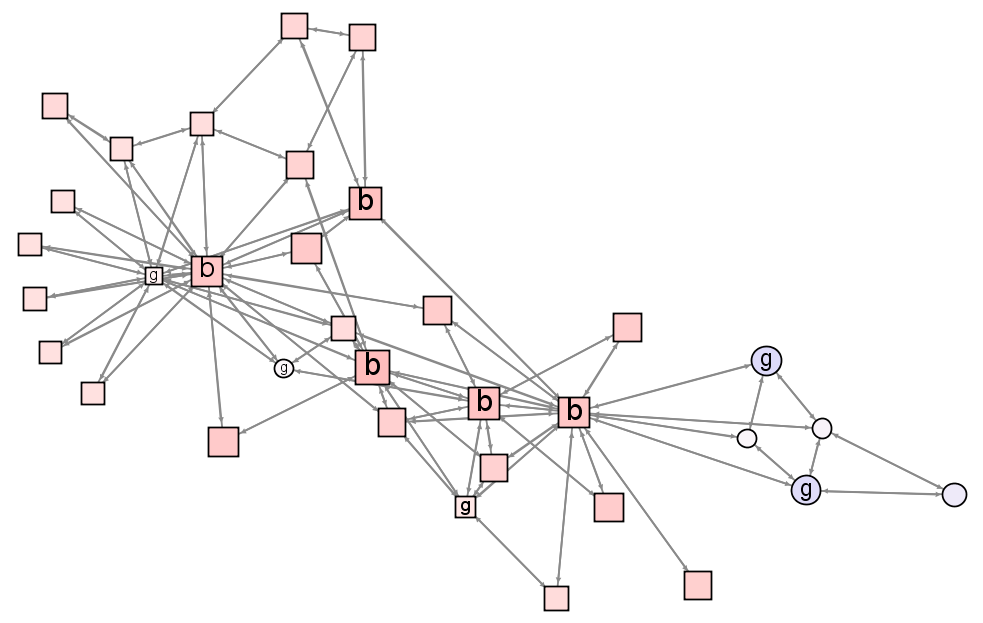}
\\ & & \vspace{-3mm} \\
(a) Bound on investment per node 
&
(b) Common coupled constraints 
&
(c) Common coupled constraints 
\\
 $x_i,y_i\leq 1, \forall i$ 
 &
$x_i+y_i \leq 1, \forall i$ 
&
$x_i+y_i \leq 1, \forall i$ 
\\ 
($\max_{\mathbf{x}} \min_{\mathbf{y}} \sum_i v_i = -0.0678$)
&
($\max_{\mathbf{x}} \min_{\mathbf{y}} \sum_i v_i = 2.8105$)
&
($\min_{\mathbf{y}} \max_{\mathbf{x}} \sum_i v_i = -1.8655$)
\\
\hline 
\end{tabular}
\caption{Results in presence of additional constraints  for the Karate club dataset with $k_g=k_b=5$ ; The nodes are labeled `g/b/c' to signify if invested on by good/bad/both camps respectively. The sign of the opinion value of a node is signified by its shape and color (circle and blue for good, square and red for bad), while the absolute value of its opinion is signified by its size and color saturation. ($w_{ii}^0+w_{ig}+w_{ib}=0.1$)}
\label{fig:CCC_0.1}
\end{figure*}

\begin{figure*}
\hspace{-7mm}
\begin{tabular}{|c||c||c|}
\hline &~ & \vspace{-3mm} \\
\includegraphics[width=0.33\textwidth]{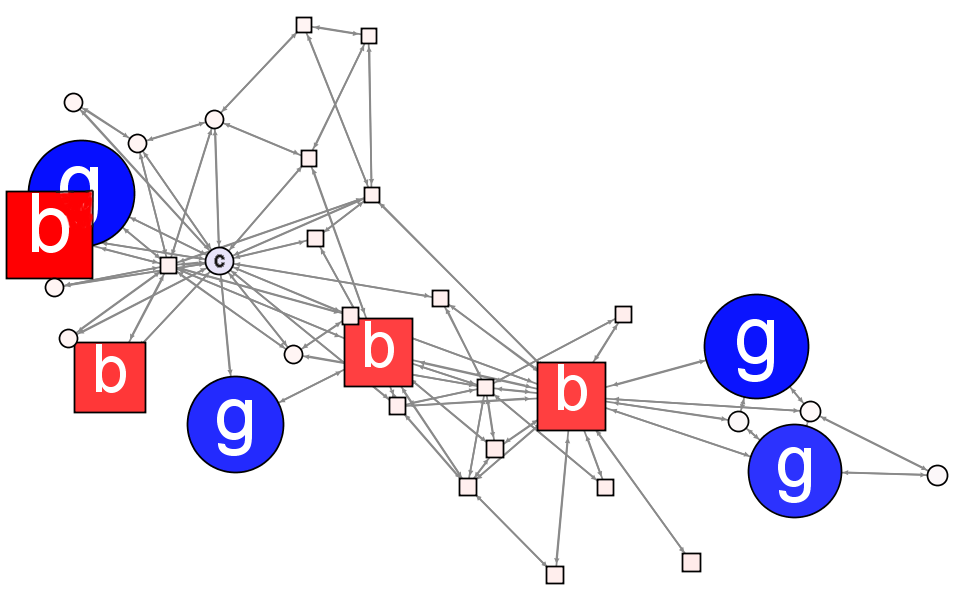}
&
\includegraphics[width=0.33\textwidth]{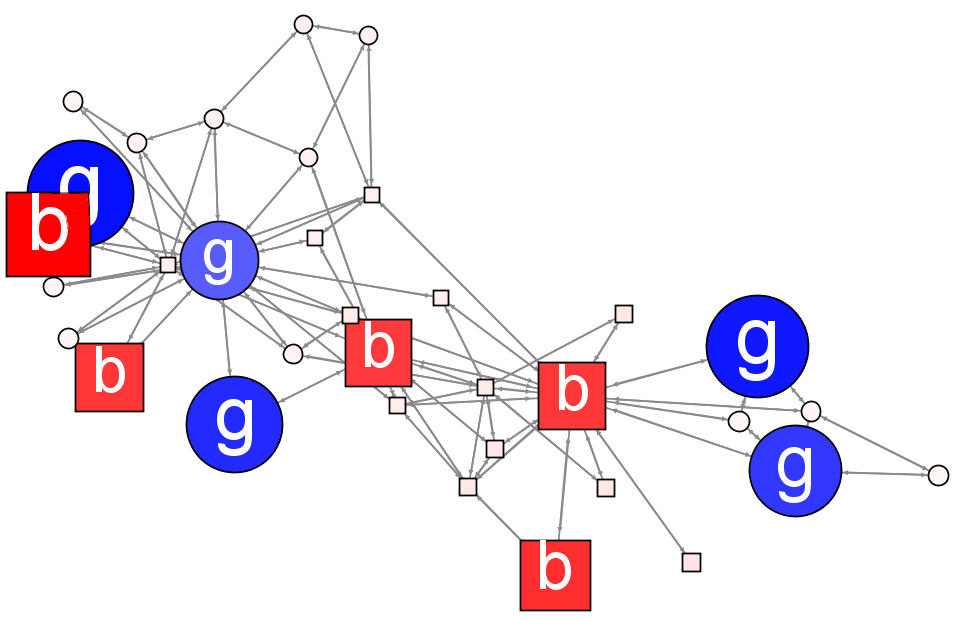}
&
\includegraphics[width=0.33\textwidth]{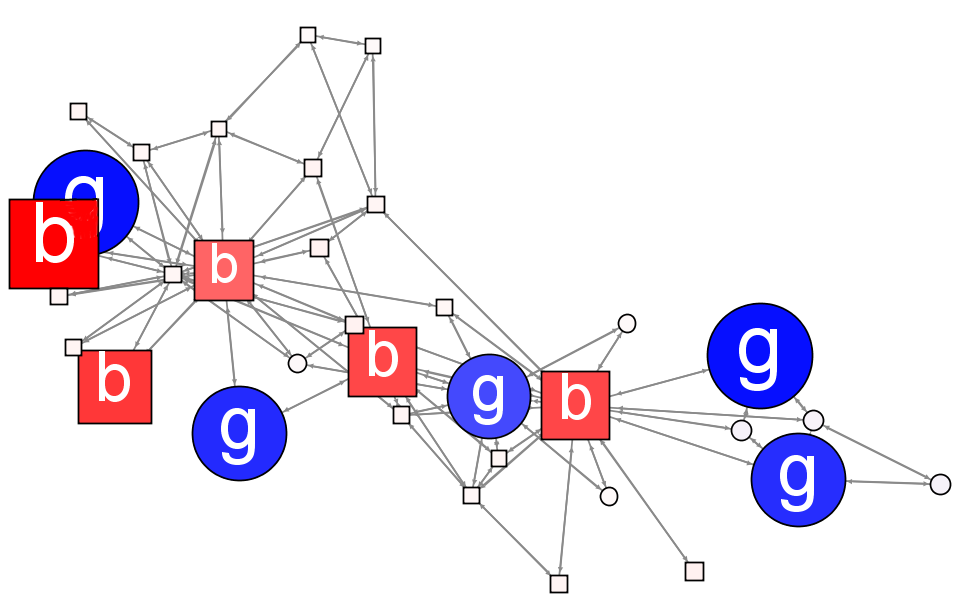}
\\ & & \vspace{-3mm} \\
(a) Bound on investment per node 
&
(b) Common coupled constraints 
&
(c) Common coupled constraints 
\\
 $x_i,y_i\leq 1, \forall i$ 
 &
$x_i+y_i \leq 1, \forall i$ 
&
$x_i+y_i \leq 1, \forall i$ 
\\ 
($\max_{\mathbf{x}} \min_{\mathbf{y}} \sum_i v_i = 0.3309$)
&
($\max_{\mathbf{x}} \min_{\mathbf{y}} \sum_i v_i = 0.3951$)
&
($\min_{\mathbf{y}} \max_{\mathbf{x}} \sum_i v_i = 0.2581$)
\\
\hline 
\end{tabular}
\caption{Results in presence of additional constraints  for the Karate club dataset with $k_g=k_b=5$ ; The nodes are labeled `g/b/c' to signify if invested on by good/bad/both camps respectively. The sign of the opinion value of a node is signified by its shape and color (circle and blue for good, square and red for bad), while the absolute value of its opinion is signified by its size and color saturation. ($w_{ii}^0+w_{ig}+w_{ib}=0.9$)}
\label{fig:CCC_0.9}
\end{figure*}

\begin{figure*}
\hspace{-7mm}
\begin{tabular}{|c||c||c|}
\hline &~ &~ \vspace{-4mm} \\
\includegraphics[width=0.32\textwidth]{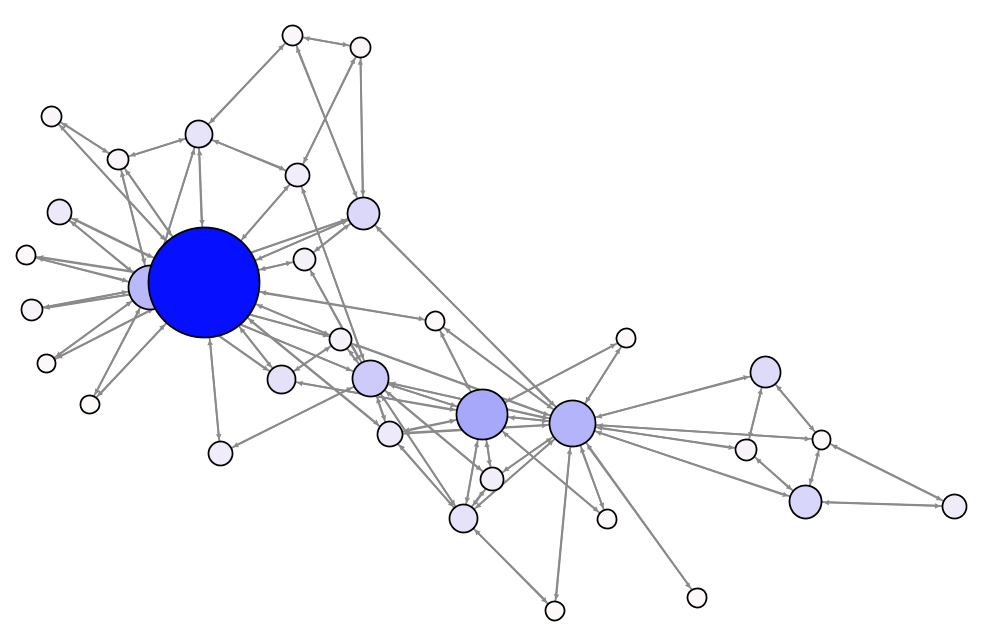} 
&
\includegraphics[width=0.32\textwidth]{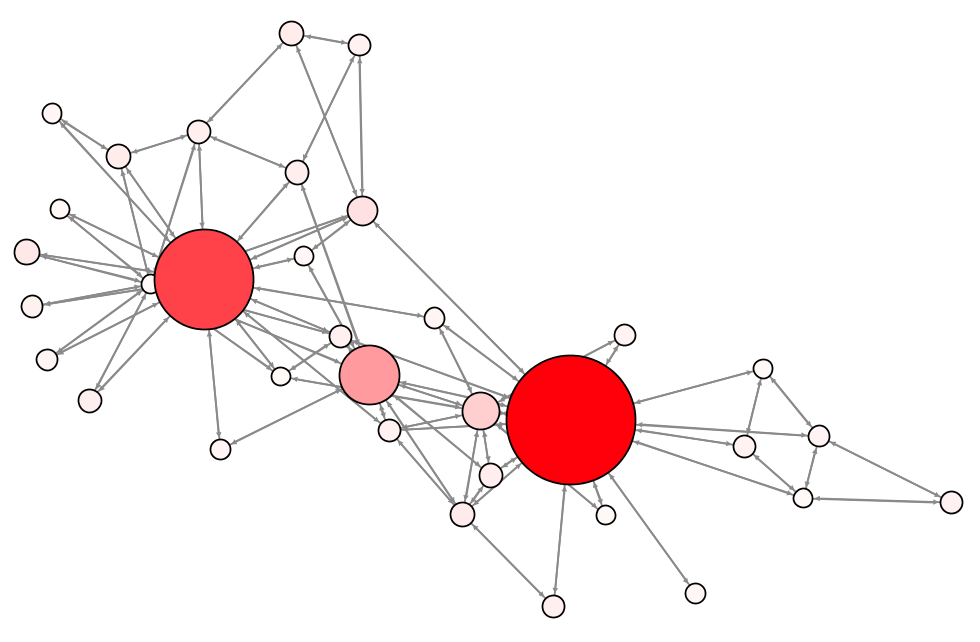}
&
\includegraphics[width=0.32\textwidth]{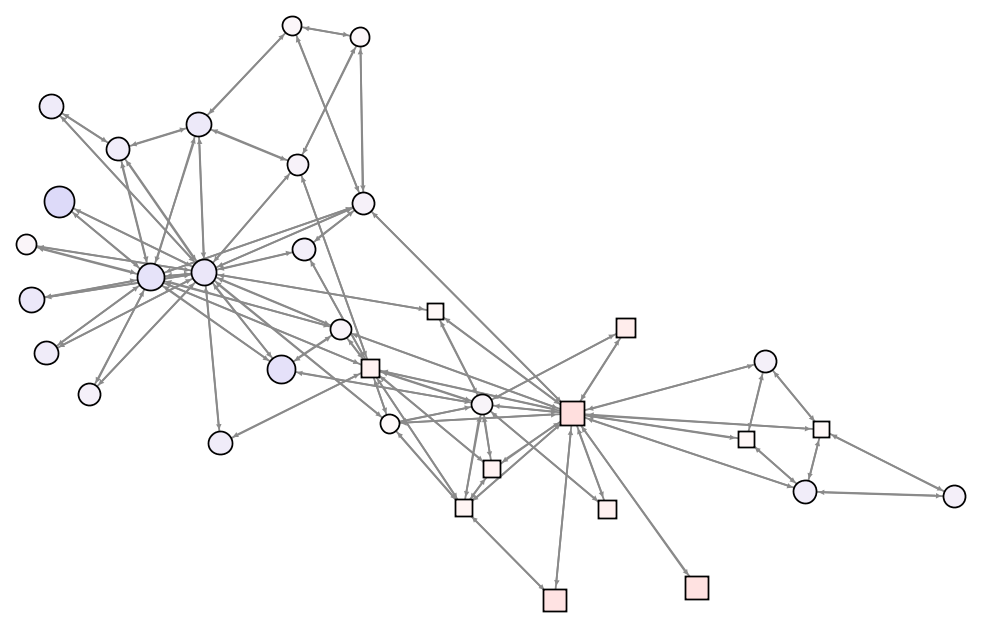} 
\\
&
&
(c) Final opinion values of nodes signified 
\\
(a) Investment made by  good camp 
&
(b) Investment made by  bad camp 
&
$ \! \! \! $ by  shapes, sizes,  color saturations for $t=2$ $ \! \! \! $
\\ 
on nodes signified by their sizes
&
on nodes signified by their sizes 
&
  (Circular blue nodes: positive opinions, 
\\
 and color saturations for $t=2$
 &
  and color saturations for $t=2$
  &
    square red nodes: negative opinions)
  \\
\hline
\hline &~ & \vspace{-4mm} \\
\includegraphics[width=0.32\textwidth]{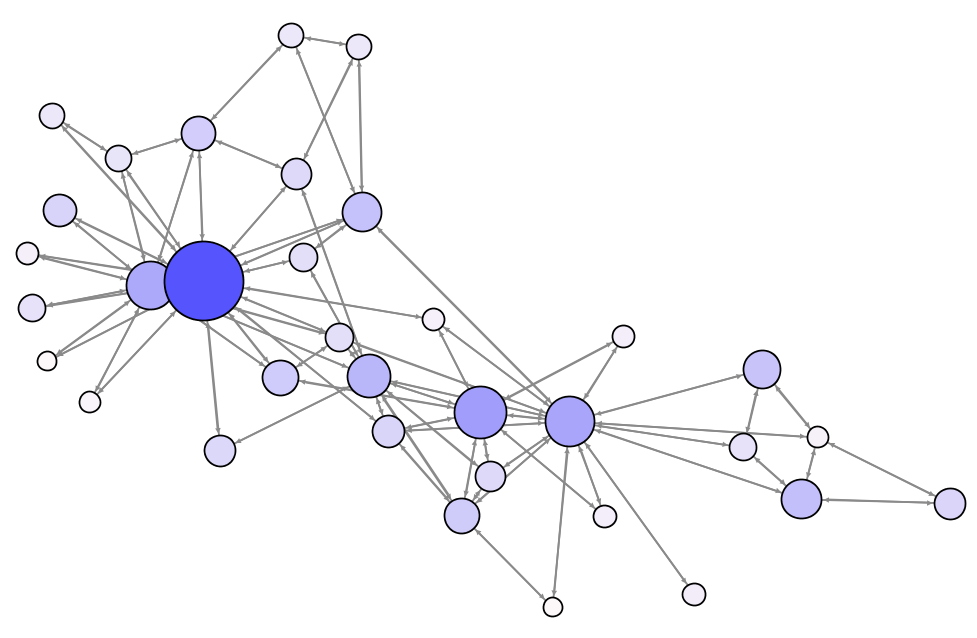} 
&
\includegraphics[width=0.32\textwidth]{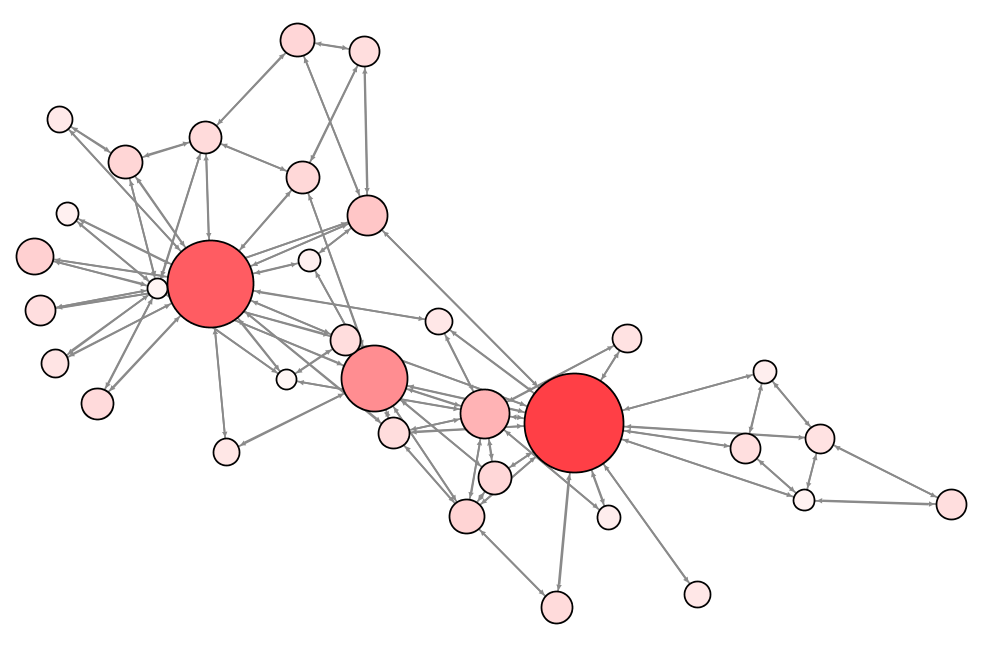}
&
\includegraphics[width=0.32\textwidth]{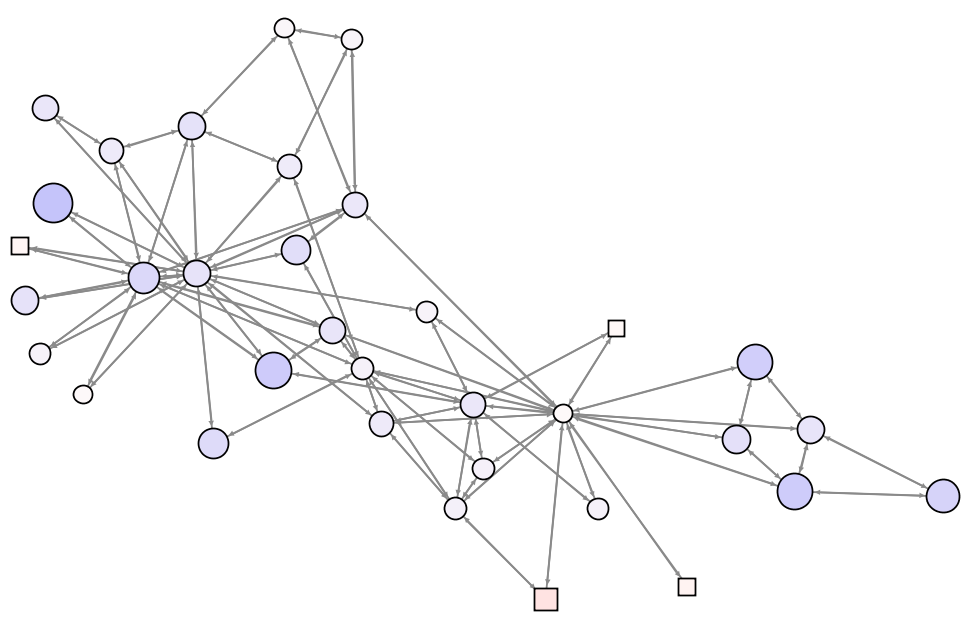}
\\
&&
(f) Final opinion values of nodes signified
\\
(d) Investment made by  good camp 
&
(e) Investment made by  bad camp 
&
$ \! \! \! $ by shapes, sizes, color saturations for $t=10$ $ \! \! \! $
\\ 
on nodes signified by their sizes 
&
on nodes signified by their sizes 
&
  (Circular blue nodes: positive opinions, 
\\
and color saturations for $t=10$
&
        and color saturations for $t=10$
        &
            square red nodes: negative opinions)
        \\
\hline
\end{tabular}
\caption{Simulation results for the Karate club dataset with $k_g=k_b=5$ when the influence function is concave 
($w_{ii}^0+w_{ig}+w_{ib}=0.1$)
}
\label{fig:concave_0.1}
\vspace{-1mm}
\end{figure*}

\begin{figure*}
\hspace{-7mm}
\begin{tabular}{|c||c||c|}
\hline &~ &~ \vspace{-4mm} \\
\includegraphics[width=0.32\textwidth]{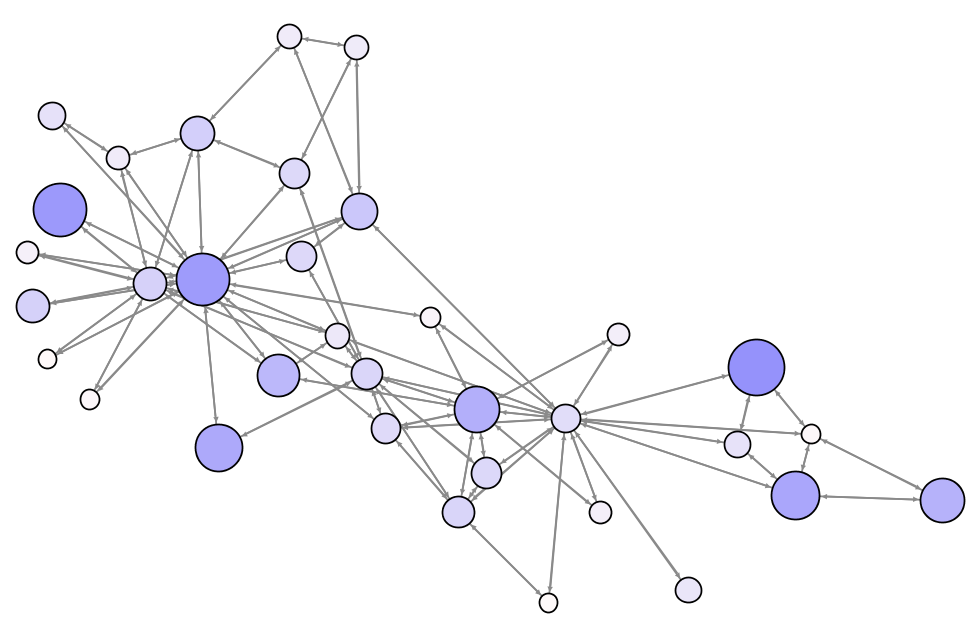} 
&
\includegraphics[width=0.32\textwidth]{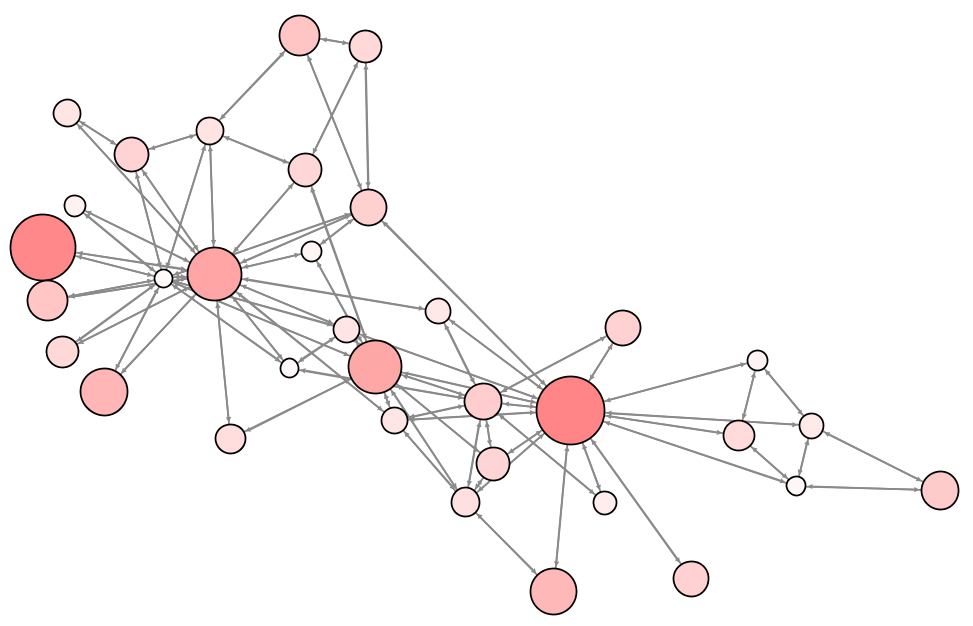}
&
\includegraphics[width=0.32\textwidth]{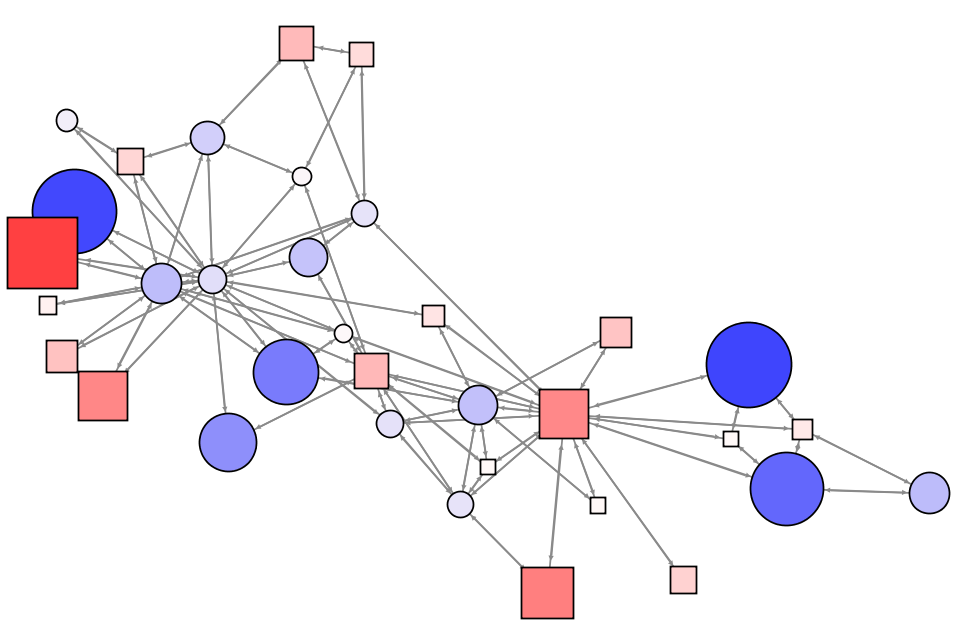} 
\\
&
&
(c) Final opinion values of nodes signified 
\\
(a) Investment made by  good camp 
&
(b) Investment made by  bad camp 
&
$ \! \! \! $ by  shapes, sizes,  color saturations for $t=2$ $ \! \! \! $
\\ 
on nodes signified by their sizes
&
on nodes signified by their sizes 
&
  (Circular blue nodes: positive opinions, 
\\
 and color saturations for $t=2$
 &
  and color saturations for $t=2$
  &
    square red nodes: negative opinions)
  \\
\hline
\hline &~ & \vspace{-4mm} \\
\includegraphics[width=0.32\textwidth]{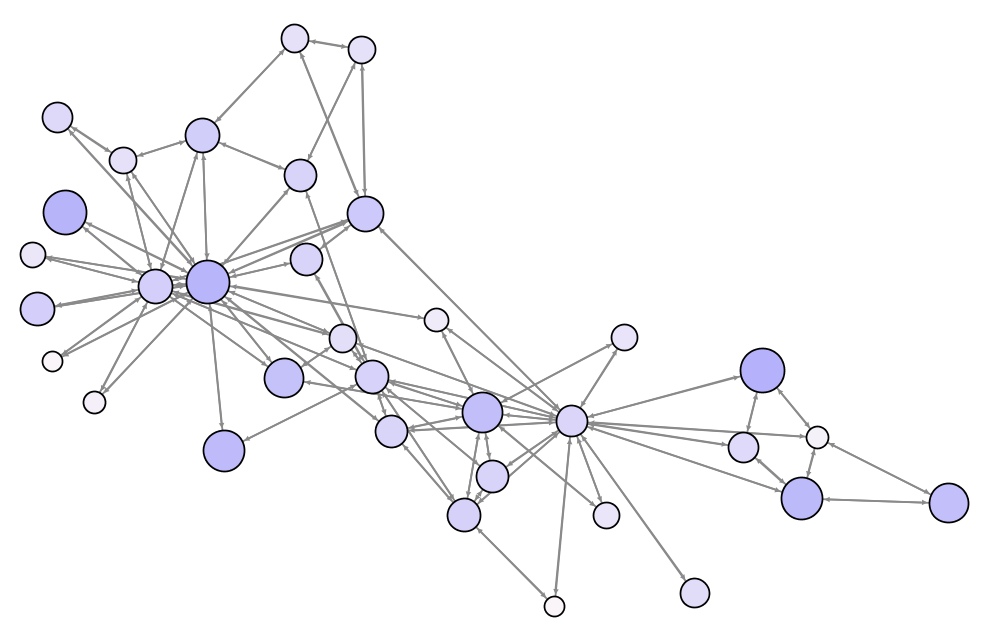} 
&
\includegraphics[width=0.32\textwidth]{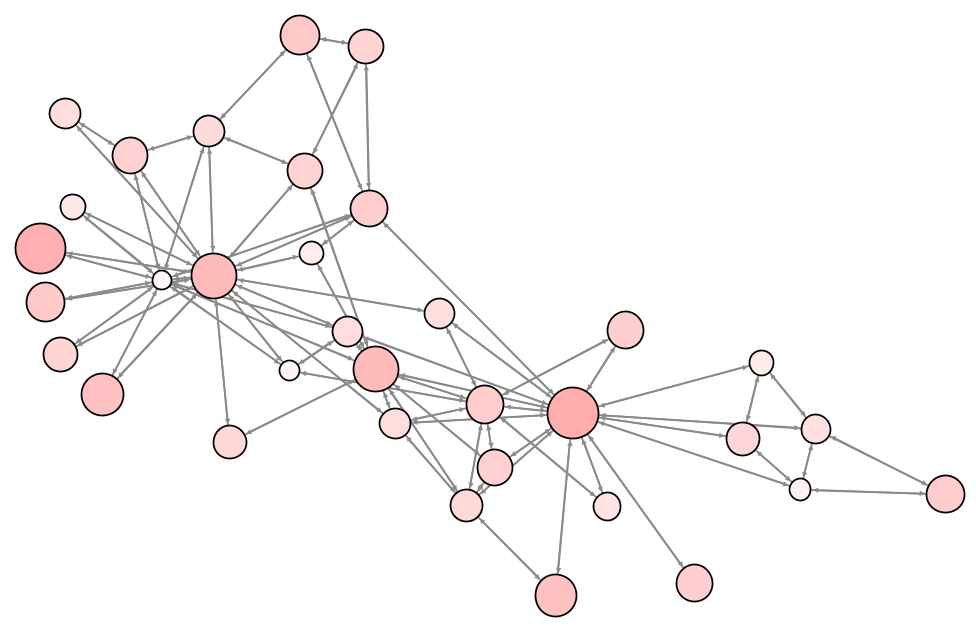}
&
\includegraphics[width=0.32\textwidth]{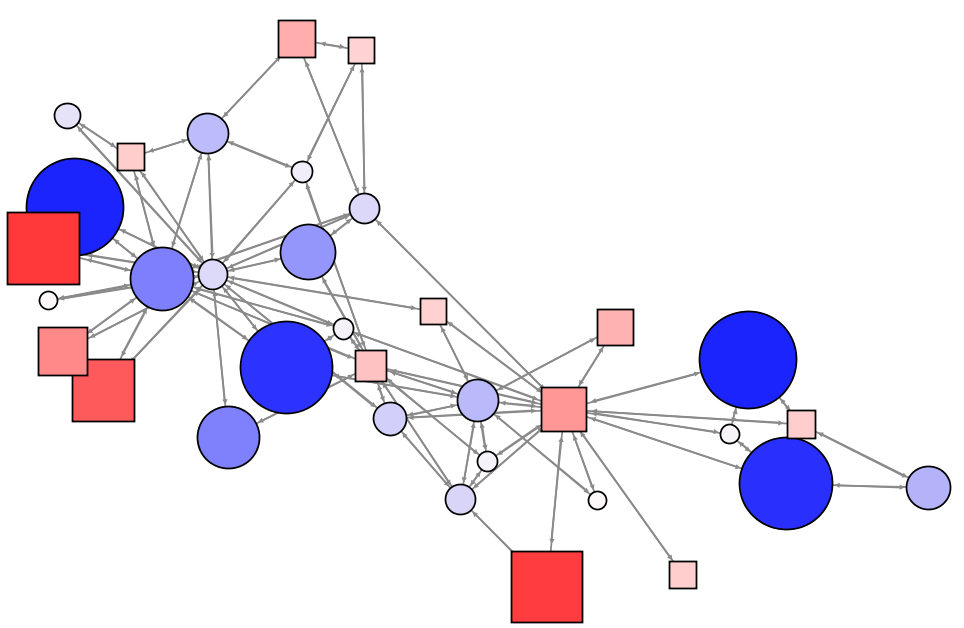}
\\
&&
(f) Final opinion values of nodes signified
\\
(d) Investment made by  good camp 
&
(e) Investment made by  bad camp 
&
$ \! \! \! $ by shapes, sizes, color saturations for $t=10$ $ \! \! \! $
\\ 
on nodes signified by their sizes 
&
on nodes signified by their sizes 
&
  (Circular blue nodes: positive opinions, 
\\
and color saturations for $t=10$
&
        and color saturations for $t=10$
        &
            square red nodes: negative opinions)
        \\
\hline
\end{tabular}
\caption{Simulation results for the Karate club dataset with $k_g=k_b=5$ when the influence function is concave 
($w_{ii}^0+w_{ig}+w_{ib}=0.9$)
}
\label{fig:concave_0.9}
\end{figure*}

\begin{figure*} 
\hspace{-7mm}
\begin{tabular}{|c||c||c|}
\hline 
 & & \\
\includegraphics[width=0.33\textwidth]{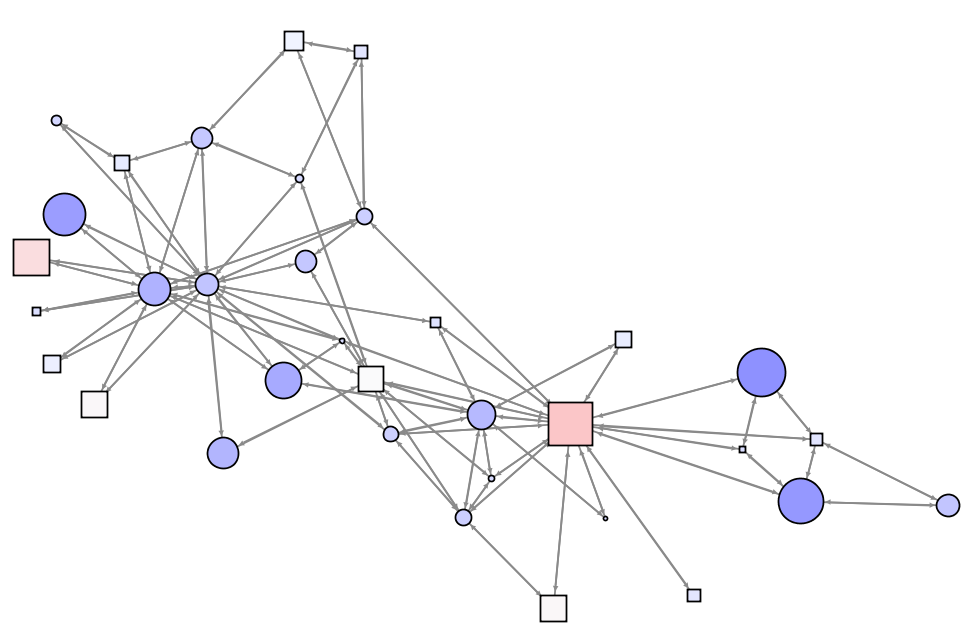} 
&
\includegraphics[width=0.33\textwidth]{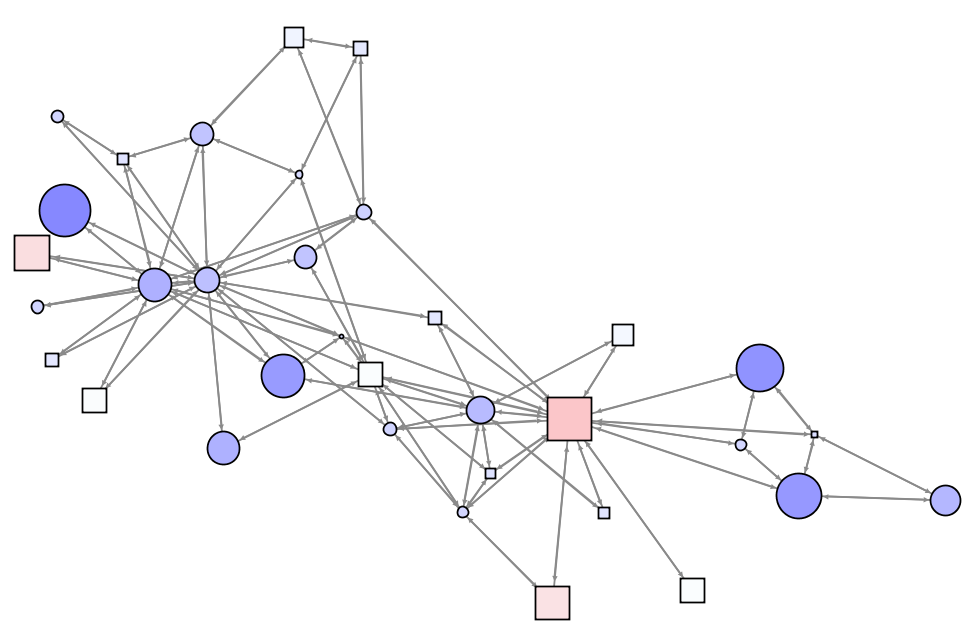}
&
\includegraphics[width=0.33\textwidth]{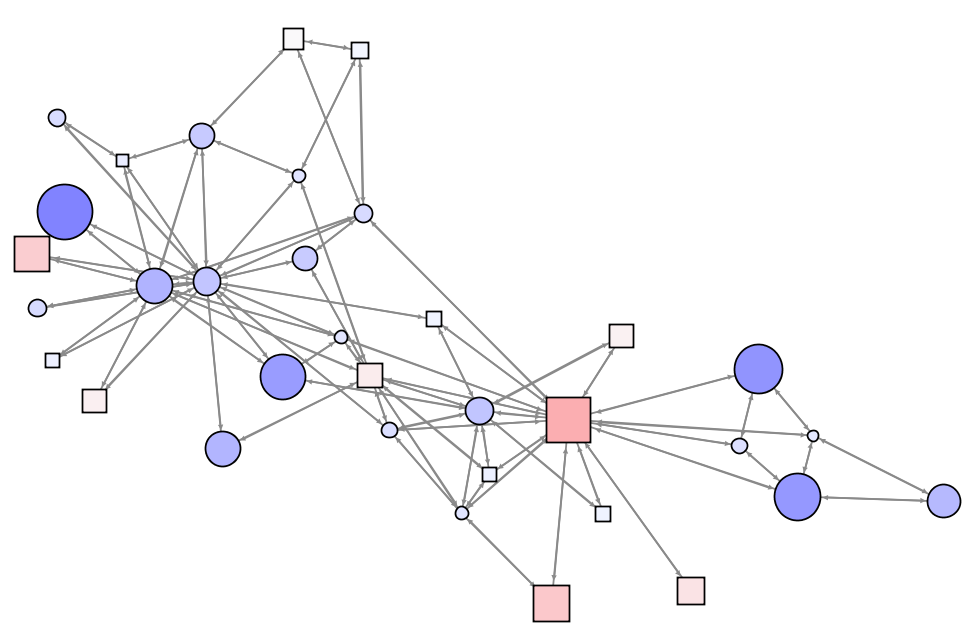} 
\\ & & \\ 
(a)
$\tau=1$ ($\max_{\mathbf{x}} \! \min_{\mathbf{y}} \! \sum_{i} v_i^{\langle \tau \rangle} \!=\! 0.3037$) 
&
(b)
$\tau=2$ ($\max_{\mathbf{x}} \! \min_{\mathbf{y}} \! \sum_{i} v_i^{\langle \tau \rangle} \!=\! 0.3681$)
&
(d)
$\tau=4$ ($\max_{\mathbf{x}} \! \min_{\mathbf{y}} \! \sum_{i} v_i^{\langle \tau \rangle} \!=\! 0.4353$)
\\
\hline
\end{tabular}
\caption{Progression of opinion values for the Karate club dataset with $k_g=k_b=5$ when the influence function is concave ($t=2$)
}
\label{fig:karate_prog_concave_2}
\vspace{-3mm}
\end{figure*}

\begin{figure*} 
\hspace{-7mm}
\begin{tabular}{|c||c||c|}
\hline 
 & & \\
\includegraphics[width=0.33\textwidth]{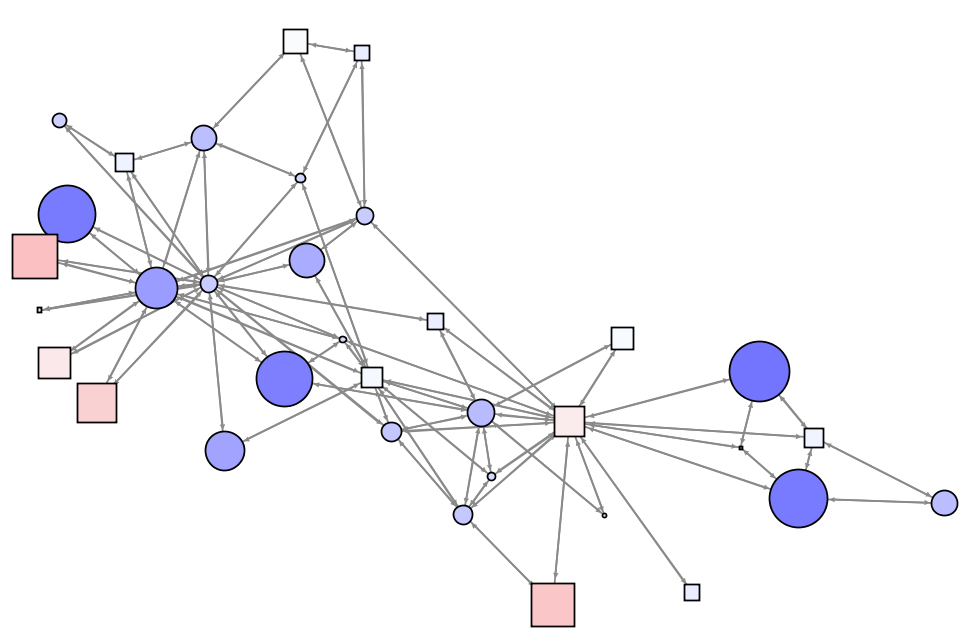} 
&
\includegraphics[width=0.33\textwidth]{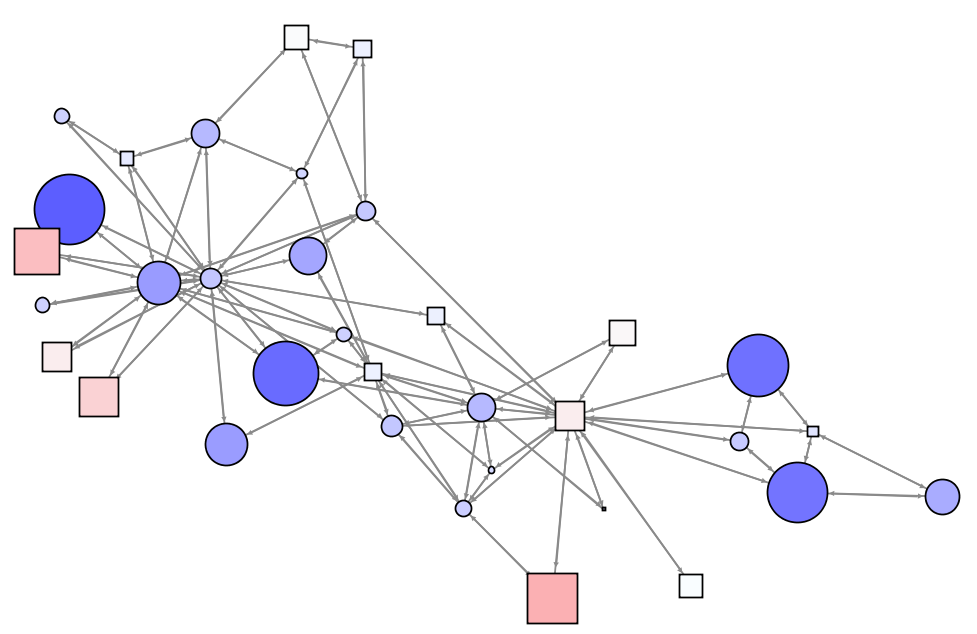}
&
\includegraphics[width=0.33\textwidth]{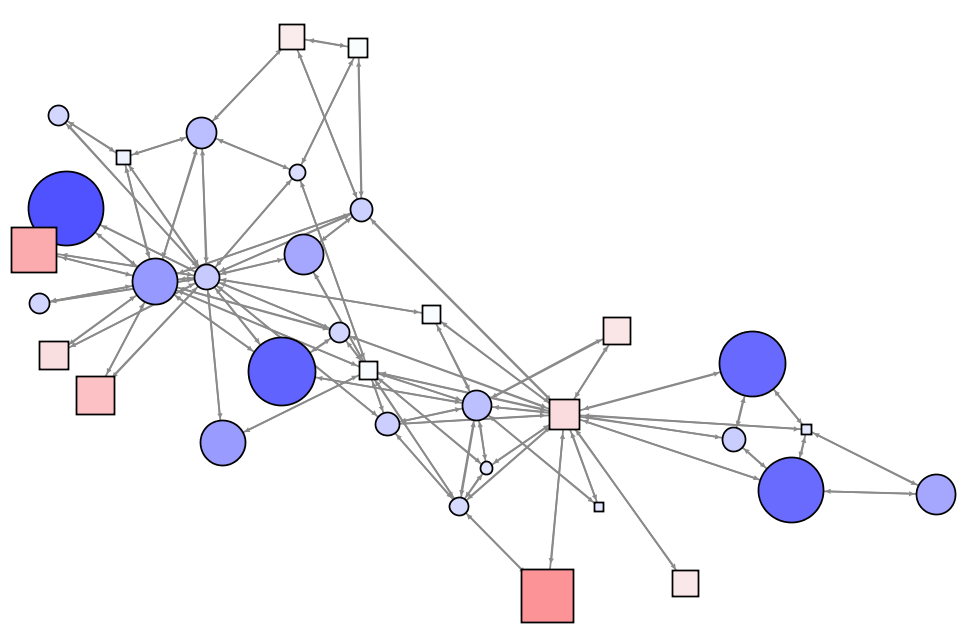} 
\\ & & \\ 
(a)
$\tau=1$ ($\max_{\mathbf{x}} \! \min_{\mathbf{y}} \! \sum_{i} v_i^{\langle \tau \rangle} \!=\! 0.4956$)
&
(b)
$\tau=2$ ($\max_{\mathbf{x}} \! \min_{\mathbf{y}} \! \sum_{i} v_i^{\langle \tau \rangle} \!=\! 0.7986$)
&
(d)
$\tau=4$ ($\max_{\mathbf{x}} \! \min_{\mathbf{y}} \! \sum_{i} v_i^{\langle \tau \rangle} \!=\! 1.0362$)
\\
\hline
\end{tabular}
\caption{Progression of opinion values for the Karate club dataset with $k_g=k_b=5$ when the influence function is concave ($t=10$)
}
\label{fig:karate_prog_concave_10}
\vspace{-3mm}
\end{figure*}

\begin{figure*} 
\hspace{-7mm}
\begin{tabular}{|c||c||c|}
\hline 
 & & \\
\includegraphics[width=0.33\textwidth]{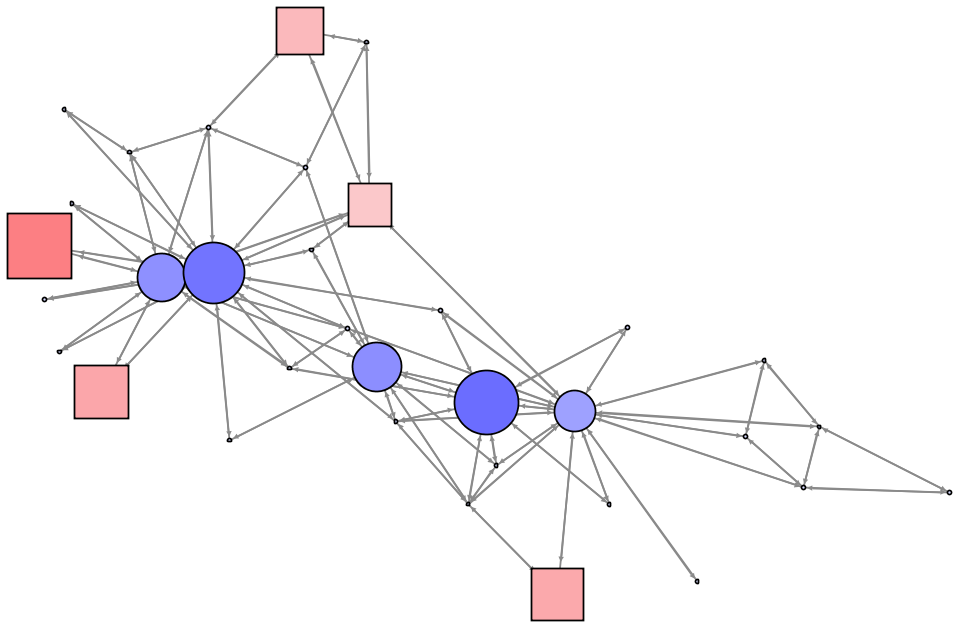} 
&
\includegraphics[width=0.33\textwidth]{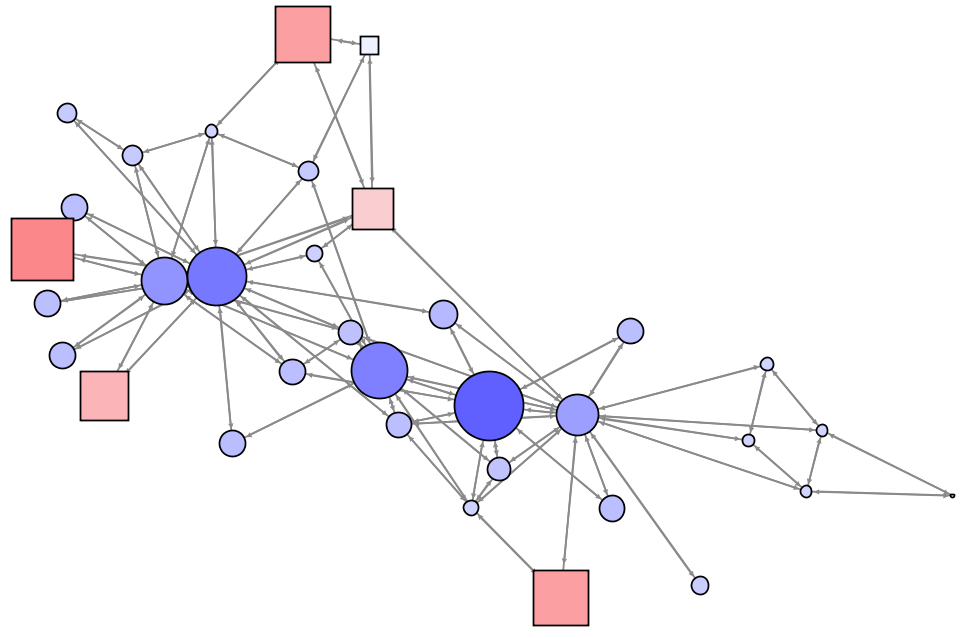}
&
\includegraphics[width=0.33\textwidth]{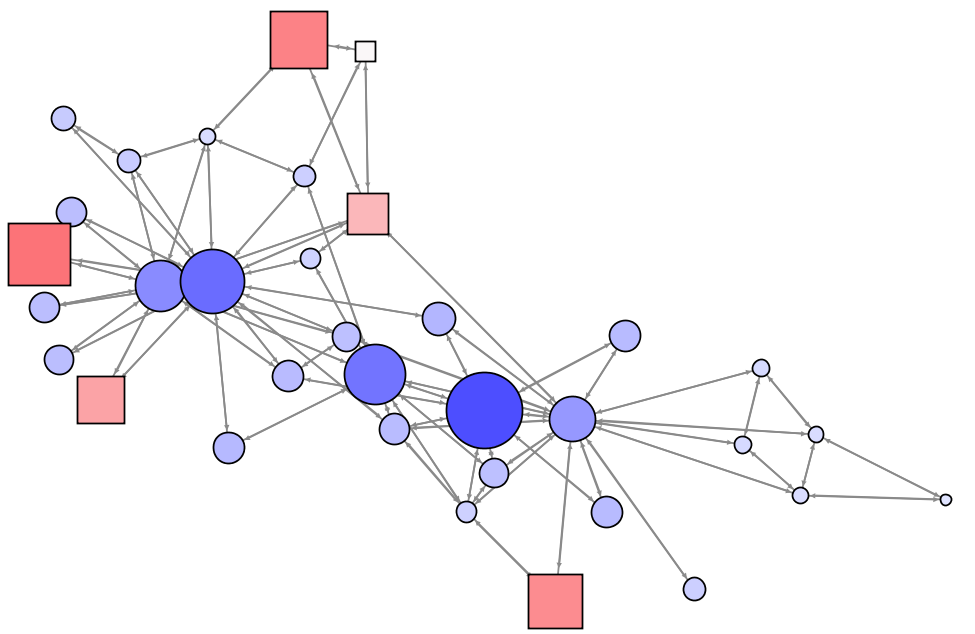} 
\\ & & \\ 
(a)
$\tau=1$ ($\max_{\mathbf{x}} \! \min_{\mathbf{y}} \! \sum_{i} v_i^{\langle \tau \rangle} \!=\! -0.2375$)
&
(b)
$\tau=2$ ($\max_{\mathbf{x}} \! \min_{\mathbf{y}} \! \sum_{i} v_i^{\langle \tau \rangle} \!=\! 0.8117$)
&
(d)
$\tau=4$ ($\max_{\mathbf{x}} \! \min_{\mathbf{y}} \! \sum_{i} v_i^{\langle \tau \rangle} \!=\! 1.3420$)
\\
\hline
\end{tabular}
\caption{Progression of opinion values for the Karate club dataset with $k_g=k_b=5$ under common coupled constraints (maxmin value)
}
\label{fig:karate_prog_maxmin}
\vspace{-3mm}
\end{figure*}

\begin{figure*} 
\hspace{-7mm}
\begin{tabular}{|c||c||c|}
\hline 
 & & \\
\includegraphics[width=0.33\textwidth]{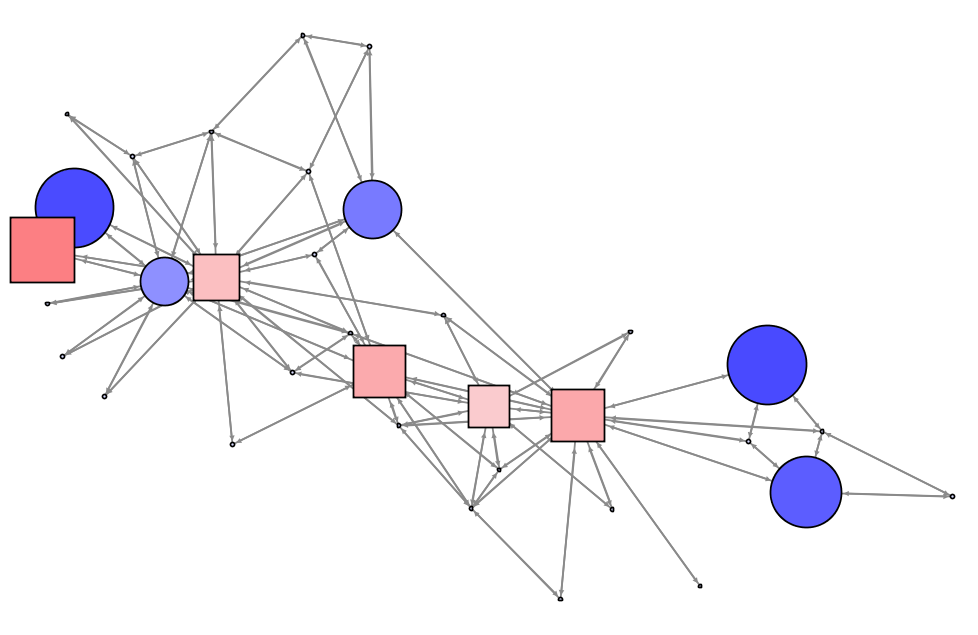} 
&
\includegraphics[width=0.33\textwidth]{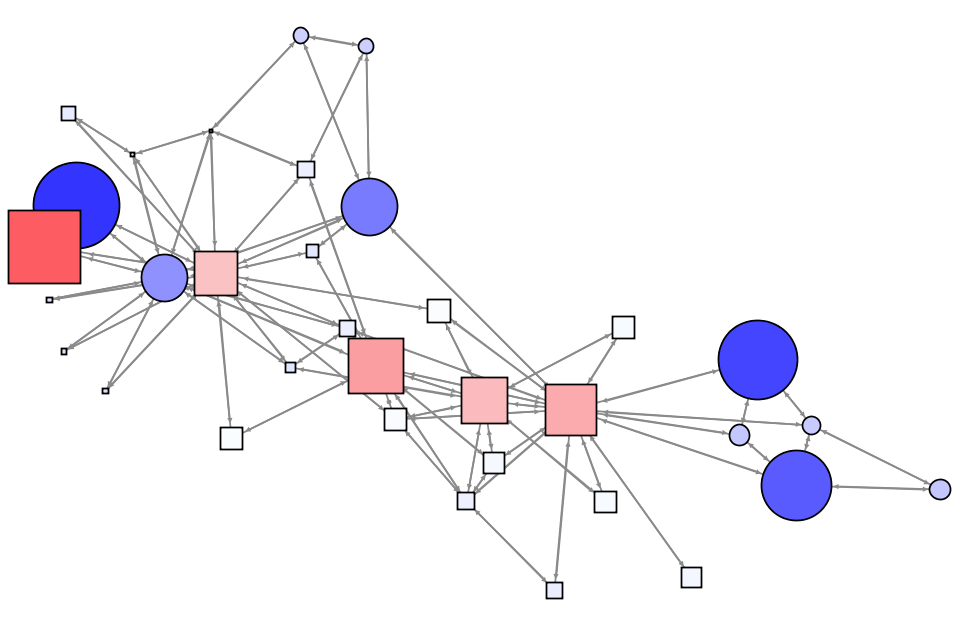}
&
\includegraphics[width=0.33\textwidth]{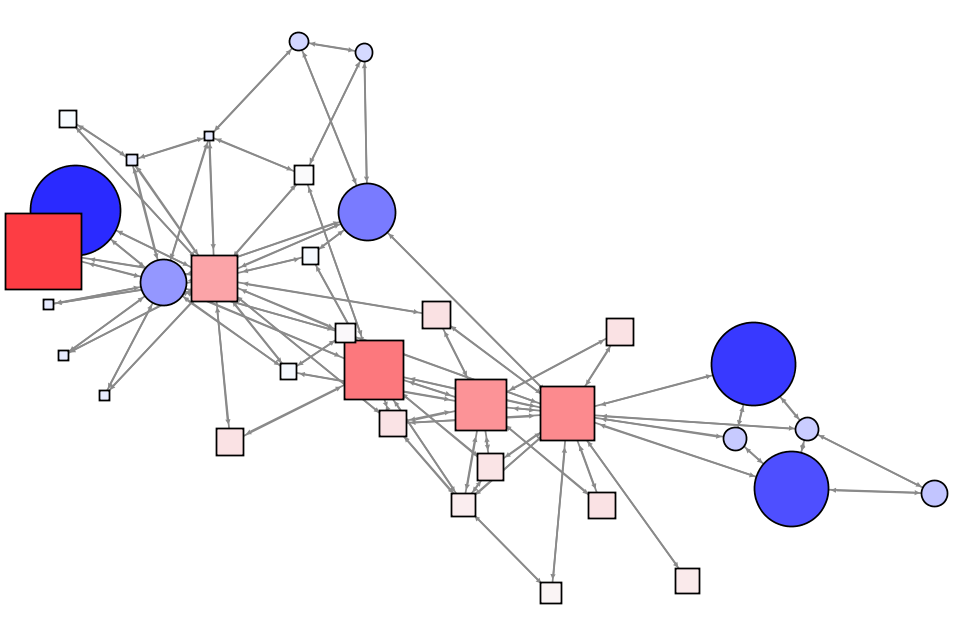} 
\\ & & \\ 
(a)
$\tau=1$ ($\min_{\mathbf{y}} \! \max_{\mathbf{x}} \! \sum_{i} v_i^{\langle \tau \rangle} \!=\! 0.1775$)
&
(b)
$\tau=2$ ($\min_{\mathbf{y}} \! \max_{\mathbf{x}} \! \sum_{i} v_i^{\langle \tau \rangle} \!=\! -0.4655$)
&
(d)
$\tau=4$ ($\min_{\mathbf{y}} \! \max_{\mathbf{x}} \! \sum_{i} v_i^{\langle \tau \rangle} \!=\! -0.7709$)
\\
\hline
\end{tabular}
\caption{Progression of opinion values for the Karate club dataset with $k_g=k_b=5$ under common coupled constraints (minmax value)
}
\label{fig:karate_prog_minmax}
\vspace{-3mm}
\end{figure*}

\end{document}